\documentclass[a4paper, 12pt, twoside]{article}

\usepackage[a4paper, margin=1in]{geometry}
\geometry{hscale=0.7,vscale=0.7,centering}
\usepackage[latin1]{inputenc}
\usepackage[T1]{fontenc}
\usepackage[english]{babel}
\usepackage{textcomp} 
\usepackage{graphicx}
\DeclareGraphicsExtensions{.jpg,.mps,.pdf,.png,.gif}
\usepackage[french]{varioref} 
\usepackage[]{amsmath,amssymb,amsfonts}

\usepackage{float}
\usepackage{bbm}
\usepackage{amsthm}
\usepackage{dsfont}
\usepackage{color}
\usepackage{natbib}
\usepackage{pict2e}
\setlength{\unitlength}{1mm}

\usepackage[hyperindex,breaklinks]{hyperref}

\newcommand{\Ben}{\begin{enumerate}}
\newcommand{\Een}{\end{enumerate}}
\newcommand{\Bit}{\begin{itemize}}
\newcommand{\Eit}{\end{itemize}}
\newcommand{\Beq}{\begin{equation}}
\newcommand{\Eeq}{\end{equation}}
\newcommand{\Ba}{\begin{align*}}
\newcommand{\Ea}{\end{align*}}
\newcommand{\Mb}{\mathbf}

\newtheorem{Th}{Theorem}

\newtheorem{Prop}{Proposition}

\newtheorem{Rq}{Remark} 
\newtheorem{Corr}{Corollary}
\newtheorem{Def}{Definition}

\title{Spatial risk measures and rate of spatial diversification}
\date{June 28, 2019}
\begin{document}
\author{
Erwan Koch\footnote{EPFL, Chair of Statistics STAT, EPFL-SB-MATH-STAT, MA B1 433 (B\^atiment MA), Station 8,
1015 Lausanne, Switzerland. Email: erwan.koch@epfl.ch} 
}
\maketitle

\begin{abstract}
An accurate assessment of the risk of extreme environmental events is of great importance for populations, authorities and the banking/insurance/reinsurance industry. \cite{koch2017spatial} introduced a notion of spatial risk measure and a corresponding set of axioms which are well suited to analyze the risk due to events having a spatial extent, precisely such as environmental phenomena. The axiom of asymptotic spatial homogeneity is of particular interest since it allows one to quantify the rate of spatial diversification when the region under consideration becomes large. In this paper, we first investigate the general concepts of spatial risk measures and corresponding axioms further and thoroughly explain the usefulness of this theory for both actuarial science and practice. Second, in the case of a general cost field, we give sufficient conditions such that spatial risk measures associated with expectation, variance, Value-at-Risk as well as expected shortfall and induced by this cost field satisfy the axioms of asymptotic spatial homogeneity of order $0$, $-2$, $-1$ and $-1$, respectively. Last but not least, in the case where the cost field is a function of a max-stable random field, we provide conditions on both the function and the max-stable field ensuring the latter properties. Max-stable random fields are relevant when assessing the risk of extreme events since they appear as a natural extension of multivariate extreme-value theory to the level of random fields. Overall, this paper improves our understanding of spatial risk measures as well as of their properties with respect to the space variable and generalizes many results obtained in \cite{koch2017spatial}.

\medskip

\noindent \textbf{Key words}: Central limit theorem; Insurance; Max-stable random fields; Rate of spatial diversification; Reinsurance; Risk management; Risk theory; Spatial dependence; Spatial risk measures and corresponding axiomatic approach.
\end{abstract}

\section{Introduction}

Hurricane Irma, which affected many Caribbean islands and parts of Florida in September 2017 caused at least 134 deaths and
catastrophic damage exceeding 64.8 billion USD in value. Such an example shows the prime importance for civil authorities and for the insurance\footnote{Throughout the paper, insurance also refers to reinsurance.} industry of the accurate assessment of the risk of natural disasters, particularly as, in a climate change context, certain types of extreme events become more and more frequent \citep[e.g.,][]{SwissRe}.

Motivated by the spatial feature of natural disasters, \cite{koch2017spatial} introduced a new notion of spatial risk measure, which makes explicit the contribution of the space and enables one to account for at least part of the spatial dependence in the risk measurement. He also introduced a set of axioms describing how the risk is expected to evolve with respect to the space variable, at least under some conditions. 
These notions constitute relevant tools for risk assessment. For instance, the knowledge of the order of asymptotic spatial homogeneity allows the quantification of the rate of spatial diversification. Hence, they may be appealing for the banking/insurance industry. It should be highlighted that the literature about risk measures in a spatial context is very limited. To the best of our knowledge, the paper by \cite{koch2017spatial} constitutes the first attempt to establish a theory about risk measures in a spatial context where the risks spread over a continuous geographical region. 

In the following, the spatial risk measure associated with a classical risk measure $\Pi$ and induced by a cost random field $C$ (e.g., modelling the cost due to damage caused by a natural disaster) consists in the function of space arising from the application of $\Pi$ to the normalized integral of $C$ on various geographical areas. The contribution of this paper is threefold. First, we further explore the notions of spatial risk measure and corresponding axioms introduced in \cite{koch2017spatial}. Among others, we show that, for a given region, the distribution of the normalized spatially aggregated loss is entirely determined by the finite-dimensional distributions of the cost field, and propose alternative definitions of the concepts developed in \cite{koch2017spatial}. Additionally, we deeply explain why this whole theory about spatial risk measures is fruitful for both actuarial science and practice; e.g., we show that considering the risk related to the normalized loss does not prevent our theory from being successful for the study of the risk related to the non-normalized loss. We also point out how it can be used by insurance companies to tackle concrete issues. Second, in the case of a general cost field, we give sufficient conditions such that spatial risk measures associated with expectation, variance, Value-at-Risk (VaR) as well as expected shortfall (ES) and induced by this cost field satisfy the axiom of asymptotic spatial homogeneity of order $0$, $-2$, $-1$ and $-1$, respectively. Last but not least, we focus on the case where the cost field is a function of a max-stable random field. We provide sufficient conditions on both the function and the max-stable field such that spatial risk measures associated with expectation, variance, VaR as well as ES and induced by the resulting cost field satisfy the axiom of asymptotic spatial homogeneity of order $0$, $-2$, $-1$ and $-1$, respectively. Max-stable random fields naturally appear when one is interested in extreme events having a spatial extent since they constitute an extension of multivariate extreme-value theory to the level of random fields \citep[in the case of stochastic processes, see, e.g.,][]{haan1984spectral, de2007extreme}. They are particularly well suited to model the temporal maxima of a given variable (for instance a meteorological variable) at all points in space since they arise as the pointwise maxima taken over an infinite number of appropriately rescaled independent and identically distributed random fields. On the whole, this study improves our comprehension of spatial risk measures as well as of their properties with respect to the space variable and generalizes many results by \cite{koch2017spatial}.

The remainder of the paper is organized as follows. In Section \ref{Sec_Reminder_Spatial_Risk_Measures}, we recall and further study the notion of spatial risk measure and the corresponding set of axioms introduced in \cite{koch2017spatial}. Furthermore, we thoroughly demonstrate their usefulness for both actuarial science and practice. Then, we introduce some concepts about mixing and central limit theorems for random fields. Finally, we provide some insights about max-stable random fields. Then, Section \ref{Sec_Axioms} presents our results relating to the properties of some spatial risk measures. Finally, Section \ref{Sec_Conclusion} contains a short summary as well as some perspectives. 

Throughout the paper, $(\Omega, \mathcal{F}, \mathbb{P})$ is an adequate probability space and $\overset{d}{=}$ and $\overset{d}{\rightarrow}$ designate equality and convergence in distribution, respectively. In the case of random fields, distribution has to be understood as the set of all finite-dimensional distributions. Finally, we denote by $\nu$ the Lebesgue measure.

\section{Spatial risk measures and other concepts}
\label{Sec_Reminder_Spatial_Risk_Measures}

\subsection{Spatial risk measures and corresponding axioms}

First, we describe the setting required for a proper definition of spatial risk measures. Let $\mathcal{A}$ be the set of all compact subsets of $\mathbb{R}^2$ with a positive Lebesgue measure and $\mathcal{A}_c$ be the set of all convex elements of $\mathcal{A}$. Denote by $\mathcal{C}$ the set of all real-valued and measurable\footnote{Throughout the paper, when applied to random fields, the adjective ``measurable'' means ``jointly measurable''.} random fields on $\mathbb{R}^2$ having almost surely (a.s.)\footnote{Unless otherwise stated, by a.s., we mean $\mathds{P}$-a.s.} locally integrable sample paths. Let $\mathcal{P}$ be the family of all possible distributions of random fields belonging to $\mathcal{C}$. Each random field represents the economic or insured cost caused by the events belonging to specified classes and occurring during a given time period, say $[0,T_L]$. In the following, $T_L$ is considered as fixed and does not appear anymore for the sake of notational parsimony. 
Each class of events (e.g., European windstorms or hurricanes) will be referred to as a hazard in the following. Let $\mathcal{L}$ be the set of all real-valued random variables defined on $(\Omega, \mathcal{F}, \mathbb{P})$. A risk measure typically will be some function $\Pi: \mathcal{L} \to \mathbb{R}$. This kind of risk measure will be referred to as a classical risk measure in the following. A classical risk measure $\Pi$ is termed law-invariant if, for all $\tilde{X} \in \mathcal{L}$, $\Pi ( \tilde{X} )$ only depends on the distribution of $\tilde{X}$.

We first remind the reader of the definition of the normalized spatially aggregated loss, which enables one to disentangle the contribution of the space and the contribution of the hazards and underpins our definition of spatial risk measure.

\begin{Def}[Normalized spatially aggregated loss as a function of the distribution of the cost field]
\label{Def_normalized_Loss}
For $A \in \mathcal{A}$ and $P \in \mathcal{P}$, the normalized spatially aggregated loss is defined by
\begin{equation}
\label{Chap_RiskMeasure_Eq_Loss_Normalized}
L_N(A,P)= \dfrac{1}{\nu(A)} \displaystyle\int_{A} C_P(\Mb{x}) \  \nu(\mathrm{d}\Mb{x}),
\end{equation}
where the random field $\{ C_P(\Mb{x}) \}_{\Mb{x} \in \mathbb{R}^2}$ belongs to $\mathcal{C}$ and has distribution $P$.
\end{Def}
The quantity 
\Beq
\label{Eq_Loss_Function_Dist}
L(A, P)=\int_{A} C_P(\Mb{x}) \  \nu(\mathrm{d}\Mb{x})
\Eeq
corresponds to the total economic or insured loss over region $A$ due to specified hazards. For technical reasons and to favour a more intuitive understanding, we base our definition of spatial risk measures on $L_N(A,P)$, which is the loss per surface unit and can be interpreted, in a discrete setting\footnote{See Section \ref{Subsec_Concrete_App_Ins_Reins_Industry}.} and in an insurance context, as the mean loss per insurance policy. Among other advantages, this normalization enables a fair comparison of the risks related to regions having different sizes.

Since the field $C_P$ is measurable, $L(A,P)$ and $L_N(A,P)$ are well-defined random variables. Moreover, they are a.s. finite as A is compact and $C_P$ has a.s. locally integrable sample paths. The following proposition gives a sufficient condition for a random field to have a.s. locally integrable sample paths.

\begin{Prop}
\label{Eq_Sufficient_Condition_Locally_Integrable_Sample_Paths}
Let $d \geq 1$ and $\{ Q(\Mb{x}) \}_{\Mb{x} \in \mathbb{R}^d}$ be a measurable random field. If the function 
$$\begin{array}{ccccc}
E &:& \mathbb{R}^d & \to & \mathbb{R} \\
& & \Mb{x} & \mapsto & \mathbb{E}[|Q(\Mb{x})|]
\end{array}
$$
is locally integrable, then $Q$ has a.s. locally integrable sample paths. 
\end{Prop}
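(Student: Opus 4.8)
The plan is to reduce the claim to a statement about integrals over an exhausting sequence of compact sets. Since local integrability of a sample path means precisely integrability on every compact subset of $\mathbb{R}^d$, and since every compact set is bounded, it suffices to control the integrals over the cubes $K_n := [-n,n]^d$, $n \in \mathbb{N}$, which increase to $\mathbb{R}^d$. I would therefore aim to show that, almost surely, $\int_{K_n} |Q(\Mb{x})| \ \nu(\mathrm{d}\Mb{x}) < \infty$ for every $n$, and then deduce the general statement by a monotonicity argument.

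The central tool is Tonelli's theorem, whose applicability rests on the joint measurability of $Q$. Indeed, the map $(\omega,\Mb{x}) \mapsto |Q(\omega,\Mb{x})|$ is $\mathcal{F} \otimes \mathcal{B}(\mathbb{R}^d)$-measurable and non-negative, so for each fixed $n$ the quantity $I_n := \int_{K_n} |Q(\Mb{x})| \ \nu(\mathrm{d}\Mb{x})$ is a well-defined, non-negative random variable, and Tonelli yields
\[
\mathbb{E}[I_n] = \mathbb{E}\left[\int_{K_n} |Q(\Mb{x})| \ \nu(\mathrm{d}\Mb{x})\right] = \int_{K_n} \mathbb{E}[|Q(\Mb{x})|] \ \nu(\mathrm{d}\Mb{x}) = \int_{K_n} E(\Mb{x}) \ \nu(\mathrm{d}\Mb{x}).
\]
Because $K_n$ is compact and $E$ is locally integrable by hypothesis, the right-hand side is finite; hence $\mathbb{E}[I_n] < \infty$, which forces $I_n < \infty$ a.s. I would denote by $N_n$ the associated $\mathbb{P}$-null set on which $I_n = +\infty$.

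To upgrade this to a statement valid simultaneously for all compact sets, I would set $N := \bigcup_{n \in \mathbb{N}} N_n$, which is still $\mathbb{P}$-null as a countable union of null sets. For every $\omega \notin N$ and every $n$, one has $I_n(\omega) < \infty$. Any compact set $K \subset \mathbb{R}^d$ is bounded, hence contained in some $K_n$, so by monotonicity of the integral $\int_{K} |Q(\omega,\Mb{x})| \ \nu(\mathrm{d}\Mb{x}) \leq I_n(\omega) < \infty$. Thus, for every $\omega \notin N$, the sample path $\Mb{x} \mapsto Q(\omega,\Mb{x})$ is locally integrable, which is exactly the desired conclusion.

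The main obstacle, and the reason joint measurability is assumed, is concentrated in the two points where it is invoked: it is what guarantees that each $I_n$ is a genuine random variable rather than a possibly non-measurable object, and it is what licenses the interchange of expectation and spatial integration through Tonelli. The passage from ``almost surely for each fixed compact'' to ``almost surely for all compacts'' is the other step requiring a little care, but it reduces to the standard facts that a countable family of compact sets suffices to dominate every compact set and that a countable union of null sets remains null.
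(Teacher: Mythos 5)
Your proof is correct and takes essentially the same route as the paper's: apply Fubini--Tonelli to the jointly measurable, non-negative integrand to get $\mathbb{E}\bigl[\int_A |Q(\Mb{x})|\,\nu(\mathrm{d}\Mb{x})\bigr]=\int_A \mathbb{E}[|Q(\Mb{x})|]\,\nu(\mathrm{d}\Mb{x})<\infty$ for compact $A$, hence a.s.\ finiteness of the pathwise integral. The only difference is that you spell out the countable exhaustion by cubes and the union of null sets needed to pass from ``a.s.\ for each fixed compact'' to ``a.s.\ for all compacts,'' a step the paper leaves implicit in its closing sentence.
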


\begin{proof}
Let $A$ be a compact subset of $\mathbb{R}^d$. First, since $Q$ is measurable, $\int_{A} |Q(\Mb{x})| \ \nu(\mathrm{d}\Mb{x})$ is a well-defined random variable. By Fubini's theorem, we have 
$$ \mathbb{E} \left[ \int_{A} |Q(\Mb{x})| \ \nu(\mathrm{d}\Mb{x}) \right]=\int_{A} \mathbb{E}[|Q(\Mb{x})|] \ \nu(\mathrm{d}\Mb{x})< \infty,$$
which necessarily implies that 
$$\int_{A} |Q(\Mb{x})| \ \nu(\mathrm{d}\Mb{x}) < \infty \ \  \mbox{ a.s.}$$ 
Since this is true for all $A$ being a compact subset of $\mathbb{R}^d$, we obtain the result.
\end{proof}

We now recall the notion of spatial risk measure introduced by \cite{koch2017spatial}, which makes explicit the contribution of the space in the risk measurement.
\begin{Def}[Spatial risk measure as a function of the distribution of the cost field]
\label{Def_Spatial_Risk_Measure}
A spatial risk measure is a function $\mathcal{R}_{\Pi}$ that assigns a real number to any region $A \in \mathcal{A}$ and distribution $P \in \mathcal{P}$:
$$\begin{array}{ccccc}
\mathcal{R}_{\Pi} & : & \mathcal{A} \times \mathcal{P}  & \to & \mathbb{R} \\
  &   & (A,P) & \mapsto & \Pi ( L_N(A,P) ),
\end{array}$$
where $\Pi$ is a classical and law-invariant risk measure and $L_N(A,P)$ is defined in \eqref{Chap_RiskMeasure_Eq_Loss_Normalized}.
\end{Def}
This extends the notion of classical risk measure to the spatial and infinite-dimensional setting as we now have a function of both the space and the distribution of a random field (or directly  a random field, see below) instead of a function of a unique real-valued random variable. Note that law-invariance of $\Pi$ is necessary for spatial risk measures to be defined in this way; see below for more details. For a given $\Pi$ and a fixed $P \in \mathcal{P}$, the quantity $\mathcal{R}_{\Pi}(\cdot,P)$ is referred to as the spatial risk measure associated with $\Pi$ and induced by $P$. A nice feature is that, for many useful classical risk measures $\Pi$ such as, e.g., variance, VaR and ES, this notion of spatial risk measure allows one to take (at least) part of the spatial dependence structure of the field $C_P$ into account. We could define spatial risk measures in the same way but using the non-normalized spatially aggregated loss; this is not what we do for reasons explained above and in Remark \ref{Rq_Subadd_Nonnormalized} below.

Now, we remind the reader of the set of axioms for spatial risk measures developed in \cite{koch2017spatial}. It concerns the spatial risk measures properties with respect to the space and not to the cost distribution, the latter being considered as given by the problem at hand. For any $A \in \mathcal{A}$, let $\Mb{b}_A$ denote its barycenter.
\begin{Def}[Set of axioms for spatial risk measures induced by a distribution]
\label{Chapriskmeasures_Def_Axiomatic}
Let $\Pi$ be a classical and law-invariant risk measure. For a fixed $P \in \mathcal{P}$, we define the following axioms for the spatial risk measure associated with $\Pi$ and induced by $P$, $\mathcal{R}_{\Pi}(\cdot, P)$:
\medskip
\Ben
\item \textbf{Spatial invariance under translation:} \\ for all $\Mb{v} \in \mathbb{R}^2$ and $A \in \mathcal{A},  \ \mathcal{R}_{\Pi}(A+\Mb{v}, P)=\mathcal{R}_{\Pi}(A,P)$, where $A+\Mb{v}$ denotes the region $A$ translated by the vector $\Mb{v}$.
\item \textbf{Spatial sub-additivity:} \\
for all $A_1, A_2 \in \mathcal{A},\  \mathcal{R}_{\Pi}(A_1 \cup A_2, P) \leq \min \{ \mathcal{R}_{\Pi}(A_1, P),\mathcal{R}_{\Pi}(A_2, P) \}$.
\item \textbf{Asymptotic spatial homogeneity of order $\boldsymbol{- \gamma}\boldsymbol{, \gamma \geq 0}$:} \\ for all $A \in \mathcal{A}_c$,
$$
\mathcal{R}_{\Pi}(\lambda A, P) \underset{\lambda \to \infty}{=} K_1(A,P)+\dfrac{K_2(A,P)}{\lambda^{\gamma}} + o\left(\frac{1}{\lambda^{\gamma}}\right),
$$
where $\lambda A$ is the area obtained by applying to $A$ a homothety with center $\Mb{b}_A$ and ratio $\lambda >0$, and 
$K_1(\cdot, P): \mathcal{A}_c  \to \mathbb{R}$, $K_2(\cdot, P): \mathcal{A}_c  \to \mathbb{R} \backslash \{ 0 \}$ are functions depending on $P$.
\Een
\end{Def}
It is also reasonable to introduce the axiom of \textbf{spatial anti-monotonicity:}
for all $A_1, A_2 \in \mathcal{A}$, $A_1 \subset A_2 \Rightarrow \mathcal{R}_{\Pi}(A_2, P) \leq \mathcal{R}_{\Pi}(A_1, P)$. The latter is equivalent to the axiom of spatial sub-additivity. These axioms appear natural and make sense at least under some conditions on the cost field $C_P$ (e.g., stationarity\footnote{Throughout the paper, stationarity refers to strict stationarity.} in the case of spatial invariance under translation and spatial sub-additivity) and for some classical risk measures $\Pi$. The axiom of spatial sub-additivity indicates spatial diversification. If it is satisfied with strict inequality, an insurance company would be well advised to underwrite policies in both regions $A_1$ and $A_2$ instead of only one of them. This axiom involves the minimum operator because the concept of spatial risk measure is based on the \textit{normalized} spatially aggregated loss; using the summation operator instead would not provide information about spatial diversification. On the other hand, if spatial risk measures were defined using the \textit{non-normalized} loss, then summation would make sense; see Remark \ref{Rq_Subadd_Nonnormalized} below for more details. Originally, \cite{koch2017spatial} used the term ``sub-additivity'', among other reasons, by analogy with the axiom of sub-additivity by \cite{artzner1999coherent}, which also conveys an idea of diversification. The axiom of asymptotic spatial homogeneity of order $-\gamma$ quantifies the rate of spatial diversification when the region becomes large. Consequently, determining the value of $\gamma$ is of interest for the insurance industry; see Section \ref{Subsec_Concrete_App_Ins_Reins_Industry} for further details.

The axioms of spatial invariance under translation and spatial sub-additivity a priori make sense only if the cost field satisfies at least some kind of stationarity. If an insurance company covers a region $A_1$ which is much less risky than a region $A_2$, it is very unlikely that the company reduces its risk by covering $A_1 \cup A_2$. For a given hazard (e.g., hurricanes), the cost resulting from a single specific event (e.g., a particular hurricane) generally varies across space, making any particular realization of the cost field spatially inhomogeneous. Nevertheless, the cost field (and not one realization of it) related to this hazard can be stationary, or, at least, piecewise stationary; see immediately below.

In concrete actuarial applications, the cost field (for a given hazard) is often non-stationary over the entire region covered by the insurance company, unless it is a very small area. In many cases, however, it can reasonably be considered as locally stationary; see, e.g., \cite{dahlhaus2012locally} for an excellent review about locally stationary processes, and \cite{eckley2010locally} as well as \cite{anderes2011local} for papers dealing with local non-stationarity in the case of random fields. Locally stationary processes can be well approximated by piecewise stationary processes \citep[e.g.,][Section 2.2]{ombao2001automatic} and, assuming this to be also true for random fields, we can reasonably consider the cost field to be stationary over sub-regions, at least in most cases. In the latter, the axioms of spatial invariance under translation and spatial sub-additivity make sense separately on each sub-region over which the field is stationary. Let $S_{\mathrm{ub}}$ be such a sub-region (a subset of $\mathbb{R}^2$) and $\mathcal{S_{\mathrm{ub}}}$ be the set of all compact subsets of $S_{\mathrm{ub}}$ with a positive Lebesgue measure. The axiom of spatial invariance under translation becomes:
for all $\Mb{v} \in \mathbb{R}^2$ and $A \in \mathcal{S_{\mathrm{ub}}}$ such that $A+\Mb{v} \in \mathcal{S_{\mathrm{ub}}}$,  $\ \mathcal{R}_{\Pi}(A+\Mb{v}, P)=\mathcal{R}_{\Pi}(A,P)$; spatial sub-additivity is now written:
for all $A_1, A_2 \in \mathcal{S_{\mathrm{ub}}},\  \mathcal{R}_{\Pi}(A_1 \cup A_2, P) \leq \min \{ \mathcal{R}_{\Pi}(A_1, P),\mathcal{R}_{\Pi}(A_2, P) \}$.

Of course, the fact that the axioms of Definition \ref{Chapriskmeasures_Def_Axiomatic} are satisfied depends on both the classical risk measure $\Pi$ and the cost field $C_P$. It may be interesting to determine for which classical risk measures the axioms are satisfied for the broadest class of cost fields. These classical risk measures could be considered as ``adapted'' to the spatial context.

\begin{Rq}
Although the concept of spatial risk measure and related axioms naturally apply in an insurance context (see Section \ref{Subsec_Concrete_App_Ins_Reins_Industry} for further details), they can also be used in the banking industry and on financial markets. A potential application is the assessment of the risk related to event-linked securities such as CAT bonds. Furthermore, they can be used for a wider class of risks than those linked with damage due to environmental events. These concepts are actually insightful as soon as the risks spread over a geographical region. One might think, e.g., about the loss in value of real estate due to adverse economic conditions. 
\end{Rq}

We close this section by deeply commenting on the previous concepts and giving slightly modified and more natural versions of previous definitions. First, we need the following useful result.
\begin{Th}
\label{Th_Characterization_LN_Finite_Dimensional_Distributions_CP}
Let $d \geq 1$ and $\{ H(\Mb{x}) \}_{\Mb{x} \in \mathbb{R}^d}$ be a measurable random field having a.s. locally integrable sample paths. Moreover, let $A$ be a compact subset of $\mathbb{R}^d$ with positive Lebesgue measure. Then the distribution of 
$$ L_N(A, H)=\frac{1}{\nu(A)} \int_A H(\Mb{x}) \ \nu(\mathrm{d}\Mb{x})$$ 
only depends on $A$ and the finite-dimensional distributions of $H$.
\end{Th}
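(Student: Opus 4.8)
The plan is to realise $L_N(A,H)$ as a limit in distribution of a sequence of functionals that are \emph{manifestly} determined by the finite-dimensional distributions (fidis) of $H$, and then to invoke uniqueness of distributional limits. Concretely, it suffices to prove that any two fields $H,H'\in\mathcal{C}$ sharing the same fidis satisfy $L_N(A,H)\overset{d}{=}L_N(A,H')$; the asserted dependence on $A$ and the fidis only then follows. Since $\nu(A)$ is a deterministic constant, I would work throughout with the non-normalised integral $\int_A H:=\int_A H(\Mb{x})\,\nu(\mathrm{d}\Mb{x})$ and divide by $\nu(A)$ at the very end.

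The natural approximants are Riemann sums, which are functions of finitely many point values of $H$ and hence have laws prescribed by the fidis. The difficulty is that, for fixed $\omega$, the sample path $H(\cdot,\omega)$ is merely locally integrable, and tagged Riemann sums of a general $L^1$ function need not converge to its integral. I would circumvent this with a \emph{randomly shifted grid}: enlarge the probability space by an auxiliary variable $V$, uniform on $[0,\ell_n)^d$ and independent of $H$, with $\ell_n\to 0$, and set
\[
S_n = \ell_n^{\,d} \sum_{\Mb{k}\in\mathbb{Z}^d} \mathbf{1}_A(\ell_n\Mb{k}+V)\,H(\ell_n\Mb{k}+V),
\]
where the sum is finite because $A$ is compact. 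Conditionally on $V=\Mb{v}$, $S_n$ is a deterministic linear combination of finitely many evaluations $H(\ell_n\Mb{k}+\Mb{v})$, so its conditional law is read off from a fidi of $H$; integrating against the law of $V$, which does not depend on the field, shows that the law of $S_n$ depends only on $A$, on $\ell_n$ and on the fidis of $H$. Using the same $V$ for $H$ and $H'$ therefore gives $S_n[H]\overset{d}{=}S_n[H']$ for every $n$.

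It then remains to prove $S_n\to\int_A H$ in probability. Fix $\omega$ and write $g=\mathbf{1}_A H(\cdot,\omega)\in L^1(\mathbb{R}^d)$. The key estimate is the contraction bound $\mathbb{E}_V|S_n|\le \|g\|_1$, obtained by Tonelli's theorem from the fact that averaging a shifted grid sum over one cell reconstructs the full integral. Combined with ordinary Riemann convergence (uniform in $V$) for continuous compactly supported $g$, a density argument in $L^1(\mathbb{R}^d)$ yields $e_n(\omega):=\mathbb{E}_V\big|S_n-\int_A H(\cdot,\omega)\big|\to 0$ for a.e.\ $\omega$, i.e.\ path by path. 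To transfer this to the product space without assuming $\int_A \mathbb{E}[|H|]<\infty$, I would pass through convergence in probability: by Markov's inequality, $\mathbb{P}_V(|S_n-\int_A H(\cdot,\omega)|>\epsilon)\le\min\{1,e_n(\omega)/\epsilon\}$, and since this is bounded by $1$ and tends to $0$ a.s., dominated convergence gives $\mathbb{P}(|S_n-\int_A H|>\epsilon)\to 0$.

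Convergence in probability implies $S_n[H]\overset{d}{\to}\int_A H$ and likewise $S_n[H']\overset{d}{\to}\int_A H'$; since $S_n[H]\overset{d}{=}S_n[H']$ for each $n$ and distributional limits are unique, $\int_A H\overset{d}{=}\int_A H'$, and dividing by $\nu(A)$ concludes. I expect the main obstacle to be precisely this convergence step: ordinary Riemann sums fail for general locally integrable sample paths, and it is the randomisation of the grid together with the $L^1$-contraction bound that repairs this while keeping every approximant a finite-dimensional functional. A secondary technical point is to run the argument through convergence in probability rather than in $L^1(\Omega)$, so that no integrability of $H$ beyond the a.s.\ local integrability of its paths is needed.
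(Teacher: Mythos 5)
Your proof is correct, and while it shares the paper's central idea --- randomize the evaluation points so that functionals of finitely many point values converge to the integral --- the execution is genuinely different. The paper follows the Samorodnitsky--Taqqu route: it writes $L_N(A,H_\omega)=\mathbb{E}_1[H_\omega(\Mb{U})]$ for $\Mb{U}$ uniform on $A$ on an auxiliary space, applies the strong law of large numbers to i.i.d.\ copies $\Mb{U}_1,\Mb{U}_2,\dots$, and then uses Fubini to fix a single realization $\omega_0$ of the sampling sequence, so that $L_N(A,H)$ is exhibited as an a.s.\ limit of averages of $H$ at a \emph{fixed, deterministic} countable set of points $\{\Mb{U}_i(\omega_0)\}$; the conclusion is then immediate and in fact slightly stronger, since $L_N(A,H)$ is realized as a measurable function of countably many fixed evaluations. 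You instead use a randomly shifted lattice, prove the $L^1$-contraction bound $\mathbb{E}_V|S_n|\le\|g\|_1$ via Tonelli, upgrade Riemann convergence from $C_c$ to $L^1$ by density, and pass to convergence in probability on the product space; you never de-randomize the grid, but you do not need to, because the law of each $S_n$ is already a fidi-determined mixture over the shift $V$, and uniqueness of distributional limits finishes the argument. What your route buys is the avoidance of the SLLN and of the somewhat delicate choice of $\omega_0$, at the cost of a density argument and of concluding only at the level of laws (which is all the statement requires); both arguments use nothing beyond joint measurability and a.s.\ local integrability of the sample paths, and you correctly identify the pitfall --- failure of deterministic Riemann sums for general $L^1$ paths --- that forces the randomization in the first place.
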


\begin{proof}
The proof is partly inspired from the proof of Theorem 11.4.1 in \cite{samorodnitsky1994stable}. We assume that the random field $H$ is defined on the probability space $(\Omega, \mathcal{F}, \mathbb{P})$. For a fixed $\omega \in \Omega$, we denote by $H_{\omega}$ the corresponding realization of $H$ on $\mathbb{R}^d$ and by $H_{\omega}(\Mb{x})$ the realization of $H$ at location $\Mb{x}$. By definition, we have, for almost every $\omega \in \Omega$, that
\Beq
\label{Eq_Def_Normalized_Loss_omega}
L_N(A,H_{\omega})= \frac{1}{\nu(A)} \int_A H_{\omega}(\Mb{x}) \ \nu(\mathrm{d}\Mb{x}).
\Eeq
Now, let $(\Omega_1, \mathcal{F}_1, \mathbb{P}_1)$ be a probability space different from the probability space $(\Omega, \mathcal{F}, \mathbb{P})$. Let $\Mb{U}$ be a random vector defined on $(\Omega_1, \mathcal{F}_1, \mathbb{P}_1)$ and following the uniform distribution on $A$, with density $f_{\Mb{U}}(\Mb{x})=\mathbb{I}_{ \{ \Mb{x} \in A \} }/\nu(A), \Mb{x} \in \mathbb{R}^d$. From \eqref{Eq_Def_Normalized_Loss_omega}, it directly follows that, for almost every $\omega \in \Omega$,
\Beq
\label{Eq_Normalized_Loss_Density_Uniform}
L_N(A,H_{\omega})= \int_{\mathbb{R}^d} H_{\omega}(\Mb{x}) f_{\Mb{U}}(\Mb{x}) \ \nu(\mathrm{d}\Mb{x}).
\Eeq
Let us denote by $\mathbb{E}_1$ the expectation with respect to the probability measure $\mathbb{P}_1$.
We have
$$ \mathbb{E}_1[H_{\omega}(\Mb{U})]= \int_{\mathbb{R}^d} H_{\omega}(\Mb{x}) f_{\Mb{U}}(\Mb{x}) \ \nu(\mathrm{d}\Mb{x}),$$
giving, using \eqref{Eq_Normalized_Loss_Density_Uniform},
$$L_N(A,H_{\omega})=\mathbb{E}_1[H_{\omega}(\Mb{U})].$$
Now, let $\Mb{U}_1, \dots, \Mb{U}_n$ be independent replications of $\Mb{U}$ (which are independent of the random field $H$). The strong law of large numbers gives that, for almost every $\omega \in \Omega$,
\Beq
\label{Eq_Proof_Integral_Finite_Dimensional_Distributions}
L_N(A,H_{\omega})=\lim_{n \to \infty} \frac{1}{n} \sum_{i=1}^n H_{\omega}(\Mb{U}_i) \ \mathbb{P}_1\mbox{-a.s.}
\Eeq
Therefore, using Fubini's theorem, we deduce that, for $\mathbb{P}_1$-almost every $\omega_1 \in \Omega_1$,
\Beq
\label{Eq_Final_Equation_Proof_Integral_Finite_Dimensional_Distributions}
L_N(A,H_{\omega})=\lim_{n \to \infty} \frac{1}{n} \sum_{i=1}^n H_{\omega}(\Mb{U}_i(\omega_1)) \ \mathbb{P}\mbox{-a.s.}
\Eeq
Now, we choose $\omega_0 \in \Omega_1$ such that the (non-random) sequence $(\Mb{U}_1(\omega_0), \Mb{U}_2(\omega_0), \dots)$ satisfies \eqref{Eq_Final_Equation_Proof_Integral_Finite_Dimensional_Distributions}.
We obtain
\Beq
\label{Eq_Final_Final_Equation_Proof_Integral_Finite_Dimensional_Distributions}
L_N(A,H_{\omega})=\lim_{n \to \infty} \frac{1}{n} \sum_{i=1}^n H_{\omega}(\Mb{U}_i(\omega_0)) \ \mathbb{P}\mbox{-a.s.}
\Eeq
Equation \eqref{Eq_Final_Final_Equation_Proof_Integral_Finite_Dimensional_Distributions} says that the distribution of $L_N(A,H)$ is determined by the finite-dimensional distributions at the points belonging to the set $\{ \Mb{U}_i(\omega_0): i \in \mathbb{N} \}$.
This yields the result.
\end{proof}

It is more natural, especially in terms of interpretation, to introduce the normalized spatially aggregated loss as a function of the cost field instead of its distribution, as shown immediately below.
\begin{Def}[Normalized spatially aggregated loss as a function of the cost field]
\label{Def_Norm_Spat_Agg_Loss_CP}
The normalized spatially aggregated loss function is defined by
\Beq
\begin{array}{ccccc}
L_N & : & \mathcal{A} \times \mathcal{C} & \to & \mathbb{R} \\
  &   & (A,C) & \mapsto & \dfrac{1}{\nu(A)} \displaystyle \int_{A} C(\Mb{x}) \  \nu(\mathrm{d}\Mb{x}).
\end{array}
\label{Eq_Def_Norm_Loss_Function_C}
\Eeq
\end{Def}
Let $C_P \in \mathcal{C}$ be a random field with distribution $P$. Although a particular realization of $L_N(A,C_P)$ obviously depends on $C_P$ (through its corresponding realization), we know from Theorem \ref{Th_Characterization_LN_Finite_Dimensional_Distributions_CP} that its distribution is entirely characterized by $A$ and $P$. This explains our notation $L_N(A,P)$ instead of $L_N(A,C_P)$ in Definition \ref{Def_normalized_Loss}. More precisely, let $C_P^{(1)}, C_P^{(2)} \in \mathcal{C}$ be random fields having the same distribution $P$. Then, $C_P^{(1)}$ and $C_P^{(2)}$ have the same finite-dimensional distributions, which implies that $L_N (A,C_P^{(1)}) \overset{d}{=} L_N (A, C_P^{(2)})$.  

Similarly, it can appear more natural to define spatial risk measures as functions of the cost field instead of its distribution. Moreover, this allows spatial risk measures to be defined even when the classical risk measure $\Pi$ is not law-invariant.
\begin{Def}[Spatial risk measure as a function of the cost field]
\label{Def_Spatial_Risk_Measure_Associated_With_Cp}
A spatial risk measure is a function $\mathcal{R}_{\Pi}$ that assigns a real number to any region $A \in \mathcal{A}$ and random field $C \in \mathcal{C}$:
$$\begin{array}{ccccc}
\mathcal{R}_{\Pi} & : & \mathcal{A} \times \mathcal{C} & \to & \mathbb{R} \\
  &   & (A,C) & \mapsto & \Pi ( L_N(A,C) ),
\end{array}$$
where $\Pi$ is a classical risk measure.
\end{Def}
For a given classical and law-invariant risk measure $\Pi$ and a given region $A \in \mathcal{A}$, the value of the spatial risk measure of Definition \ref{Def_Spatial_Risk_Measure_Associated_With_Cp} is completely determined by the distribution of $L_N(A, C)$ by law-invariance of $\Pi$. Consequently, using Theorem \ref{Th_Characterization_LN_Finite_Dimensional_Distributions_CP}, it is completely determined by $A$ and the distribution of the cost field $C$. This explains why \cite{koch2017spatial} has introduced the notion of spatial risk measure as a function of the distribution of $C$ (see the reminder in Definition \ref{Def_Spatial_Risk_Measure}); if $\Pi$ is law-invariant, the spatial risk measures described in Definitions \ref{Def_Spatial_Risk_Measure} and \ref{Def_Spatial_Risk_Measure_Associated_With_Cp} refer to the same notion. For a given $\Pi$ and a fixed $C \in \mathcal{C}$, $\mathcal{R}_{\Pi}(\cdot,C)$ is referred to as the spatial risk measure associated with $\Pi$ and induced by $C$. 

Of course, we can also express the axioms recalled in Definition \ref{Chapriskmeasures_Def_Axiomatic} for the spatial risk measures induced by a cost field $C \in \mathcal{C}$ introduced immediately above. On top of being more natural, it enables one to leave out the assumption of law-invariance for the classical risk measure $\Pi$.
\begin{Def}[Set of axioms for spatial risk measures induced by a cost field]
\label{Chapriskmeasures_Def_Axiomatic_Risk_Measure_Function_Cp}
Let $\Pi$ be a classical risk measure. For a fixed $C \in \mathcal{C}$, we define the following axioms for the spatial risk measure associated with $\Pi$ and induced by $C$, $\mathcal{R}_{\Pi}(\cdot, C)$:
\medskip
\Ben
\item \textbf{Spatial invariance under translation:} \\ for all $\Mb{v} \in \mathbb{R}^2$ and $A \in \mathcal{A},  \ \mathcal{R}_{\Pi}(A+\Mb{v}, C)=\mathcal{R}_{\Pi}(A,C)$, where $A+\Mb{v}$ denotes the region $A$ translated by the vector $\Mb{v}$.
\item \textbf{Spatial sub-additivity:} \\
for all $A_1, A_2 \in \mathcal{A},\  \mathcal{R}_{\Pi}(A_1 \cup A_2, C) \leq \min \{ \mathcal{R}_{\Pi}(A_1, C),\mathcal{R}_{\Pi}(A_2, C) \}$.
\item \textbf{Asymptotic spatial homogeneity of order $\boldsymbol{- \gamma}\boldsymbol{, \gamma \geq 0}$:} \\ for all $A \in \mathcal{A}_c$,
$$
\mathcal{R}_{\Pi}(\lambda A, C) \underset{\lambda \to \infty}{=} K_1(A,C)+\dfrac{K_2(A,C)}{\lambda^{\gamma}} + o\left(\frac{1}{\lambda^{\gamma}}\right),
$$
where $\lambda A$ is the area obtained by applying to $A$ a homothety with center $\Mb{b}_A$ and ratio $\lambda >0$, and $K_1(\cdot, C): \mathcal{A}_c  \to \mathbb{R}$, $K_2(\cdot, C): \mathcal{A}_c  \to \mathbb{R} \backslash \{ 0 \}$ are functions depending on $C$.
\Een
\end{Def}
All the explanations and interpretations given for Definitions \ref{Def_normalized_Loss}-\ref{Chapriskmeasures_Def_Axiomatic} remain valid in the case of Definitions \ref{Def_Norm_Spat_Agg_Loss_CP}-\ref{Chapriskmeasures_Def_Axiomatic_Risk_Measure_Function_Cp}. For the reasons mentioned above, our opinion is that Definitions \ref{Def_Norm_Spat_Agg_Loss_CP}-\ref{Chapriskmeasures_Def_Axiomatic_Risk_Measure_Function_Cp} rather than previous ones should be used. This is what is done in the following.

\subsection{Concrete applications to insurance}
\label{Subsec_Concrete_App_Ins_Reins_Industry}

This section is dedicated to the connections between the concepts described above and actuarial risk theory as well as real insurance practice. We especially show how they can be used for concrete purposes. In an insurance context, the quantity
\Beq
\label{Eq_Total_Loss}
L(A, C)=\int_{A} C(\Mb{x}) \  \nu(\mathrm{d}\Mb{x})
\Eeq
appearing in Definition \ref{Def_Norm_Spat_Agg_Loss_CP} (or equivalently \eqref{Eq_Loss_Function_Dist}) can be seen as a continuous and more complex version of the classical actuarial individual risk model. The latter can be formulated as 
\Beq
\label{Eq_Classical_Individual_Risk_Model}
L_{{\footnotesize \mbox{ind}}} = \sum_{i=1}^N X_i,
\Eeq
where $L_{{\footnotesize \mbox{ind}}}$ is the total loss, $N$ denotes the number of insurance policies and, for $i=1, \dots, N,$ $X_i$ is the claim related to the $i$-th policy. The $X_i$'s are generally assumed to be independent but not necessarily identically distributed. In $L(A,C)$, each location $\Mb{x}$ corresponds to a specific insurance policy and thus each $C(\Mb{x})$ is equivalent to a $X_i$ in \eqref{Eq_Classical_Individual_Risk_Model}. By the way, by choosing $\nu$ to be a counting measure instead of the Lebesgue measure, the integral in \eqref{Eq_Total_Loss} can be reduced to a sum, e.g., $\sum_{\Mb{x} \in A^{\prime}} C(\Mb{x})$, where $A^{\prime}$ is a finite set of locations in $\mathbb{R}^2$ (e.g., part of a lattice in $\mathbb{Z}^2$). It is worth mentioning that the ideas of this paper can easily be applied to such a framework. 

Even if dependence between the $X_i$, $i=1, \dots, N,$ in \eqref{Eq_Classical_Individual_Risk_Model} was allowed, considering $L(A,C)$ (see \eqref{Eq_Total_Loss}) would appear more promising. Indeed, the geographical information of each risk (i.e., insurance policy) is explicitly taken into account and, consequently, the dependence between all risks can be modelled in a more realistic way than in \eqref{Eq_Classical_Individual_Risk_Model}. The dependence between the risks directly inherits from their respective associated geographical positions and, thus, ignoring their localizations as in \eqref{Eq_Classical_Individual_Risk_Model} makes the modelling of their dependence more arbitrary and likely less reliable. In our approach, this dependence is fully characterized by the spatial dependence structure of the cost field $C$. Potential central limit theorems (see below) would have stronger implications because the dependence is more realistic. For these reasons, Models \eqref{Eq_Def_Norm_Loss_Function_C} and \eqref{Eq_Total_Loss} allow a more accurate assessment of spatial diversification. The same remarks hold if we compare our loss models with the classical actuarial collective risk model. 

Our risk models \eqref{Eq_Def_Norm_Loss_Function_C} and \eqref{Eq_Total_Loss} and more generally our theory about spatial risk measures may be particularly relevant for an insurance company willing to adapt its policies portfolio. E.g., the axioms of spatial sub-additivity and asymptotic spatial homogeneity can help it to assess the potential relevance of extending its activity to a new geographical region. Such an analysis requires the company to have an accurate view of the dependence between its risks (inter alia between the possible new risks and those already present in the portfolio), as allowed by Models \eqref{Eq_Def_Norm_Loss_Function_C} and \eqref{Eq_Total_Loss} through the cost field $C$. Model \eqref{Eq_Classical_Individual_Risk_Model} would not enable the insurer to precisely account for the dependence between the new risks and those already in the portfolio and hence to properly quantify the impact of a geographical expansion, i.e., of an increase of the number of contracts $N$.

At present, we show that, consistently with our intuition, considering the risk related to the normalized spatially aggregated loss is also insightful when the insurer is interested in the risk related to its non-normalized counterpart, which is often the case. Let $\Pi$ be a positive homogeneous and translation invariant classical risk measure and $p_r$ denote either the claims reserves, revenues or any relevant related quantity\footnote{It is out of the scope of this paper to enter into accounting details.} per surface unit (possibly the mean premium per surface unit) of an insurance company Ins. 

We first consider the axiom of spatial sub-additivity, which is assumed to be satisfied. Ins covers region $A_1$ for a given hazard and potentially aims at covering also a region $A_2$ disjoint of $A_1$. We assume that Ins properly hedges its risk on $A_1$, i.e., 
\Beq
\label{Eq_Condition_Claim_A1}
\nu(A_1) p_r \geq \Pi(L(A_1, C)), \quad \mbox{i.e., } \quad p_r \geq \Pi(L_N(A_1, C)),
\Eeq
by positive homogeneity. Using again the same property,
$$ \Pi(L(A_1 \cup A_2, C)) = \nu(A_1 \cup A_2) \Pi(L_N(A_1 \cup A_2, C)).$$
Combined with
$$ \Pi(L_N(A_1 \cup A_2, C)) \leq \Pi(L_N(A_1, C)),$$
this yields
$$ \Pi(L(A_1 \cup A_2, C)) \leq \frac{\nu(A_1 \cup A_2)}{\nu(A_1)} \Pi(L(A_1, C)).$$
Hence, by translation invariance,
\begin{align}
\Pi(L(A_1 \cup A_2, C)- \nu(A_1 \cup A_2)p_r)& =\Pi(L(A_1 \cup A_2, C))-\nu(A_1 \cup A_2)p_r \nonumber \\
& \leq \frac{\nu(A_1 \cup A_2)}{\nu(A_1)} \Pi(L(A_1, C)) -\nu(A_1 \cup A_2)p_r.
\label{Eq_Sub_Add_Light_1}
\end{align}
It follows from \eqref{Eq_Condition_Claim_A1} that
$$ p_r [\nu(A_1 \cup A_2) - \nu(A_1)] \geq \frac{\Pi(L(A_1, C))}{\nu(A_1)} [\nu(A_1 \cup A_2) - \nu(A_1)],$$
which gives
\Beq
\label{Eq_Sub_Add_Light_2}
\frac{\nu(A_1 \cup A_2)}{\nu(A_1)}\Pi(L(A_1, C))-\nu(A_1 \cup A_2)p_r \leq \Pi(L(A_1,C)) - \nu(A_1) p_r.
\Eeq
The combination of \eqref{Eq_Sub_Add_Light_1} and \eqref{Eq_Sub_Add_Light_2} yields that
$$ \Pi(L(A_1 \cup A_2, C)- \nu(A_1 \cup A_2)p_r) \leq \Pi(L(A_1,C) - \nu(A_1) p_r).$$
The last inequality is strict if that in the axiom of spatial sub-additivity or in \eqref{Eq_Condition_Claim_A1} is so.
Thus, if Ins suitably hedges its risk on $A_1$, the risk is even better hedged on $A_1 \cup A_2$. Exactly the same reasoning holds for $A_2$.
\begin{Rq}
\label{Rq_Subadd_Nonnormalized}
For spatial risk measures defined using the non-normalized spatially aggregated loss, we could propose the following axiom of spatial sub-additivity: for all disjoint $A_1, A_2 \in \mathcal{A}$, $\Pi(L(A_1 \cup A_2, C)) \leq \Pi(L(A_1, C)) + \Pi(L(A_2, C))$. Nevertheless, this property is trivially satisfied as soon as the classical risk measure $\Pi$ is sub-additive and therefore its validity does not depend on the properties of the cost field $C$. Basing the axiom of spatial sub-additivity on the normalized spatially aggregated loss as we did is more appealing since it allows a diversification effect coming from $C$ (and not only from $\Pi$). This argument is in favour of defining spatial risk measures using the normalized spatially aggregated loss.
\end{Rq}

We now consider the axiom of asymptotic spatial homogeneity of order $-\gamma$, $\gamma \geq 0$. Assume that it is satisfied with $\gamma>0$ (e.g., we will see that for $\Pi$ being VaR or ES, $\gamma$ typically equals $1$). It follows from Definition \ref{Chapriskmeasures_Def_Axiomatic_Risk_Measure_Function_Cp}, Point 3, that
\begin{align*}
\label{}
& \quad \quad \ \Pi( L(\lambda A, C)-\nu(\lambda A)p_r) 
\\&\underset{\lambda \to \infty}{=} \lambda^2 \nu(A) K_1(A, C) + \nu(A) K_2(A, C) \lambda^{2-\gamma} + o\left(\lambda^{2-\gamma} \right) - \lambda^2 \nu(A) p_r
\\& \underset{\lambda \to \infty}{=} \lambda^2 \nu(A) (K_1(A, C)-p_r) + \nu(A) K_2(A, C) \lambda^{2-\gamma} + o\left(\lambda^{2-\gamma} \right).
\end{align*}
Since $\gamma>0$, the dominant term as $\lambda \to \infty$ is $\lambda^2 \nu(A) (K_1(A, C)-p_r)$. Assume that $K_1(A,C)>0$ and $K_2(A,C)>0$. This is true under the conditions of Section \ref{Sec_Axioms} for VaR and ES: we have $K_1(A, C)=\mathbb{E}[C(\Mb{0})]$, which is positive as the cost field can be assumed to be non-negative and not a.s. equal to $0$; regarding $K_2(A,C)$, this is always true for ES and, provided that the confidence level $\alpha$ is greater than $1/2$, also for VaR. Consequently, for $\lambda$ large enough, the total risk of the company, $\Pi( L(\lambda A, C)-\nu(\lambda A)p_r)$, is a decreasing function of $\lambda$ as soon as the revenue per surface unit (or claims reserves, \dots) satisfies $p_r > K_1(A, C)$. Under the conditions of Section \ref{Sec_Axioms}, for VaR and ES, $K_1(A, C)=\mathbb{E}[C(\Mb{x})]$ for all $\Mb{x} \in \mathbb{R}^2$, and therefore the latter inequality entails that the revenue per surface unit (e.g., the mean premium) exceeds the expected cost at each location, which appears natural. The term $2-\gamma$ corresponds to the second highest power with respect to $\lambda$. Provided that $K_2(A,C)>0$ and $0 < \gamma < 2$ (which is true for VaR and ES under the conditions of Section \ref{Sec_Axioms}), the corresponding term, $\nu(A) K_2(A, C) \lambda^{2-\gamma}$, increases the total risk of the company as $\lambda$ increases. However, the highest the value of $\gamma$, the fastest the decrease of the total risk as $\lambda$ increases owing to the term in $\lambda^2$. For $\lambda$ large, the values of $\gamma$, $K_1(A,C), K_2(A,C)$ and $p_r$ allow one to determine the value of $\lambda$ necessary to reach a targeted sufficiently low level of the total risk. Note that in the case of the variance, at least under the conditions of Section \ref{Sec_Axioms}, $K_1(A,C)=0$ and $\gamma=2$.

\begin{Rq}
The axioms of spatial invariance under translation and asymptotic spatial homogeneity could also be defined for spatial risk measures based on the non-normalized spatially aggregated loss. Spatial invariance under translation would be unchanged and asymptotic spatial homogeneity of order $- \gamma$, $\gamma \geq 0$, would become: for all $A \in \mathcal{A}_c$,
$$
\Pi( L(\lambda A, C) ) \underset{\lambda \to \infty}{=} \lambda^2 \nu(A) K_1(A, C) + \nu(A) K_2(A, C) \lambda^{2-\gamma} + o\left(\lambda^{2-\gamma} \right).
$$
In this case, we would obtain the risk related to the non-normalized loss without assuming that $\Pi$ is positive homogeneous.
\end{Rq}

Finally, we discuss a possible way for a company to develop an adequate model for the cost field $C$ in regions where it is still inactive. The general model for the cost field introduced in \cite{koch2017spatial}, Section 2.3, is written
\Beq
\label{Eq_Economic_Loss_Model}
\left \{ C(\Mb{x}) \right \}_{\Mb{x} \in \mathbb{R}^2} = \left \{ E(\Mb{x})\  D \left( Z(\Mb{x}) \right) \right \}_{\Mb{x} \in \mathbb{R}^2},
\Eeq
where $\{ E(\Mb{x}) \}_{\Mb{x} \in \mathbb{R}^2}$ is the exposure field, $D$ a damage function and $\{ Z(\Mb{x}) \}_{\Mb{x} \in \mathbb{R}^2}$ the random field of the environmental variable generating risk. The cost is assumed to be only due to a unique class of events, i.e., to a unique natural hazard. The latter (e.g., heat waves or hurricanes) is described by the random field of an environmental variable (e.g., the temperature or the wind speed, respectively), $Z$. We assume that $Z$ is representative of the risk during the whole period $[0, T_L]$. The application of the damage function (also referred to as vulnerability curve in the literature) $D$ to the natural hazard random field gives the destruction percentage at each location. Finally, multiplying the destruction percentage by the exposure gives the cost at each location. For more details, we refer the reader to \cite{koch2017spatial}, Section 2.3.
In order to obtain an adequate model $C$ in regions where it has no policies yet, the company can for instance consider crude estimates of the exposure field in the new region, develop a detailed statistical model\footnote{Potentially different from those developed in the natural catastrophes industry: e.g., a max-stable model.} for the environmental field $Z$ responsible of the risk insured (e.g., wind speed in the case of hurricanes) using appropriate data and apply the same damage functions as in the region it already covers. The company can then simulate from this cost model, hence obtaining an empirical distribution of the loss appearing in \eqref{Eq_Def_Norm_Loss_Function_C} and \eqref{Eq_Total_Loss}. This makes it possible to check whether the axiom of spatial sub-additivity is satisfied or not. Furthermore, if the spatial domain is large (which is generally the case for reinsurance companies), considering potential central limit theorems and determining the order of asymptotic spatial homogeneity (by checking if the conditions of Section \ref{Sec_Axioms} are satisfied) is useful as it allows the company to quantify the rate of spatial diversification.
\begin{Rq}
Strictly speaking, the terms of the insurance policies should be accounted for in Model \eqref{Eq_Economic_Loss_Model}. By the way, the latter model can be interpreted differently from what is done here. For instance, we can imagine that $Z$ represents the random field of the real cost and $D$ accounts for the terms of the policies.
\end{Rq}

\subsection{Mixing and central limit theorems for random fields}

We first remind the reader of the definition of the $\alpha$- and $\beta$-mixing coefficients which will be used in Section \ref{Sec_Axioms}.
Let $\{ X(\Mb{x}) \}_{\Mb{x} \in \mathbb{R}^d}$ be a real-valued random field. For $S \subset \mathbb{R}^d$ a closed subset, we denote by $\mathcal{F}^X_S$ the $\sigma$-field generated by the random variables $\{ X(\Mb{x}): \Mb{x} \in S \}$. Let $S_1, S_2 \subset \mathbb{R}^d$ be disjoint closed subsets.  The $\alpha$-mixing coefficient \citep[introduced by][]{rosenblatt1956central} between the $\sigma$-fields $\mathcal{F}^X_{S_1}$ and $\mathcal{F}^X_{S_2}$ is defined by
\Beq
\label{Eq_Def_Alpha_Mixing_Coefficient}
\alpha^X(S_1,S_2)=\sup\Big\{|\mathbb{P}(A\cap B)-\mathbb{P}(A)\mathbb{P}(B)|: A\in \mathcal{F}^X_{S_1}, B\in \mathcal{F}^X_{S_2} \Big\}.
\Eeq
The $\beta$-mixing coefficient or absolute regularity coefficient \citep[attributed to Kolmogorov in][]{volkonskii1959some} between the $\sigma$-fields $\mathcal{F}_{S_1}^X$ and $\mathcal{F}_{S_2}^X$ is given by
$$
\beta^X(S_1, S_2)= \frac{1}{2} \sup \left \{ \sum_{i=1}^{I} \sum_{j=1}^J | \mathbb{P}(A_i \cap B_j) - \mathbb{P}(A_i) \mathbb{P}(B_j) | \right \},
$$
where the supremum is taken over all partitions $\{ A_1,\dots , A_I \}$ and $\{ B_1, \dots , B_J \}$ of $\Omega$ with the
$A_i$'s in $\mathcal{F}_{S_1}^X$ and the $B_j$'s in $\mathcal{F}_{S_2}^X$. These coefficients satisfy the useful inequality 
\Beq
\label{Eq_Majoration_Alphamixing_With_Betamixing}
\alpha^X(S_1, S_2) \leq \frac{1}{2} \beta^X(S_1, S_2), \quad \mbox{for all } S_1, S_2 \subset \mathbb{R}^d.
\Eeq

Now, we recall the concepts of Van Hove sequence and central limit theorem (CLT) in the case of random fields. This will be useful, since, for instance, asymptotic spatial homogeneity of order $-1$ of spatial risk measures associated with VaR (at a confidence level $\alpha \in (0,1) \backslash \{ 1/2 \}$) and induced by a cost field $C \in \mathcal{C}$ is satisfied as soon as $C$ fulfills the CLT and has a constant expectation (see below).
For $V \subset \mathbb{R}^d$ and $r>0$, we introduce $V^{+r}=\{ \Mb{x} \in \mathbb{R}^d: \mathrm{dist}(\Mb{x}, V) \leq r \}$, where $\mathrm{dist}$ stands for the Euclidean distance. Additionally, we denote by $\partial V$ the boundary of $V$. A Van Hove sequence in $\mathbb{R}^d$ is a sequence $( V_n )_{n \in \mathbb{N}}$ of bounded measurable subsets of $\mathbb{R}^d$ satisfying $V_n \uparrow \mathbb{R}^d$, $\lim_{n \to \infty} \nu(V_n)=\infty$, and $\lim_{n \to \infty} \nu((\partial V_n)^{+r} )/\nu(V_n) =0 \mbox{ for all } r>0$. The assumption ``bounded'' does not always appear in the definition of a Van Hove sequence. Let Cov denote the covariance. In the following, we say that a random field $\{ X(\Mb{x}) \}_{\Mb{x} \in \mathbb{R}^d}$ such that, for all $\Mb{x} \in \mathbb{R}^d$, $\mathbb{E}\left[ [X(\Mb{x})]^2 \right]< \infty$, satisfies the CLT if
$$
\int_{\mathbb{R}^d} | \mathrm{Cov}(X(\Mb{0}), X(\Mb{x})) | \  \nu(\mathrm{d}\Mb{x}) < \infty,
$$ 
$$
\sigma_X= \left( \int_{\mathbb{R}^d} \mathrm{Cov}(X(\Mb{0}), X(\Mb{x})) \  \nu(\mathrm{d}\Mb{x}) \right)^{\frac{1}{2}}>0,
$$
and, for any Van Hove sequence $(V_n)_{n \in \mathbb{N}}$ in $\mathbb{R}^d$,
$$
\frac{1}{[\nu(V_n)]^{1/2}} \int_{V_n} (X(\Mb{x})-\mathbb{E}[X(\Mb{x})]) \  \nu(\mathrm{d}\Mb{x}) \overset{d}{\rightarrow} \mathcal{N}(0, \sigma_X^2), \quad \mbox{as } n\to\infty,
$$
where $\mathcal{N}(\mu, \sigma^2)$ denotes the normal distribution with expectation $\mu \in \mathbb{R}$ and variance $\sigma^2>0$. In the case of a random field satisfying the CLT, we have the following result.
\begin{Th}
\label{Th_Link_CLT_VanHove_CLT_Homothety}
Let $\{ C(\Mb{x}) \}_{\Mb{x} \in \mathbb{R}^2} \in \mathcal{C}$. Assume moreover that $C$ has a constant expectation (i.e., for all $\Mb{x} \in \mathbb{R}^2$, $\mathbb{E}[C(\Mb{x})]=\mathbb{E}[C(\Mb{0})]$) and satisfies the CLT. Then, we have, for all $A \in \mathcal{A}_c$, that
$$
\lambda \left( L_N(\lambda A, C) - \mathbb{E}[C(\Mb{0})] \right) \overset{d}{\to} \mathcal{N} \left( 0, \frac{\sigma_{C}^2}{\nu(A)} \right), \mbox{ for } \lambda \to \infty.
$$
\end{Th}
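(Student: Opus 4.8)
The plan is to reduce the statement to a single application of the CLT hypothesis, after rewriting the centred, rescaled loss as a Van Hove average. First I would use that in $\mathbb{R}^2$ a homothety of ratio $\lambda$ scales Lebesgue measure by $\lambda^2$, so $\nu(\lambda A) = \lambda^2 \nu(A)$, together with the constant-expectation assumption $\mathbb{E}[C(\Mb{x})] = \mathbb{E}[C(\Mb{0})]$, to obtain
$$ \lambda\left(L_N(\lambda A, C) - \mathbb{E}[C(\Mb{0})]\right) = \frac{1}{\lambda\,\nu(A)}\int_{\lambda A}\left(C(\Mb{x}) - \mathbb{E}[C(\Mb{x})]\right)\nu(\mathrm{d}\Mb{x}) = \frac{1}{[\nu(A)]^{1/2}}\cdot\frac{1}{[\nu(\lambda A)]^{1/2}}\int_{\lambda A}\left(C(\Mb{x}) - \mathbb{E}[C(\Mb{x})]\right)\nu(\mathrm{d}\Mb{x}), $$
where the last equality merely splits $1/(\lambda\,\nu(A)) = [\nu(A)]^{-1/2}[\nu(\lambda A)]^{-1/2}$. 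The integral factor on the right is precisely the quantity appearing in the definition of the CLT, evaluated on the region $\lambda A$.

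The crux is therefore to show that, along any increasing sequence $\lambda_n \uparrow \infty$, the sets $V_n = \lambda_n A$ form a Van Hove sequence, so that the CLT hypothesis applies. Since $A$ is convex with positive Lebesgue measure, its barycenter $\Mb{b}_A$ lies in its interior, so $A - \Mb{b}_A$ is convex and contains a ball $B(\Mb{0},\rho)$; this makes $\lambda \mapsto \lambda A = \Mb{b}_A + \lambda(A - \Mb{b}_A)$ monotone increasing and yields both $\bigcup_n \lambda_n A = \mathbb{R}^2$ and $\nu(\lambda_n A) = \lambda_n^2\,\nu(A) \to \infty$; the sets are compact, hence bounded and measurable. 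The one genuinely nontrivial Van Hove condition, and what I expect to be the main obstacle, is $\nu((\partial(\lambda_n A))^{+r})/\nu(\lambda_n A) \to 0$ for every $r > 0$.

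To control the boundary term I would exploit convexity through a Steiner-type estimate. For a planar convex body $K$ one has $(\partial K)^{+r} \subseteq K^{+r}\setminus K^{-r}$, and the Steiner formula gives $\nu((\partial K)^{+r}) \le 2r\,\mathrm{Per}(K) + \pi r^2$, where $\mathrm{Per}$ denotes the (finite) perimeter. Applying this to $K = \lambda_n A$ and using $\mathrm{Per}(\lambda_n A) = \lambda_n\,\mathrm{Per}(A)$ and $\nu(\lambda_n A) = \lambda_n^2\,\nu(A)$ yields
$$ \frac{\nu\left((\partial(\lambda_n A))^{+r}\right)}{\nu(\lambda_n A)} \le \frac{2r\,\lambda_n\,\mathrm{Per}(A) + \pi r^2}{\lambda_n^2\,\nu(A)}, $$
which tends to $0$ as $n \to \infty$, confirming the final Van Hove requirement.

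With $(\lambda_n A)$ recognized as a Van Hove sequence, the CLT hypothesis gives $[\nu(\lambda_n A)]^{-1/2}\int_{\lambda_n A}(C(\Mb{x}) - \mathbb{E}[C(\Mb{x})])\,\nu(\mathrm{d}\Mb{x}) \overset{d}{\to} \mathcal{N}(0, \sigma_C^2)$; multiplying by the deterministic constant $[\nu(A)]^{-1/2}$ converts the limit into $\mathcal{N}(0, \sigma_C^2/\nu(A))$, the claimed limit along increasing sequences. To upgrade from increasing sequences to the continuous limit $\lambda \to \infty$, I would invoke the subsequence principle: given an arbitrary $\lambda_n \to \infty$, every subsequence admits a strictly increasing sub-subsequence $(\mu_k)$, along which $(\mu_k A)$ is Van Hove and the convergence above holds; since every subsequence thus has a further subsequence converging in distribution to the same Gaussian law, the whole family converges, i.e. the convergence holds as $\lambda \to \infty$. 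This closes the argument, the only delicate ingredient being the convexity-based boundary estimate that makes $(\lambda_n A)$ a Van Hove sequence.
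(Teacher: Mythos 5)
Your proof is correct and follows essentially the same route as the paper: the paper's proof likewise reduces the claim to showing that $(\lambda_n A)_{n \in \mathbb{N}}$ is a Van Hove sequence for any positive non-decreasing $\lambda_n \to \infty$ (deferring the details to the proof of Theorem 4 in \cite{koch2017spatial}), applies the CLT hypothesis along such sequences, and then passes to the continuous limit $\lambda \to \infty$. You simply make explicit the ingredients the paper cites by reference, namely the convexity-based boundary estimate and the subsequence argument, and these details are sound.
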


\begin{proof}
The result is essentially based on part of the proof of Theorem 4 in \cite{koch2017spatial}. We refer the reader to this proof for details and only provide the main ideas here. First, we show \citep[see][third paragraph of the proof of Theorem 4]{koch2017spatial} that, for any $A \in \mathcal{A}_c$ and any positive non-decreasing sequence $(\lambda_n)_{n \in \mathbb{N}} \in \mathbb{R}$ such that $\lim_{n \to \infty} \lambda_n=\infty$, the sequence $(\lambda_n A)_{n \in \mathbb{N}}$ is a Van Hove sequence. Therefore, since $C$ satisfies the CLT and has a constant expectation, we obtain
$$ \lambda_n \left( L_N(\lambda_n A, C) - \mathbb{E}[C(\Mb{0})] \right) \overset{d}{\to} \mathcal{N} \left( 0, \frac{\sigma_{C}^2}{\nu(A)} \right), \mbox{ for } n \to \infty.$$
Second, we deduce \citep[see][proof of Theorem 4, after (44)]{koch2017spatial} that, for all $A \in \mathcal{A}_c$,
$$
\lambda \left( L_N(\lambda A, C) - \mathbb{E}[C(\Mb{0})] \right) \overset{d}{\to} \mathcal{N} \left( 0, \frac{\sigma_{C}^2}{\nu(A)} \right), \mbox{ for } \lambda \to \infty.
$$
This concludes the proof.
\end{proof}
This theorem will be useful in the following since it will allow us to prove asymptotic spatial homogeneity of order respectively $-2$, $-1$ and $-1$ for spatial risk measures associated with variance, VaR as well as ES and induced by a cost field satisfying the CLT and additional conditions. Moreover, if $\lambda$ is large enough, it gives an approximation of the distribution of the normalized spatially aggregated loss:
$$ L_N(\lambda A, C) \approx \mathcal{N} \left( \mathbb{E}[C(\Mb{0})],  \frac{\sigma_{C}^2}{\lambda^2 \nu(A)} \right),$$
where $\approx$ means ``approximately follows''. Such an approximation can be fruitful in practice, e.g., for an insurance company. 

\subsection{Max-stable random fields}

This concise introduction to max-stable fields is partly based on \cite{koch2017TCL}, Section 2.2. Below, ``$\bigvee$'' denotes the supremum when the latter is taken over a countable set. In any dimension $d \geq 1$, max-stable random fields are defined as follows.
\begin{Def}[Max-stable random field]
\label{Def_Maxstable_Processes}
A real-valued random field $\left \{ Z(\Mb{x}) \right \}_{\Mb{x} \in \mathbb{R}^d}$ is said to be max-stable if there exist sequences of functions 
$( a_T(\Mb{x}), \Mb{x} \in \mathbb{R}^d)_{T \geq 1}> 0$ and
$(  b_T(\Mb{x}), \Mb{x} \in \mathbb{R}^d)_{T \geq 1} \in \mathbb{R}$
such that, for all $T \geq 1$,
$$
\left \{ \frac{ \bigvee_{t=1}^T \left \{ Z_t(\Mb{x}) \right \}-b_T(\Mb{x} )}{a_T(\Mb{x} )}  \right \} _{\Mb{x} \in \mathbb{R}^d}  \overset{d}{=} \{ Z(\Mb{x}) \}_{\Mb{x} \in \mathbb{R}^d},
$$
where the $\{ Z_t(\Mb{x})\}_{\Mb{x} \in \mathbb{R}^d }, t=1, \dots, T,$ are independent replications of $Z$.
\end{Def}
A max-stable random field is termed simple if it has standard Fr\'echet margins, i.e., for all $\Mb{x} \in \mathbb{R}^d$, $\mathbb{P}(Z(\Mb{x}) < z)=\exp \left( -1/z \right), z>0$.

Now, let $\{ \tilde{T_i}(\Mb{x}) \}_{\Mb{x} \in \mathbb{R}^d}, i=1, \dots, n,$ be independent replications of a random field $ \{ \tilde{T}(\Mb{x}) \}_{\Mb{x} \in \mathbb{R}^d}$. Let 
$( c_n(\Mb{x}), \Mb{x} \in \mathbb{R}^d)_{n \geq 1} >0$ and $( d_n(\Mb{x}), \Mb{x} \in \mathbb{R}^d)_{n \geq 1} \in \mathbb{R}$ be sequences of functions. If there exists a non-degenerate random field $\{ G(\Mb{x}) \}_{\Mb{x} \in \mathbb{R}^d}$ such that
$$
\left \{ \frac{\bigvee_{i=1}^n \left \{ \tilde{T_i}(\Mb{x}) \right \} -d_n(\Mb{x})}{c_n(\Mb{x})} \right \}_{\Mb{x} \in \mathbb{R}^d} \overset{d}{\rightarrow}  \left \{ G(\Mb{x}) \right \}_{\Mb{x} \in \mathbb{R}^d}, \mbox{ for } n \to \infty,
$$
then $G$ is necessarily max-stable; see, e.g., \cite{haan1984spectral}. This explains the relevance and significance of max-stable random fields in the modelling of spatial extremes. 

Any simple max-stable random field $Z$ can be written \citep[see, e.g.,][]{haan1984spectral} as
\Beq
\label{Eq_Spectral_Representation_Stochastic_Processes}
\left \{ Z(\Mb{x}) \right \}_{\Mb{x} \in \mathbb{R}^d} \overset{d}{=} \left \{ \bigvee_{i=1}^{\infty} \{ U_i Y_i(\Mb{x}) \} \right \}_{\Mb{x} \in \mathbb{R}^d},
\Eeq
where the $(U_i)_{i \geq 1}$ are the points of a Poisson point process on $(0, \infty)$ with intensity $u^{-2} \nu(\mathrm{d}u)$ and the $Y_i, i\geq 1$, are independent replications of a random field $\{ Y(\Mb{x}) \}_{\Mb{x} \in \mathbb{R}^d}$ such that, for all $\Mb{x} \in \mathbb{R}^d$,
$\mathbb{E}[Y(\Mb{x})]=1$. The field $Y$ is not unique and is called a spectral random field of $Z$. Conversely, any random field of the form \eqref{Eq_Spectral_Representation_Stochastic_Processes} is a simple max-stable random field. Hence, \eqref{Eq_Spectral_Representation_Stochastic_Processes} enables the building up of models for max-stable fields. We now present one of the most famous among such models, the Brown--Resnick random field, which is defined in \cite{kabluchko2009stationary} as a generalization of the stochastic process introduced in \cite{brown1977extreme}. We recall that a random field $\{ W(\Mb{x}) \}_{\Mb{x} \in \mathbb{R}^d}$ is said to have stationary increments if the distribution of the random field $\{ W(\Mb{x}+\Mb{x}_0)-W(\Mb{x}_0) \}_{\Mb{x} \in \mathbb{R}^d}$ does not depend on $\Mb{x}_0 \in \mathbb{R}^d$. Provided the increments of $W$ have a finite second moment, the variogram of $W$, $\gamma_W$, is defined by 
$$ \gamma_W(\Mb{x})=\mathrm{Var}(W(\Mb{x})-W(\Mb{0})), \quad \Mb{x} \in \mathbb{R}^d,$$
where Var denotes the variance. The Brown--Resnick random field is specified as follows.
\begin{Def}[Brown--Resnick random field]
\label{Def_Brown_Resnick_Process}
Let $\{ W(\Mb{x}) \}_{\Mb{x} \in \mathbb{R}^d}$ be a centred Gaussian random field with stationary increments and with variogram $\gamma_W$. Let us consider the random field $Y$ defined by
$$ \{ Y(\Mb{x}) \}_{\Mb{x} \in \mathbb{R}^d}=\left \{ \exp \left( W(\Mb{x})-\frac{\mathrm{Var}(W(\Mb{x}))}{2} \right) \right \}_{\Mb{x} \in \mathbb{R}^d}.$$
Then the simple max-stable random field defined by \eqref{Eq_Spectral_Representation_Stochastic_Processes} with $Y$ is referred to as the Brown--Resnick random field associated with the variogram $\gamma_W$. In the following, we will call this field the Brown--Resnick random field built with $W$.\footnote{In the following, when $W$ is sample-continuous, what we refer to as the Brown--Resnick random field built with $W$ is obtained by taking replications of $W$ (see \eqref{Eq_Spectral_Representation_Stochastic_Processes}) which are also sample-continuous.}
\end{Def}
The Brown--Resnick field is stationary \citep[][Theorem 2]{kabluchko2009stationary} and its distribution only depends on the variogram \citep[][Proposition 11]{kabluchko2009stationary}. 

Now, let $(U_i, \Mb{C}_i)_{i \geq 1}$ be the points of a Poisson point process on $(0,\infty) \times \mathbb{R}^d$ with intensity function $u^{-2} \nu(\mathrm{d}u) \times \nu(\mathrm{d}\Mb{c})$. Independently, let $f_i, i \geq 1$, be independent replicates of some non-negative random function $f$ on $\mathbb{R}^d$ satisfying $\mathbb{E} \left[ \int_{\mathbb{R}^d} f(\Mb{x}) \ \nu(\mathrm{d}\Mb{x}) \right]=1$. Then, it is known that the Mixed Moving Maxima (M3) random field
\Beq
\label{Eq_Mixed_Moving_Maxima_Representation}
\{ Z(\Mb{x}) \}_{\Mb{x} \in \mathbb{R}^d}= \left \{ \bigvee_{i=1}^{\infty} \{ U_i f_i(\Mb{x}-\Mb{C}_i) \} \right \}_{\Mb{x} \in \mathbb{R}^d}
\Eeq
is a stationary and simple max-stable field. The so called Smith random field introduced by \cite{smith1990max} is a specific case of M3 random field and is defined immediately below.
\begin{Def}[Smith random field]
Let $Z$ be written as in \eqref{Eq_Mixed_Moving_Maxima_Representation} with $f$ being the density of a $d$-variate Gaussian random vector with mean $\Mb{0}$ and covariance matrix $\Sigma$. Then, the field $Z$ is referred to as the Smith random field with covariance matrix $\Sigma$.
\end{Def}
As the Brown--Resnick and Smith fields are defined using the random fields-based and M3 representations \eqref{Eq_Spectral_Representation_Stochastic_Processes} and \eqref{Eq_Mixed_Moving_Maxima_Representation}, respectively, it is usual in the spatial extremes literature to distinguish both models, although the Smith field with covariance matrix $\Sigma$ corresponds to the Brown-Resnick field associated with the variogram $\gamma_W(\Mb{x})=\Mb{x}^{'} \Sigma^{-1} \Mb{x}, \Mb{x} \in \mathbb{R}^d$, where ${}^{\prime}$ designates transposition; see, e.g., \cite{huser2013composite}.

Finally, we briefly present the extremal coefficient \citep[see, e.g.,][]{schlather2003dependence} which is a well-known measure of spatial dependence for max-stable random fields. Let $\{ Z(\Mb{x}) \}_{\Mb{x} \in \mathbb{R}^d}$ be a simple max-stable random field. In the case of two locations, the extremal coefficient function $\theta$ is defined by
$$
\mathbb{P}\left( Z(\Mb{x}_1) \leq u, Z(\Mb{x}_2) \leq u \right) = \exp \left( -\frac{\theta(\Mb{x}_1, \Mb{x}_2)}{u} \right), \quad \Mb{x}_1, \Mb{x}_2 \in \mathbb{R}^d,
$$
where $u>0$.

\section{Properties of some induced spatial risk measures}
\label{Sec_Axioms}

In this section, we provide sufficient conditions on the cost field such that some induced spatial risk measures satisfy the axioms presented in Definition \ref{Chapriskmeasures_Def_Axiomatic_Risk_Measure_Function_Cp}.  
First, we consider the case of a general cost field before investigating the relevant case of a cost field being a function of a max-stable random field. In the following, for $\alpha \in (0,1)$, $q_{\alpha}$ and $\phi$ denote the quantile at level $\alpha$ and the standard Gaussian density, respectively. We recall that for a random variable $\tilde{X}$ with distribution function $F$, its Value-at-Risk at confidence level $\alpha \in (0,1)$ is written $\mbox{VaR}_{\alpha} (\tilde{X} )=\inf \{ x \in \mathbb{R}: F(x) \geq \alpha \}$. Moreover, provided $\mathbb{E} [ | \tilde{X} |]<\infty$, its expected shortfall at confidence level $\alpha \in(0,1)$ is defined as 
$$ \mathrm{ES}_{\alpha}\left(\tilde{X}\right)= \frac{1}{1-\alpha} \int_{\alpha}^{1} \mathrm{VaR}_{u}\left(\tilde{X}\right) \  \nu(\mathrm{d}u).$$
Typical values for $\alpha$ are $0.95$ and $0.99$. It should be noted that in the actuarial literature, ES is sometimes referred to as Tail Value-at-Risk \citep[see, e.g.,][Definition 2.4.1]{denuit2005actuarial}.
In the following, we mainly consider the spatial risk measures
\begin{align*}
& \mathcal{R}_1(A,C) =\mathbb{E}[L_N(A,C)], \quad A \in \mathcal{A}, C \in \mathcal{C},
\\ & \mathcal{R}_2(A,C) =\mathrm{Var}(L_N(A,C)), \quad A \in \mathcal{A}, C \in \mathcal{C},
\\ & \mathcal{R}_{3, \alpha}(A,C) =\mathrm{VaR}_{\alpha}(L_N(A,C)), \quad A \in \mathcal{A}, C \in \mathcal{C},
\\& \mathcal{R}_{4, \alpha}(A,C)=\mathrm{ES}_{\alpha}(L_N(A,C)), \quad A \in \mathcal{A}, C \in \mathcal{C}.
\end{align*}
As a classical risk measure, the expectation is not very satisfying since it does not provide any information about variability. Moreover, as will be seen, the associated spatial risk measures do not take into account the spatial dependence of the cost field. An advantage of variance, VaR and ES lies in the fact that their associated spatial risk measures all take into account (at least) part of this spatial dependence. Historically, the variance has been the dominating risk measure in finance, primarily due to the huge influence of the portfolio theory of Markowitz which uses variance as a measure of risk. However, using the variance is only possible when the normalized spatially aggregated loss has a finite second moment. Moreover, since it allocates the same weight to positive and negative deviations from the expectation, variance is a good risk measure only for distributions which are approximately symmetric around the expectation. Currently, VaR is probably the most widely used risk measure in the finance/insurance industry. However, it does not provide any information about the severity of losses which occur with a probability lower than $1-\alpha$, which is obviously a serious shortcoming. Moreover, VaR is in general not sub-additive and hence not coherent in the sense of \cite{artzner1999coherent}. ES overcomes these two drawbacks of VaR. Pertaining to the first one, it can be seen from the fact that, if a random variable $\tilde{X}$ has a continuous distribution function, then
$$ \mathrm{ES}_{\alpha}\left(\tilde{X}\right)=\mathbb{E} \left [ \tilde{X} \Big | \tilde{X} > \mathrm{VaR}_{\alpha}(\tilde{X}) \right ].$$
Hence, the Basel Committee on Banking Supervision proposed the use of ES instead of VaR for the internal models-based approach \citep[][Section 3.2.1]{BaselCommittee2012}. However, contrary to VaR, ES is not elicitable \citep{gneiting2011making}, implying that backtesting for ES is more difficult than for VaR.

\subsection{General cost field}

Next result provides sufficient conditions on the cost field $C$ such that the induced spatial risk measure $\mathcal{R}_{1}(\cdot, C)$ satisfies the axioms presented in Definition \ref{Chapriskmeasures_Def_Axiomatic_Risk_Measure_Function_Cp}.
\begin{Th}
\label{Th_R1_General_Cost_Field}
Let $\{ C(\Mb{x}) \}_{\Mb{x} \in \mathbb{R}^2}$ be a measurable random field having a constant expectation and such that, for all $\Mb{x} \in \mathbb{R}^2$, $\mathbb{E}[|C(\Mb{x})|]=\mathbb{E}[|C(\Mb{0})|] < \infty$. Then, we have, for all $A \in \mathcal{A}$, that $\mathcal{R}_1(A, C)=\mathbb{E}[C(\Mb{0})]$. Hence, the spatial risk measure induced by $C$ $\mathcal{R}_1(\cdot, C)$ satisfies the axioms of spatial invariance under translation and spatial sub-additivity. If, moreover, $\mathbb{E}[C(\Mb{0})] \neq 0$, then $\mathcal{R}_1(\cdot, C)$ satisfies the axiom of asymptotic spatial homogeneity of order $0$ with $K_1(A, C)=0$ and $K_2(A, C)=\mathbb{E}[C(\Mb{0})], A \in \mathcal{A}_c$.
\end{Th}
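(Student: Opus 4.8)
The plan is to first establish the exact identity $\mathcal{R}_1(A, C) = \mathbb{E}[C(\Mb{0})]$ for every $A \in \mathcal{A}$, and then read off all three assertions from it. Since $\mathcal{R}_1(A, C) = \mathbb{E}[L_N(A, C)] = \mathbb{E}\!\left[ \frac{1}{\nu(A)} \int_A C(\Mb{x})\,\nu(\mathrm{d}\Mb{x}) \right]$, the core step is to interchange expectation and the spatial integral. I would justify this by Fubini's theorem: because $A$ is compact we have $\nu(A) < \infty$, and the hypothesis $\mathbb{E}[|C(\Mb{x})|] = \mathbb{E}[|C(\Mb{0})|] < \infty$ gives $\mathbb{E}\!\left[ \int_A |C(\Mb{x})|\,\nu(\mathrm{d}\Mb{x}) \right] = \int_A \mathbb{E}[|C(\Mb{x})|]\,\nu(\mathrm{d}\Mb{x}) = \nu(A)\,\mathbb{E}[|C(\Mb{0})|] < \infty$, which is exactly the absolute-integrability condition needed to swap the order of integration. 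Once the swap is licensed, the constant-expectation hypothesis yields $\mathcal{R}_1(A, C) = \frac{1}{\nu(A)} \int_A \mathbb{E}[C(\Mb{x})]\,\nu(\mathrm{d}\Mb{x}) = \frac{1}{\nu(A)}\,\nu(A)\,\mathbb{E}[C(\Mb{0})] = \mathbb{E}[C(\Mb{0})]$.

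With this identity in hand, the first two axioms are immediate. For spatial invariance under translation, both $A + \Mb{v}$ and $A$ belong to $\mathcal{A}$ and $\mathcal{R}_1$ equals the constant $\mathbb{E}[C(\Mb{0})]$ on each, so $\mathcal{R}_1(A + \Mb{v}, C) = \mathcal{R}_1(A, C)$. For spatial sub-additivity, I would first observe that $A_1 \cup A_2$ is again compact with positive Lebesgue measure, hence lies in $\mathcal{A}$; therefore $\mathcal{R}_1(A_1 \cup A_2, C) = \mathbb{E}[C(\Mb{0})] = \min\{ \mathbb{E}[C(\Mb{0})], \mathbb{E}[C(\Mb{0})] \} = \min\{ \mathcal{R}_1(A_1, C), \mathcal{R}_1(A_2, C) \}$, so the required inequality holds (with equality throughout).

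For asymptotic spatial homogeneity of order $0$, I would specialize the identity to the scaled region $\lambda A$, which lies in $\mathcal{A}_c$ for every $\lambda > 0$ since it arises from the convex compact set $A$ by a homothety. This gives $\mathcal{R}_1(\lambda A, C) = \mathbb{E}[C(\Mb{0})]$ for all $\lambda$, so the expansion of Definition \ref{Chapriskmeasures_Def_Axiomatic_Risk_Measure_Function_Cp} with $\gamma = 0$ holds with $K_1(A, C) = 0$, $K_2(A, C) = \mathbb{E}[C(\Mb{0})]$ and a remainder $o(1)$ that is identically zero. The extra assumption $\mathbb{E}[C(\Mb{0})] \neq 0$ is precisely what guarantees $K_2(A, C) \in \mathbb{R} \setminus \{0\}$, as required. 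There is no genuine obstacle in this argument: the only point demanding care is the Fubini justification in the first step, and it is settled exactly by the finiteness and constancy of $\mathbb{E}[|C(\Mb{x})|]$ together with the compactness of $A$.
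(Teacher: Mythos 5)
Your proof is correct and takes essentially the same route as the paper's: Fubini's theorem plus the constant-expectation hypothesis give $\mathcal{R}_1(A,C)=\mathbb{E}[C(\Mb{0})]$, from which all three axioms are read off immediately. The only cosmetic difference is that the paper justifies the integrability step by invoking its Proposition \ref{Eq_Sufficient_Condition_Locally_Integrable_Sample_Paths} (itself a Fubini argument), whereas you write out the bound $\mathbb{E}\bigl[\int_A |C(\Mb{x})|\,\nu(\mathrm{d}\Mb{x})\bigr]=\nu(A)\,\mathbb{E}[|C(\Mb{0})|]<\infty$ directly.
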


\begin{proof}
By assumption, the function $\Mb{x} \mapsto \mathbb{E}[|C(\Mb{x})|]$ is constant and hence obviously locally integrable. Consequently, as $C$ is measurable, Proposition \ref{Eq_Sufficient_Condition_Locally_Integrable_Sample_Paths} gives that $C$ has a.s. locally integrable sample paths. Using Fubini's theorem and the fact that $C$ has a constant expectation, we have, for all $A \in \mathcal{A}$, that
$$\mathcal{R}_1(A,C)= \dfrac{1}{\nu(A)} \displaystyle\int_{A} \mathbb{E}[C(\Mb{0})] \  \nu(\mathrm{d}\Mb{x})=\mathbb{E}[C(\Mb{0})].$$
Thus, for all $\Mb{v} \in \mathbb{R}^2$ and $A \in \mathcal{A}$, $\mathcal{R}_1(A+\Mb{v},C)=\mathcal{R}_1(A,C)$.
Moreover, for all $A_1, A_2 \in \mathcal{A}$, 
$$\mathcal{R}_1(A_1 \cup A_2 ,C)=\mathcal{R}_1(A_1,C)=\mathcal{R}_1(A_2,C)=\min \{ \mathcal{R}_1(A_1,C), \mathcal{R}_1(A_2,C) \}.$$
Finally, for all $A \in \mathcal{A}_c$ and $\lambda>0$, we have
$\mathcal{R}_1(\lambda A,C)=\mathbb{E}[C(\Mb{0})]$. As $| \mathbb{E}[C(\Mb{0})] | \leq \mathbb{E}[|C(\Mb{0})|] < \infty$, we have $|\mathbb{E}[C(\Mb{0})]| \in (0, \infty)$, which concludes the proof.
\end{proof}

Next result is a generalization of Theorem 2 in \cite{koch2017spatial} and will be useful in the following.
\begin{Th}
\label{Th_R2_General}
Let $\{ C(\Mb{x}) \}_{\Mb{x} \in \mathbb{R}^2} \in \mathcal{C}$ and such that, for all $\Mb{x} \in \mathbb{R}^2$, $\mathbb{E} \left[ [ C(\Mb{x}) ]^2 \right]< \infty$. 
Moreover, assume that, for all $A \in \mathcal{A}$,
\Beq
\label{Eq_Condition_Existence_Expectation_Product}
\int_A \int_A | \mathbb{E} \left[ C(\Mb{x}) C(\Mb{y}) \right] | \  \nu(\mathrm{d}\Mb{x}) \  \nu(\mathrm{d}\Mb{y}) < \infty.
\Eeq
Then, for all $A \in \mathcal{A}$ and $\lambda>0$, we have
$$
\mathcal{R}_2(\lambda A, C)= \frac{1}{\lambda^4 [\nu(A)]^2} \int_{\lambda A}  \int_{\lambda A} \mathrm{Cov}(C(\Mb{x}), C(\Mb{y})) \  \nu(\mathrm{d}\Mb{x})\  \nu(\mathrm{d}\Mb{y}).
$$
Condition \eqref{Eq_Condition_Existence_Expectation_Product} is satisfied for instance in the following cases:
\Ben
\item For any $A \in \mathcal{A}$, 
\Beq
\label{Eq_Condition_Sup_Second_Moment}
\sup_{\Mb{x} \in A} \left \{ \mathbb{E} \left[ [ C(\Mb{x}) ]^2 \right] \right \} < \infty.
\Eeq
\item 
For all $\Mb{x}, \Mb{y} \in \mathbb{R}^2$, 
\Beq
\label{Eq_Dependence_Covariance_x-y}
\mathrm{Cov}(C(\Mb{x}), C(\Mb{y}))=\mathrm{Cov}(C(\Mb{0}), C(\Mb{x}-\Mb{y})),
\Eeq
and
\Beq
\label{Eq_Convergence_Abs_Int_Cov_R2}
\int_{\mathbb{R}^2} | \mathrm{Cov}(C(\Mb{0}), C(\Mb{x})) | \  \nu(\mathrm{d}\Mb{x}) < \infty.
\Eeq 
\Een
\end{Th}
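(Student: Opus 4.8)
The plan is to reduce the variance of the normalized loss over $\lambda A$ to the announced normalized double integral of the covariance, the only non-trivial point being the interchange of expectation and spatial integration. First I would record that, since we work in $\mathbb{R}^2$, a homothety of ratio $\lambda$ scales area by $\lambda^2$, so $\nu(\lambda A)=\lambda^2\nu(A)$ and hence $\mathcal{R}_2(\lambda A,C)=\mathrm{Var}(L_N(\lambda A,C))=\frac{1}{\lambda^4[\nu(A)]^2}\,\mathrm{Var}\!\left(\int_{\lambda A}C(\Mb{x})\,\nu(\mathrm{d}\Mb{x})\right)$. Note that $\lambda A\in\mathcal{A}$ whenever $A\in\mathcal{A}$ (a homothety preserves compactness and scales the measure by $\lambda^2>0$), so the standing hypothesis \eqref{Eq_Condition_Existence_Expectation_Product} applies on $\lambda A$ as well. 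It then suffices to prove $\mathrm{Var}\!\left(\int_{\lambda A}C\right)=\int_{\lambda A}\int_{\lambda A}\mathrm{Cov}(C(\Mb{x}),C(\Mb{y}))\,\nu(\mathrm{d}\Mb{x})\,\nu(\mathrm{d}\Mb{y})$.

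Second, for a.e.\ $\omega$ the path $C(\cdot,\omega)$ is integrable on the compact set $\lambda A$ (recall $C\in\mathcal{C}$), so pathwise $\left(\int_{\lambda A}C(\Mb{x})\,\nu(\mathrm{d}\Mb{x})\right)^2=\int_{\lambda A}\int_{\lambda A}C(\Mb{x})C(\Mb{y})\,\nu(\mathrm{d}\Mb{x})\,\nu(\mathrm{d}\Mb{y})$ a.s. I would then take expectations and exchange $\mathbb{E}$ with the double Lebesgue integral by applying Fubini's theorem to $(\omega,\Mb{x},\Mb{y})\mapsto C(\Mb{x},\omega)C(\Mb{y},\omega)$ on $\Omega\times\lambda A\times\lambda A$. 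This yields both finiteness of the second moment (so $L_N(\lambda A,C)\in L^2\subset L^1$, making the variance well defined) and $\mathbb{E}[(\int C)^2]=\int_{\lambda A}\int_{\lambda A}\mathbb{E}[C(\Mb{x})C(\Mb{y})]$. The same argument applied to $\Mb{x}\mapsto C(\Mb{x},\omega)$ gives $\mathbb{E}[\int C]=\int_{\lambda A}\mathbb{E}[C(\Mb{x})]\,\nu(\mathrm{d}\Mb{x})$, whence $(\mathbb{E}[\int C])^2=\int_{\lambda A}\int_{\lambda A}\mathbb{E}[C(\Mb{x})]\mathbb{E}[C(\Mb{y})]$. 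Subtracting and using $\mathrm{Cov}(C(\Mb{x}),C(\Mb{y}))=\mathbb{E}[C(\Mb{x})C(\Mb{y})]-\mathbb{E}[C(\Mb{x})]\mathbb{E}[C(\Mb{y})]$ produces the claimed formula.

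I expect the Fubini interchange to be the main obstacle. The hypothesis is phrased with the absolute value \emph{outside} the expectation, whereas Fubini requires absolute integrability of $(\Mb{x},\Mb{y})\mapsto\mathbb{E}[|C(\Mb{x})C(\Mb{y})|]$ over $\lambda A\times\lambda A$; I would bridge this gap via Cauchy--Schwarz, $\mathbb{E}[|C(\Mb{x})C(\Mb{y})|]\le\left([C(\Mb{x})]^2\right)^{1/2}\cdots$ more precisely $\mathbb{E}[|C(\Mb{x})C(\Mb{y})|]\le\left(\mathbb{E}[[C(\Mb{x})]^2]\,\mathbb{E}[[C(\Mb{y})]^2]\right)^{1/2}$, and the pointwise finiteness of $\mathbb{E}[[C(\Mb{x})]^2]$, so that \eqref{Eq_Condition_Existence_Expectation_Product} is exactly the integrability license needed.

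Finally, for the two sufficient conditions. Under \eqref{Eq_Condition_Sup_Second_Moment}, Cauchy--Schwarz gives $|\mathbb{E}[C(\Mb{x})C(\Mb{y})]|\le\sup_{\Mb{z}\in A}\mathbb{E}[[C(\Mb{z})]^2]<\infty$ uniformly on $A\times A$, and integrating this constant bound over the bounded set $A\times A$ yields \eqref{Eq_Condition_Existence_Expectation_Product}. Under \eqref{Eq_Dependence_Covariance_x-y}--\eqref{Eq_Convergence_Abs_Int_Cov_R2}, I would split $\mathbb{E}[C(\Mb{x})C(\Mb{y})]=\mathrm{Cov}(C(\Mb{x}),C(\Mb{y}))+\mathbb{E}[C(\Mb{x})]\mathbb{E}[C(\Mb{y})]$; the covariance term, being a function of $\Mb{x}-\Mb{y}$ by \eqref{Eq_Dependence_Covariance_x-y}, integrates (after the substitution $\Mb{z}=\Mb{x}-\Mb{y}$ and enlarging the inner domain to $\mathbb{R}^2$) to at most $\nu(A)\int_{\mathbb{R}^2}|\mathrm{Cov}(C(\Mb{0}),C(\Mb{z}))|\,\nu(\mathrm{d}\Mb{z})<\infty$ by \eqref{Eq_Convergence_Abs_Int_Cov_R2}, while the mean term contributes $\left(\int_A|\mathbb{E}[C(\Mb{x})]|\,\nu(\mathrm{d}\Mb{x})\right)^2$, finite as soon as the expectation is locally integrable (in particular whenever it is constant or bounded on $A$, the case relevant to the companion results). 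Adding the two bounds gives \eqref{Eq_Condition_Existence_Expectation_Product}.
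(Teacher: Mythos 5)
Your proposal is correct and follows essentially the same route as the paper: expand $\left(\int_{\lambda A}C\right)^2$ into a double integral, interchange expectation and integration via Fubini under \eqref{Eq_Condition_Existence_Expectation_Product}, subtract the squared mean to get the covariance formula, and verify the two sufficient conditions by the same Cauchy--Schwarz bound and the same change of variables $\Mb{z}=\Mb{x}-\Mb{y}$, respectively. If anything you are more careful than the paper, which silently applies Fubini with the absolute value outside the expectation and omits the integrability of $\Mb{x}\mapsto\mathbb{E}[C(\Mb{x})]$ needed to pass from the covariance bound back to \eqref{Eq_Condition_Existence_Expectation_Product} in the second case; both points you flag explicitly (though your Cauchy--Schwarz bridge still tacitly requires $\int_A\bigl(\mathbb{E}\bigl[[C(\Mb{x})]^2\bigr]\bigr)^{1/2}\,\nu(\mathrm{d}\Mb{x})<\infty$, which pointwise finiteness alone does not deliver).
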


\begin{proof}
For all $A \in \mathcal{A}$, we consider $L(A, C)=\nu(A) \ L_N(A, C)$.
Thus, using Fubini's theorem and \eqref{Eq_Condition_Existence_Expectation_Product}, we obtain
\begin{align}
\mathbb{E} \left[ [ L(A, C) ]^2 \right] = \mathbb{E} \left[ \left( \int_A C(\Mb{x}) \ \nu(\mathrm{d}\Mb{x})  \right)^2 \right] &= \mathbb{E} \left[ \int_A C(\Mb{x}) \ \nu(\mathrm{d}\Mb{x}) \ \int_A C(\Mb{y}) \ \nu(\mathrm{d}\Mb{y})  \right] \nonumber
\\&= \int_A \int_A \mathbb{E} \left[ C(\Mb{x}) C(\Mb{y}) \right] \  \nu(\mathrm{d}\Mb{x}) \  \nu(\mathrm{d}\Mb{y}).
\label{Eq_Expectation_LossSquare}
\end{align}
Moreover, it is clear that 
\Beq
\label{Eq_Expectation_Loss_Square}
\left( \mathbb{E} \left[ L(A, C) \right] \right)^2 = \int_A \int_A \mathbb{E} \left[ C(\Mb{x}) \right] \mathbb{E} \left[ C(\Mb{y}) \right] \  \nu(\mathrm{d}\Mb{x}) \ \nu(\mathrm{d}\Mb{y}).
\Eeq
The combination of \eqref{Eq_Expectation_LossSquare} and \eqref{Eq_Expectation_Loss_Square} gives that
$$ \mathbb{E} \left[ [ L(A, C) ]^2 \right]- \left( \mathbb{E} \left[ L(A, C) \right] \right)^2 = \int_{A}  \int_{A} \mathrm{Cov}(C(\Mb{x}), C(\Mb{y})) \  \nu(\mathrm{d}\Mb{x})\  \nu(\mathrm{d}\Mb{y}),$$
which implies that
$$ \mathcal{R}_2(A, C)= \frac{1}{[\nu(A)]^2} \int_{A}  \int_{A} \mathrm{Cov}(C(\Mb{x}), C(\Mb{y})) \  \nu(\mathrm{d}\Mb{x})\  \nu(\mathrm{d}\Mb{y}).$$
The result is obtained by replacing $A$ with $\lambda A$.

\medskip

We now prove the second part of the theorem, concerning \eqref{Eq_Condition_Existence_Expectation_Product}. Let $A \in \mathcal{A}$. In the first case, we obtain, using Cauchy--Schwarz inequality and \eqref{Eq_Condition_Sup_Second_Moment},
\begin{align*}
& \quad \ \int_A \int_A | \mathbb{E} \left[ C(\Mb{x}) C(\Mb{y}) \right] | \  \nu(\mathrm{d}\Mb{x}) \  \nu(\mathrm{d}\Mb{y}) 
\\&\leq \int_A \int_A \left( \mathbb{E} \left[ [ C(\Mb{x}) ]^2 \right ] \right)^{\frac{1}{2}} \left( \mathbb{E} \left[ [ C(\Mb{y}) ]^2 \right ] \right)^{\frac{1}{2}} \  \nu(\mathrm{d}\Mb{x}) \  \nu(\mathrm{d}\Mb{y})
\\& \leq \int_A \int_A \left( \sup_{\Mb{x} \in A} \left \{ \mathbb{E} \left[ [ C(\Mb{x}) ]^2 \right] \right \}
\right)^{\frac{1}{2}} \left( \sup_{\Mb{x} \in A} \left \{ \mathbb{E} \left[ [ C(\Mb{y}) ]^2 \right] \right \}
\right)^{\frac{1}{2}} \  \nu(\mathrm{d}\Mb{x}) \  \nu(\mathrm{d}\Mb{y})
\\& < \infty.
\end{align*}
 
In the second case, it follows from \eqref{Eq_Dependence_Covariance_x-y} and \eqref{Eq_Convergence_Abs_Int_Cov_R2} that
\begin{align*}
\int_{A}  \int_{A} | \mathrm{Cov}(C(\Mb{x}), C(\Mb{y})) | \  \nu(\mathrm{d}\Mb{x})\  \nu(\mathrm{d}\Mb{y})
&= \int_{A}  \int_{A} |\mathrm{Cov}(C(\Mb{0}), C(\Mb{x}-\Mb{y}))| \  \nu(\mathrm{d}\Mb{x})\  \nu(\mathrm{d}\Mb{y})
\\& = \int_{A} \left [ \int_{A-\Mb{y}} |\mathrm{Cov}(C(\Mb{0}), C(\Mb{z}))| \  \nu(\mathrm{d}\Mb{z}) \right] \  \nu(\mathrm{d}\Mb{y})
\\& \leq \int_{A} \left [ \int_{\mathbb{R}^2} |\mathrm{Cov}(C(\Mb{0}), C(\Mb{z}))| \  \nu(\mathrm{d}\Mb{z}) \right] \  \nu(\mathrm{d}\Mb{y})
\\& = \nu(A) \tilde{\sigma}_C < \infty,
\end{align*}
where 
$$\tilde{\sigma}_C=\int_{\mathbb{R}^2} | \mathrm{Cov}(C(\Mb{0}), C(\Mb{x})) | \  \nu(\mathrm{d}\Mb{x}).$$
Thus, \eqref{Eq_Condition_Existence_Expectation_Product} is obviously satisfied.
\end{proof}

We recall that for a random field $\{ C(\Mb{x}) \}_{\Mb{x} \in \mathbb{R}^2}$ such that, for all $\Mb{x} \in \mathbb{R}^2$, $\mathbb{E} \left[ [C(\Mb{x})]^2 \right]< \infty$, we note
$$\sigma_{C}=\left( \int_{\mathbb{R}^2} \mathrm{Cov}(C(\Mb{0}), C(\Mb{x})) \ \nu(\mathrm{d}\Mb{x}) \right)^{\frac{1}{2}}.$$
Next theorem provides the main results of this subsection. In particular, it gives sufficient conditions on the cost field $C$ such that the induced spatial risk measures $\mathcal{R}_{2}(\cdot, C)$, $\mathcal{R}_{3, \alpha}(\cdot, C)$ and $\mathcal{R}_{4, \alpha}(\cdot, C)$ satisfy the axioms of asymptotic spatial homogeneity of order $-2$, $-1$ and $-1$, respectively.
\begin{Th}
\label{Th_Multiple_Results_General_CP}
Let $\{ C(\Mb{x}) \}_{\Mb{x} \in \mathbb{R}^2} \in \mathcal{C}$.
\Ben
\item Assume that $C$ is stationary. Then, provided it exists, any spatial risk measure associated with a law-invariant classical risk measure $\Pi$ and induced by $C$ satisfies the axiom of spatial invariance under translation.
\item Assume that $C$ is such that, for all $\Mb{x} \in \mathbb{R}^2$, 
\Beq
\label{Eq_Finite_Variance_CP}
\mathbb{E} \left[ [C(\Mb{x})]^2 \right]< \infty,
\Eeq 
and satisfies \eqref{Eq_Dependence_Covariance_x-y} and \eqref{Eq_Convergence_Abs_Int_Cov_R2}.
Then, we have, for all $A \in \mathcal{A}_c$, that
\Beq
\label{Eq_Expression_R2_CP_General}
\mathcal{R}_2(\lambda A, C) \underset{\lambda \to \infty}{=} \dfrac{\sigma_{C}^2}{\lambda^2 \nu(A)}+o \left( \dfrac{1}{\lambda^2} \right).
\Eeq
Hence, if $\sigma_{C} >0$, $\mathcal{R}_2(\cdot, C)$ satisfies the axiom of asymptotic spatial homogeneity of order $-2$ with $K_1(A, C)=0$ and $K_2(A, C)=\sigma_{C}^2/\nu(A), A \in \mathcal{A}_c$.
\item Assume that $C$ has a constant expectation and satisfies the CLT. Then, we have, for all $A \in \mathcal{A}_c$, that
\Beq
\label{Eq_Asymptotics_R3}
\mathcal{R}_{3, \alpha}(\lambda A, C) \underset{\lambda \to \infty}{=} \mathbb{E}[C(\Mb{0})] + \frac{\sigma_{C} q_{\alpha}}{\lambda [\nu(A)]^{\frac{1}{2}}} +o \left( \frac{1}{\lambda} \right).
\Eeq
Hence, if $\alpha \in (0,1) \backslash \{ 1/2 \}$, $\mathcal{R}_{3,\alpha}(\cdot, C)$ satisfies the axiom of asymptotic spatial homogeneity of order $-1$ with $K_1(A, C)=\mathbb{E}[C(\Mb{0})]$ and $K_2(A,C)=\sigma_{C} q_{\alpha}/[\nu(A)]^{\frac{1}{2}}, A \in \mathcal{A}_c$.
\item Assume that $C$ has a constant expectation, satisfies the CLT and is such that the random variables $\lambda \left( L_N(\lambda A, C)-\mathbb{E}[C(\Mb{0})] \right)$, $\lambda  > 0$, are uniformly integrable. Then, we have, for all $A \in \mathcal{A}_c$, that
\Beq
\label{Eq_Asymptotics_R4}
\mathcal{R}_{4, \alpha}(\lambda A, C) \underset{\lambda \to \infty}{=} \mathbb{E}[C(\Mb{0})] + \frac{\sigma_{C} \phi(q_{\alpha})}{\lambda [\nu(A)]^{\frac{1}{2}}(1-\alpha)} +o \left( \frac{1}{\lambda} \right).
\Eeq
Hence, $\mathcal{R}_{4,\alpha}(\cdot, C)$ satisfies the axiom of asymptotic spatial homogeneity of order $-1$ with $K_1(A, C)=\mathbb{E}[C(\Mb{0})]$ and $K_2(A,C)=\sigma_{C} \phi(q_{\alpha})/ \{ [\nu(A)]^{\frac{1}{2}}(1-\alpha) \}$, $A \in \mathcal{A}_c$.
\Een 
\end{Th}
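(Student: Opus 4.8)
The plan is to treat the four points separately. Points 1 and 2 are essentially deterministic/analytic, while Points 3 and 4 both rest on the distributional convergence furnished by Theorem \ref{Th_Link_CLT_VanHove_CLT_Homothety}. For Point 1, since $\Pi$ is law-invariant it suffices to show $L_N(A+\Mb{v}, C) \overset{d}{=} L_N(A, C)$ for every $\Mb{v} \in \mathbb{R}^2$ and $A \in \mathcal{A}$. Translating the defining integral gives $L_N(A+\Mb{v}, C) = \frac{1}{\nu(A)}\int_A C(\Mb{y}+\Mb{v}) \, \nu(\mathrm{d}\Mb{y})$, that is, the normalized aggregated loss over $A$ of the shifted field $\{ C(\cdot + \Mb{v}) \}$. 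By stationarity this shifted field has the same finite-dimensional distributions as $C$, so Theorem \ref{Th_Characterization_LN_Finite_Dimensional_Distributions_CP} yields the equality in distribution, and law-invariance of $\Pi$ concludes.

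For Point 2, I would start from the closed form of Theorem \ref{Th_R2_General}, which applies because \eqref{Eq_Dependence_Covariance_x-y}--\eqref{Eq_Convergence_Abs_Int_Cov_R2} imply \eqref{Eq_Condition_Existence_Expectation_Product}. Writing $g(\Mb{z}) = \mathrm{Cov}(C(\Mb{0}), C(\Mb{z}))$ and using \eqref{Eq_Dependence_Covariance_x-y}, the change of variable $\Mb{z} = \Mb{x} - \Mb{y}$ (together with Fubini) turns the double integral into the convolution-type expression $\int_{\mathbb{R}^2} g(\Mb{z}) \, \nu\big(\lambda A \cap (\lambda A - \Mb{z})\big) \, \nu(\mathrm{d}\Mb{z})$. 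Dividing by $\nu(\lambda A) = \lambda^2 \nu(A)$, the integrand becomes $g(\Mb{z})$ times the ratio $\nu(\lambda A \cap (\lambda A - \Mb{z}))/\nu(\lambda A)$, which is bounded by $1$ and, because $A$ is convex, converges to $1$ for each fixed $\Mb{z}$ as $\lambda \to \infty$. Dominated convergence (with dominating function $|g|$, integrable by \eqref{Eq_Convergence_Abs_Int_Cov_R2}) then gives $\lambda^2 \mathcal{R}_2(\lambda A, C) \to \sigma_{C}^2/\nu(A)$, which is exactly \eqref{Eq_Expression_R2_CP_General}; the homogeneity statement of order $-2$ follows at once when $\sigma_C > 0$.

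For Points 3 and 4, Theorem \ref{Th_Link_CLT_VanHove_CLT_Homothety} provides $Y_{\lambda} := \lambda\big(L_N(\lambda A, C) - \mathbb{E}[C(\Mb{0})]\big) \overset{d}{\to} \mathcal{N}(0, \sigma_C^2/\nu(A))$. Both $\mathrm{VaR}_\alpha$ and $\mathrm{ES}_\alpha$ are translation equivariant and positively homogeneous, so $\mathcal{R}_{3,\alpha}(\lambda A, C) = \mathbb{E}[C(\Mb{0})] + \lambda^{-1}\mathrm{VaR}_\alpha(Y_\lambda)$ and analogously for $\mathrm{ES}$. For Point 3 I would invoke the standard fact that convergence in distribution entails convergence of quantiles at every continuity point of the limiting quantile function; since the Gaussian quantile function is continuous, $\mathrm{VaR}_\alpha(Y_\lambda) \to (\sigma_C/[\nu(A)]^{1/2}) q_\alpha$, giving \eqref{Eq_Asymptotics_R3}, with $K_2(A,C) \neq 0$ precisely when $q_\alpha \neq 0$, i.e. when $\alpha \neq 1/2$.

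The main obstacle is Point 4: convergence in distribution alone does not control $\mathrm{ES}_\alpha$, which is a tail expectation rather than a functional of the distribution function at a single point. This is exactly why uniform integrability of the family $(Y_\lambda)_{\lambda > 0}$ is assumed. I would use it to upgrade the distributional convergence to convergence in the $L^1$-Wasserstein sense, equivalently $\int_0^1 |q_u(Y_\lambda) - q_u(Y)| \, \nu(\mathrm{d}u) \to 0$ with $Y \sim \mathcal{N}(0, \sigma_C^2/\nu(A))$; restricting this to the interval $[\alpha, 1]$ and dividing by $1-\alpha$ yields $\mathrm{ES}_\alpha(Y_\lambda) \to \mathrm{ES}_\alpha(Y)$. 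Finally, the explicit formula $\mathrm{ES}_\alpha(\mathcal{N}(0, \tau^2)) = \tau \, \phi(q_\alpha)/(1-\alpha)$ with $\tau = \sigma_C/[\nu(A)]^{1/2}$ produces \eqref{Eq_Asymptotics_R4}; since $\phi(q_\alpha) > 0$ for every $\alpha$, here $K_2(A,C) \neq 0$ and the axiom of asymptotic spatial homogeneity of order $-1$ holds for all $\alpha \in (0,1)$.
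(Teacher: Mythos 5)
Your proposal is correct and follows essentially the same route as the paper: Point 1 via stationarity, change of variable and law-invariance; Point 2 via the limit $\lambda^2\nu(A)\mathcal{R}_2(\lambda A,C)\to\sigma_C^2$ obtained from the covariance representation of Theorem \ref{Th_R2_General}; Points 3 and 4 via Theorem \ref{Th_Link_CLT_VanHove_CLT_Homothety} combined with, respectively, quantile convergence under weak convergence and the fact that uniform integrability upgrades weak convergence to convergence of expected shortfall. The only differences are cosmetic: for Point 2 you supply the dominated-convergence argument on $\nu(\lambda A\cap(\lambda A-\Mb{z}))/\nu(\lambda A)$ directly where the paper defers to the proof of Theorem 3 in \cite{koch2017spatial}, and for Point 4 you phrase the continuity of ES through the $L^1$-Wasserstein/quantile characterization where the paper cites \cite{wang2018characterization}.
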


\begin{proof}
1. Let $A \in \mathcal{A}$, $\Mb{v} \in \mathbb{R}^2$ and $\Pi$ be a classical risk measure.
Using the fact that $\nu(A+\Mb{v})=\nu(A)$ and a change of variable, we obtain
\Beq
\label{Eq_Prop_Translation_1}
\mathcal{R}_{\Pi}(A+\Mb{v},C)=\Pi \left( \frac{1}{\nu(A+\Mb{v})} \int_{A+\Mb{v}} C(\Mb{x}) \ \nu(\mathrm{d}\Mb{x}) \right)=\Pi \left( \frac{1}{\nu(A)} \int_{A} C(\Mb{y}+\Mb{v}) \ \nu(\mathrm{d}\Mb{y}) \right).
\Eeq
Due to the stationarity of $C$, we have, for all $\Mb{v} \in \mathbb{R}^2$, that $\{ C(\Mb{x}) \}_{\Mb{x} \in \mathbb{R}^2} \overset{d}{=} \{ C(\Mb{x}+\Mb{v}) \}_{\Mb{x} \in \mathbb{R}^2}$, yielding, since $\Pi$ is law-invariant,
\Beq
\label{Eq_Prop_Translation_2}
\Pi \left( \frac{1}{\nu(A)} \int_{A} C(\Mb{y}+\Mb{v}) \ \nu(\mathrm{d}\Mb{y}) \right) = \Pi \left( \frac{1}{\nu(A)} \int_{A} C(\Mb{x}) \ \nu(\mathrm{d}\Mb{x}) \right)=\mathcal{R}_{\Pi}(A,C).
\Eeq
The combination of \eqref{Eq_Prop_Translation_1} and \eqref{Eq_Prop_Translation_2} provides the result.

\medskip

2. As \eqref{Eq_Dependence_Covariance_x-y} and \eqref{Eq_Convergence_Abs_Int_Cov_R2} are satisfied, we know from Theorem \ref{Th_R2_General} that, for all $A \in \mathcal{A}_c$ and $\lambda >0$, $\mathcal{R}_2(\lambda A, C)$ is well-defined. The result follows from an adapted version of the proof of Theorem 3, Point 3, in \cite{koch2017spatial}. We refer the reader to this proof for the technical parts. We only highlight some of the main steps as well as the main differences here.

The first part consists in showing that
\Beq
\label{Eq_To_Obtain_Asymp_Spat_Homog_R2}
\lim_{\lambda \to \infty} \lambda^2 \nu(A) \mathcal{R}_2 (\lambda A, C) = \sigma_{C}^2.
\Eeq
Using Theorem \ref{Th_R2_General} and \eqref{Eq_Dependence_Covariance_x-y}, it follows that
$$\mathcal{R}_2(\lambda A, C)= \frac{1}{\lambda^4 [\nu(A)]^2} \int_{\lambda A}  \int_{\lambda A} \mathrm{Cov}(C(\Mb{0}), C(\Mb{x}-\Mb{y})) \  \nu(\mathrm{d}\Mb{x})\  \nu(\mathrm{d}\Mb{y}).$$
Let $A_{\lambda}=\lambda A, \lambda>0$. Then, we consider the quantity
$$T_{\lambda}=\frac{1}{\lambda^2 \nu(A)} \int_{A_{\lambda}} \int_{A_{\lambda}} k(\Mb{x}-\Mb{y})\  \  \nu(\mathrm{d}\Mb{x})\  \nu(\mathrm{d}\Mb{y}), \quad \lambda >0,$$
where
$$ k(\Mb{x})=\mathrm{Cov}(C(\Mb{0}), C(\Mb{x})), \quad \Mb{x} \in \mathbb{R}^2.$$
The next step consists in showing that 
\Beq
\label{Eq_Lim_Tlambda}
\lim_{\lambda \to \infty} T_{\lambda}= \sigma_{C}^2. 
\Eeq
For this purpose, we proceed similarly as in \cite{koch2017spatial}, proof of Theorem 3, Point 3.
The only difference consists in the fact that here $k$ is not necessarily non-negative. Hence, in order to bound $| T_{1, \lambda} |$ and $| T_{3, \lambda} |$ (these quantities are defined in \cite{koch2017spatial}) from above, $k$ must be replaced with its absolute value in the corresponding integrals. This is where Condition \eqref{Eq_Convergence_Abs_Int_Cov_R2} plays a role.
Finally, since, for all $\lambda>0$,
$$ T_{\lambda}=\lambda^2 \nu(A) \mathcal{R}_2 (\lambda A, C),$$
\eqref{Eq_To_Obtain_Asymp_Spat_Homog_R2} follows from \eqref{Eq_Lim_Tlambda}.

In a second part, we easily derive \eqref{Eq_Expression_R2_CP_General} from \eqref{Eq_To_Obtain_Asymp_Spat_Homog_R2}.
Now, as a compact subset of $\mathbb{R}^2$, $A$ is bounded, giving that $\nu(A) \in (0, \infty)$.
Since, moreover, $\sigma_{C}^2 \in (0, \infty)$, $\sigma_{C}^2/\nu(A) \in (0, \infty)$. Hence, the second part of the result follows from \eqref{Eq_Expression_R2_CP_General}.

\medskip

3. Theorem \ref{Th_Link_CLT_VanHove_CLT_Homothety} gives that, for all $A \in \mathcal{A}_c$,
$$\lambda \left( L_N(\lambda A, C) - \mathbb{E}[C(\Mb{0})] \right) \overset{d}{\to} \mathcal{N} \left( 0, \frac{\sigma_{C}^2}{\nu(A)} \right), \mbox{ for } \lambda \to \infty.$$
Hence, the fact that the quantile function of a normal random variable is continuous on $(0, 1)$, Proposition 0.1 in \cite{resnickextreme} and easy computations \citep[see][proof of Theorem 5]{koch2017spatial} yield \eqref{Eq_Asymptotics_R3}. Since $C$ satisfies the CLT, we have $\mathbb{E} \left[ [C(\Mb{0})]^2 \right]< \infty$ and thus $\mathbb{E} \left[ C(\Mb{0}) \right]< \infty$. Additionally, as $\alpha \neq 0.5$, we have $q_{\alpha} \neq 0$. Moreover, as $\sigma_{C}>0$ (because $C$ satisfies the CLT) and $\nu(A)>0$, we obtain that $\sigma_{C} q_{\alpha}/[\nu(A)]^{\frac{1}{2}} \neq 0$. Finally, since $\alpha \notin \{0, 1 \}$, $| q_{\alpha}| < \infty$. Furthermore, $\sigma_{C} < \infty$ and $\nu(A) < \infty$, giving that $| \sigma_{C} q_{\alpha}/[\nu(A)]^{\frac{1}{2}}| < \infty$. The result follows by definition.

\medskip

4. Since $C$ satisfies the CLT, we have, for all $\Mb{x} \in \mathbb{R}^2$, $\mathbb{E} \left[ [C(\Mb{x})]^2 \right]< \infty$, which implies that, for all $\Mb{x} \in \mathbb{R}^2$, $\mathbb{E} \left[ | C(\Mb{x}) | \right]< \infty$. We easily deduce, using Fubini's theorem, that $\mathbb{E}[|L_N(\lambda A, C)|]$ is finite, and, therefore, that $\mathcal{R}_{4,\alpha}(\lambda A, C)$ is well-defined for all $A \in \mathcal{A}_c$ and $\lambda >0$. Theorem \ref{Th_Link_CLT_VanHove_CLT_Homothety} gives that, for all $A \in \mathcal{A}_c$,
$$\lambda \left( L_N(\lambda A, C) - \mathbb{E}[C(\Mb{0})] \right) \overset{d}{\to} \mathcal{N} \left( 0, \frac{\sigma_{C}^2}{\nu(A)} \right), \mbox{ for } \lambda \to \infty.$$
Now, ES is known to be continuous with respect to convergence in distribution in the case of uniformly integrable random variables. For details, we refer for instance to \cite{wang2018characterization}, Theorem 3.2 and Example 2.2, Point (ii); the authors' results concern bounded random variables but the mentioned result can be extended to the case of integrable random variables. Hence, it follows from the fact that the random variables $\lambda \left( L_N(\lambda, C)-\mathbb{E}[C(\Mb{0})] \right)$, $\lambda  > 0$, are uniformly integrable, and the expression of $\mathrm{ES}_{\alpha}$ for the Gaussian distribution, that
\Beq
\label{Eq_Convergence_TVaR}
\lim_{\lambda \to \infty} \frac{1}{1-\alpha} \int_{\alpha}^{1} \mbox{VaR}_{u} ( \lambda [L_N(\lambda A, C)-\mathbb{E}[C(\Mb{0})]] ) \ \nu(\mathrm{d}u)=\frac{\sigma_{C} \phi(q_{\alpha})}{[\nu(A)]^{\frac{1}{2}}(1-\alpha)}.
\Eeq
Moreover, we have
\begin{align*}
& \quad \  \frac{1}{1-\alpha} \int_{\alpha}^{1} \mbox{VaR}_{u} ( \lambda [L_N(\lambda A, C)-\mathbb{E}[C(\Mb{0})]] ) \ \nu(\mathrm{d}u) 
\\& = \frac{1}{1-\alpha} \int_{\alpha}^{1} \lambda \left( \mbox{VaR}_{u} ( L_N(\lambda A, C))-\mathbb{E}[C(\Mb{0})] \right) \ \nu(\mathrm{d}u)
\\& =\lambda \left( \mathcal{R}_{4,\alpha}(\lambda A, C)-\mathbb{E}[C(\Mb{0})] \right).
\end{align*}
Thus, \eqref{Eq_Convergence_TVaR} gives, for all $A \in \mathcal{A}_c$,
$$\lambda \left( \mathcal{R}_{4,\alpha}(\lambda A, C)-\mathbb{E}[C(\Mb{0})] \right) \underset{\lambda \to \infty}{=} \frac{\sigma_{C} \phi(q_{\alpha})}{[\nu(A)]^{\frac{1}{2}}(1-\alpha)} + o(1),$$
which yields \eqref{Eq_Asymptotics_R4}. Now, we have $\mathbb{E} \left[ C(\Mb{0}) \right]< \infty$.
Moreover, using the fact that, for all $\alpha \in (0,1)$, $\phi(q_{\alpha}) \in (0, \infty)$, and arguments stated at the end of the proof of Point 3, we obtain $| \sigma_{C} \phi(q_{\alpha})/ \{ [\nu(A)]^{\frac{1}{2}}(1-\alpha) \} | \in (0, \infty)$. Consequently, the result follows by definition.
\end{proof}
\begin{Rq}
In  order to establish Points 3 and 4, we take advantage of the fact that both VaR and ES are continuous with respect to convergence in distribution under appropriate assumptions. Hence, similar results might hold for other classical risk measures satisfying continuity with respect to convergence in distribution.
\end{Rq}

Theorem \ref{Th_Multiple_Results_General_CP} entails the following important result.
\begin{Corr}
\label{Corr_Implication_CLT_ASH_R2}
Let $\{ C(\Mb{x}) \}_{\Mb{x} \in \mathbb{R}^2} \in \mathcal{C}$. Moreover, assume that $C$ satisfies \eqref{Eq_Dependence_Covariance_x-y} and the CLT. Then, we have, for all $A \in \mathcal{A}_c$, that
\Beq
\label{Eq_Implication_CLT_ASH_R2}
\mathcal{R}_2(\lambda A, C) \underset{\lambda \to \infty}{=} \dfrac{\sigma_{C}^2}{\lambda^2 \nu(A)}+o \left( \dfrac{1}{\lambda^2} \right).
\Eeq
Hence, $\mathcal{R}_2(\cdot, C)$ satisfies the axiom of asymptotic spatial homogeneity of order $-2$ with $K_1(A, C)=0$ and $K_2(A, C)=\sigma_{C}^2/\nu(A), A \in \mathcal{A}_c$.
\end{Corr}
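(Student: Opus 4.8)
The plan is to obtain this as an immediate consequence of Theorem \ref{Th_Multiple_Results_General_CP}, Point 2, whose conclusion \eqref{Eq_Expression_R2_CP_General} is word for word the desired asymptotic expansion \eqref{Eq_Implication_CLT_ASH_R2}. The only thing to do is to verify that the hypotheses of that point follow from the assumptions made here. The corollary posits \eqref{Eq_Dependence_Covariance_x-y} directly, so it remains to extract the finite-second-moment condition \eqref{Eq_Finite_Variance_CP} and the absolute integrability of the covariance \eqref{Eq_Convergence_Abs_Int_Cov_R2} from the assumption that $C$ satisfies the CLT, as well as the strict positivity $\sigma_{C}>0$ that is needed to invoke the axiom itself.

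First I would appeal to the definition of the CLT stated just before Theorem \ref{Th_Link_CLT_VanHove_CLT_Homothety}. Being eligible to satisfy the CLT already presupposes $\mathbb{E}\left[ [C(\Mb{x})]^2 \right]<\infty$ for all $\Mb{x} \in \mathbb{R}^2$, which is exactly \eqref{Eq_Finite_Variance_CP}. Furthermore, the CLT definition explicitly demands $\int_{\mathbb{R}^2} | \mathrm{Cov}(C(\Mb{0}), C(\Mb{x})) | \ \nu(\mathrm{d}\Mb{x})<\infty$, which is precisely \eqref{Eq_Convergence_Abs_Int_Cov_R2}, together with $\sigma_{C}=\left( \int_{\mathbb{R}^2} \mathrm{Cov}(C(\Mb{0}), C(\Mb{x})) \ \nu(\mathrm{d}\Mb{x}) \right)^{1/2}>0$.

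With these three facts in hand, and since \eqref{Eq_Dependence_Covariance_x-y} is assumed, all the hypotheses of Theorem \ref{Th_Multiple_Results_General_CP}, Point 2, are met, so \eqref{Eq_Expression_R2_CP_General} holds for every $A \in \mathcal{A}_c$; this is nothing other than \eqref{Eq_Implication_CLT_ASH_R2}. Because the CLT guarantees $\sigma_{C}>0$, the concluding (``Hence'') part of Point 2 applies verbatim, giving that $\mathcal{R}_2(\cdot, C)$ satisfies the axiom of asymptotic spatial homogeneity of order $-2$ with $K_1(A,C)=0$ and $K_2(A,C)=\sigma_{C}^2/\nu(A)$.

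I do not expect any genuine obstacle, since the argument reduces to citing an already-established theorem once the CLT hypothesis is unpacked. The only step that warrants care is confirming that satisfying the CLT indeed subsumes both \eqref{Eq_Finite_Variance_CP} and \eqref{Eq_Convergence_Abs_Int_Cov_R2} and yields $\sigma_{C}>0$ automatically, so that none of these needs a separate verification.
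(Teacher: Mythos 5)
Your proposal is correct and matches the paper's own proof exactly: the paper likewise observes that satisfying the CLT (as defined before Theorem \ref{Th_Link_CLT_VanHove_CLT_Homothety}) entails \eqref{Eq_Finite_Variance_CP}, \eqref{Eq_Convergence_Abs_Int_Cov_R2} and $\sigma_C>0$, and then invokes Theorem \ref{Th_Multiple_Results_General_CP}, Point 2. No gaps.
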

\begin{proof}
Since $C$ satisfies the CLT, it satisfies \eqref{Eq_Convergence_Abs_Int_Cov_R2}, \eqref{Eq_Finite_Variance_CP} and $\sigma_C>0$. Thus, the result follows from Theorem \ref{Th_Multiple_Results_General_CP}, Point 2.
\end{proof}

Next result provides a convenient condition ensuring the uniform integrability required in Theorem \ref{Th_Multiple_Results_General_CP}, Point 4.  
\begin{Prop}
\label{Prop_Condition_Uniform_Integrability}
Let $\{ C(\Mb{x}) \}_{\Mb{x} \in \mathbb{R}^2} \in \mathcal{C}$. Assume moreover that $C$ has a constant expectation and satisfies the CLT. If $C$ satisfies \eqref{Eq_Dependence_Covariance_x-y}, then the random variables 
$\lambda \left( L_N(\lambda A, C)-\mathbb{E}[C(\Mb{0})] \right)$, $\lambda  > 0$, are uniformly integrable.
\end{Prop}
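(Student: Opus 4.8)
The plan is to prove something stronger than uniform integrability, namely $L^2$-boundedness of the family, and then invoke the elementary fact that $\sup_{\lambda>0}\mathbb{E}[Y_\lambda^2]<\infty$ forces uniform integrability. Write $Y_\lambda=\lambda\left(L_N(\lambda A,C)-\mathbb{E}[C(\Mb{0})]\right)$. Since $C$ has a constant expectation, Fubini's theorem gives $\mathbb{E}[L_N(\lambda A,C)]=\mathbb{E}[C(\Mb{0})]$, so each $Y_\lambda$ is centred and
$$\mathbb{E}\left[Y_\lambda^2\right]=\lambda^2\,\mathrm{Var}\left(L_N(\lambda A,C)\right)=\lambda^2\,\mathcal{R}_2(\lambda A,C).$$
The task thus reduces to bounding $\lambda^2\mathcal{R}_2(\lambda A,C)$ uniformly in $\lambda>0$.

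First I would treat large $\lambda$. The hypotheses here (that $C\in\mathcal{C}$ satisfies \eqref{Eq_Dependence_Covariance_x-y} and the CLT) are precisely those of Corollary \ref{Corr_Implication_CLT_ASH_R2}, which yields $\mathcal{R}_2(\lambda A,C)=\sigma_C^2/(\lambda^2\nu(A))+o(1/\lambda^2)$ and hence $\lambda^2\mathcal{R}_2(\lambda A,C)\to\sigma_C^2/\nu(A)<\infty$. Consequently there is $\Lambda>0$ with $\sup_{\lambda\geq\Lambda}\mathbb{E}[Y_\lambda^2]<\infty$.

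Next I would handle the bounded range $0<\lambda\leq\Lambda$ by a crude but uniform bound on the variance itself. Since the CLT supplies both finite second moments and \eqref{Eq_Convergence_Abs_Int_Cov_R2}, the second case of Theorem \ref{Th_R2_General} applies and gives
$$\mathcal{R}_2(\lambda A,C)=\frac{1}{[\nu(\lambda A)]^2}\int_{\lambda A}\int_{\lambda A}k(\Mb{x}-\Mb{y})\ \nu(\mathrm{d}\Mb{x})\ \nu(\mathrm{d}\Mb{y}),\qquad k(\Mb{z})=\mathrm{Cov}(C(\Mb{0}),C(\Mb{z})).$$
Covariance stationarity \eqref{Eq_Dependence_Covariance_x-y} makes the variance constant, $\mathrm{Var}(C(\Mb{z}))=k(\Mb{0})$ for all $\Mb{z}$, so Cauchy--Schwarz gives $|k(\Mb{z})|\leq k(\Mb{0})$ and therefore $\mathcal{R}_2(\lambda A,C)\leq k(\Mb{0})$ for every $\lambda>0$. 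In particular $\mathbb{E}[Y_\lambda^2]\leq\lambda^2 k(\Mb{0})\leq\Lambda^2 k(\Mb{0})$ on the bounded range. Combining the two regimes yields $M:=\sup_{\lambda>0}\mathbb{E}[Y_\lambda^2]<\infty$.

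Finally, to pass from $L^2$-boundedness to uniform integrability I would use the direct estimate, valid for every $K>0$ and every $\lambda>0$,
$$\mathbb{E}\left[|Y_\lambda|\,\mathbb{I}_{\{|Y_\lambda|>K\}}\right]\leq\frac{1}{K}\,\mathbb{E}\left[Y_\lambda^2\right]\leq\frac{M}{K},$$
so that $\sup_{\lambda>0}\mathbb{E}[|Y_\lambda|\,\mathbb{I}_{\{|Y_\lambda|>K\}}]\to 0$ as $K\to\infty$, which is exactly uniform integrability. The hard part is not the large-$\lambda$ asymptotics, which come essentially for free from Corollary \ref{Corr_Implication_CLT_ASH_R2}, but securing control on the remaining bounded range of $\lambda$: the covariance-bounded-by-variance estimate $|k|\leq k(\Mb{0})$, available precisely because \eqref{Eq_Dependence_Covariance_x-y} forces a constant variance, is what makes this step work and is the only point requiring genuine care.
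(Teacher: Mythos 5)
Your proof is correct, but it takes a genuinely different route from the paper's. The paper argues via weak convergence: writing $M_\lambda = \lambda\left(L_N(\lambda A, C)-\mathbb{E}[C(\Mb{0})]\right)$, Theorem \ref{Th_Link_CLT_VanHove_CLT_Homothety} gives $M_\lambda \overset{d}{\to} M \sim \mathcal{N}\left(0,\sigma_C^2/\nu(A)\right)$, hence $M_\lambda^2 \overset{d}{\to} M^2$ by the continuous mapping theorem; combining this with $\mathbb{E}\left[M_\lambda^2\right] = \lambda^2\mathcal{R}_2(\lambda A, C) \to \mathbb{E}\left[M^2\right]$ (the same limit you extract from Corollary \ref{Corr_Implication_CLT_ASH_R2}) and Theorem 3.6 of \cite{billingsley1999convergence}, it concludes that the $M_\lambda^2$, and therefore the $M_\lambda$, are uniformly integrable. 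You instead establish the stronger statement $\sup_{\lambda>0}\mathbb{E}\left[M_\lambda^2\right]<\infty$ directly and finish with the elementary estimate $\mathbb{E}\left[|M_\lambda|\,\mathbb{I}_{\{|M_\lambda|>K\}}\right]\leq \mathbb{E}\left[M_\lambda^2\right]/K$. Both routes lean on Corollary \ref{Corr_Implication_CLT_ASH_R2} for the large-$\lambda$ regime; what your argument adds is the explicit control of the bounded range $0<\lambda\leq\Lambda$ via $|\mathrm{Cov}(C(\Mb{0}),C(\Mb{z}))|\leq \mathrm{Var}(C(\Mb{0}))$, which is legitimate because \eqref{Eq_Dependence_Covariance_x-y} makes the variance constant and because the hypotheses of Theorem \ref{Th_R2_General} (second case) are supplied by the CLT. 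This is a genuine merit: uniform integrability must hold over the whole family $\lambda>0$, whereas the convergence-of-moments criterion invoked in the paper is a sequential statement and only speaks to $\lambda\to\infty$, so the small-$\lambda$ regime is left implicit there. In short, your version is more elementary (no continuous mapping theorem, no appeal to \cite{billingsley1999convergence}) and more explicit about uniformity in $\lambda$; the paper's version is shorter given the machinery and avoids any pointwise bound on the covariance, but at the cost of that implicit step.
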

\begin{proof}
Let, for $\lambda>0$, $M_{\lambda}= \lambda \left( L_N(\lambda A, C) - \mathbb{E}[C(\Mb{0})] \right)$. Theorem \ref{Th_Link_CLT_VanHove_CLT_Homothety} gives that, for all $A \in \mathcal{A}_c$,
$M_{\lambda} \overset{d}{\to} M, \mbox{ for } \lambda \to \infty$, where $M \sim \mathcal{N} \left( 0, \sigma_{C}^2/\nu(A) \right)$. Therefore, by the continuous mapping theorem, we obtain
\Beq
\label{Eq_Convergence_Square}
M_{\lambda}^2 \overset{d}{\to} M^2, \mbox{ for } \lambda \to \infty.
\Eeq
Now, it is clear that, for all $\lambda >0$, $\mathrm{Var}(M_{\lambda})=\lambda^2 \mathcal{R}_2(\lambda A, C)$. Hence, it follows from \eqref{Eq_Implication_CLT_ASH_R2} that 
$\mathrm{Var}(M_{\lambda}) \underset{\lambda \to \infty}{\to} \sigma_{C}^2/\nu(A)$, which gives, since for all $\lambda >0$, $\mathbb{E}[M_{\lambda}] = 0$, that $\mathbb{E} \left[ M_{\lambda}^2 \right] \underset{\lambda \to \infty}{\to} \mathbb{E} \left[ M^2 \right]$.
Additionally, $M^2$ is non-negative and integrable. Furthermore, the $M_{\lambda}^2$ are non-negative and, for all $\lambda>0$, $\mathbb{E} \left[ M_{\lambda}^2 \right]=\lambda^2 \mathcal{R}_2(\lambda A, C)$, which is finite according to Theorem \ref{Th_R2_General} as \eqref{Eq_Condition_Existence_Expectation_Product} is satisfied. Therefore, the $M_{\lambda}^2$ are integrable. Consequently, using \eqref{Eq_Convergence_Square} and Theorem 3.6 in \cite{billingsley1999convergence}, we know that the random variables $M_{\lambda}^2$, $\lambda>0$, are uniformly integrable. This directly yields that the random variables $M_{\lambda}$, $\lambda>0$, are uniformly integrable.
\end{proof}

\subsection{Cost field being a function of a max-stable random field}
\label{Subsec_Costfield_Function_Maxstable}

We now consider a cost field model written as in \eqref{Eq_Economic_Loss_Model}, i.e.,
\Beq
\label{Eq_Economic_Loss_Model_Max_Stable}
\left \{ C(\Mb{x}) \right \}_{\Mb{x} \in \mathbb{R}^2} = \left \{ E(\Mb{x})\  D \left( Z(\Mb{x}) \right) \right \}_{\Mb{x} \in \mathbb{R}^2},
\Eeq
where $Z$ is max-stable and the exposure is uniformly equal to unity. The relevance of using max-stable random fields has been previously highlighted. 

In the following, all theorems and corollaries assume $Z$ to be simple, although max-stable fields fitted to real data have generalized extreme-value (GEV) univariate marginal distributions with location, scale and shape parameters $\eta \in \mathbb{R}$, $\tau >0$ and $\xi \in \mathbb{R}$. However, this does not cause any loss of generality. If $\{ Z(\Mb{x}) \}_{\Mb{x} \in \mathbb{R}^2}$ is a max-stable field with such GEV parameters, we can write
\Beq
Z(\Mb{x}) = 
\left \{
\begin{array}{ll}
\eta + \tau (\tilde{Z}(\Mb{x})^{\xi}-1)/\xi & \mbox{if} \quad \xi \neq 0, \\
\eta + \tau \log( \tilde{Z}(\Mb{x})) & \mbox{if} \quad \xi=0,
\end{array}
\quad \Mb{x} \in \mathbb{R}^2,
\right.
\Eeq 
where $\{ \tilde{Z}(\Mb{x}) \}_{\Mb{x} \in \mathbb{R}^2}$ is simple max-stable. Thus, there exists a function $D_1$ such that $Z(\Mb{x})=D_1(\tilde{Z}(\Mb{x}))$ and Model \eqref{Eq_Economic_Loss_Model_Max_Stable} can be written $C(\Mb{x}) = \tilde{D}(\tilde{Z}(\Mb{x}))$, where $\tilde{Z}$ is simple max-stable and $\tilde{D} = D \circ D_1$, with ``$\circ$'' denoting function composition. 
On $(0, \infty)$, for any $\xi \neq 0$ the transformation $\tilde{z} \mapsto \eta + \tau (\tilde{z}^{\xi}-1)/\xi$ is increasing and the same holds for the transformation $\tilde{z} \mapsto \eta + \tau \log(\tilde{z})$, implying that $D_1$ is increasing. Most often, the damage function $D$ is also increasing (e.g, the higher the wind speed, temperature or rainfall amount, the higher the cost) and thus the same is true for $\tilde{D}=D \circ D_1$. Consequently, the requirement in Corollaries \ref{Corr_Sigma2_Positive}--\ref{Corr_R3_Orderminus1_BR_Generalvariogram} (see below) on the function applied to the simple max-stable field to be non-decreasing and non-constant is generally satisfied in the applications motivating the present work.

For the sake of notational simplicity, in the following, we denote by $Z$ (instead of $\tilde{Z}$) the simple max-stable field and by $D$ (instead of $\tilde{D}$) the quantity $D \circ D_1$. Accordingly, the reader should pay attention to the fact that $Z$ models the standardized environmental field (and not the real one) and $D$ consists in the composition of the marginal transformation of $Z$ and the damage function.

We first give sufficient conditions on the function $D$ and the field $Z$ such that the spatial risk measure $\mathcal{R}_1(\cdot, D(Z))$ induced by the cost field $D(Z)$ satisfies the axioms presented in Definition \ref{Chapriskmeasures_Def_Axiomatic_Risk_Measure_Function_Cp}.
\begin{Corr}
Let $\{ Z(\Mb{x}) \}_{\Mb{x} \in \mathbb{R}^2}$ be a simple max-stable random field and $D$ a measurable function such that $\{ C(\Mb{x}) \}_{\Mb{x} \in \mathbb{R}^2}=\{ D(Z(\Mb{x})) \}_{\Mb{x} \in \mathbb{R}^2} \in \mathcal{C}$ and $\mathbb{E}[|C(\Mb{0})|] < \infty$. Then, for all $A \in \mathcal{A}$, $\mathcal{R}_1(A, C)=\mathbb{E}[C(\Mb{0})]$. Hence, $\mathcal{R}_1(\cdot, C)$ satisfies the axioms of spatial invariance under translation and spatial sub-additivity. If, moreover, $\mathbb{E}[C(\Mb{0})] \neq 0$, then $\mathcal{R}_1(\cdot, C)$ satisfies the axiom of asymptotic spatial homogeneity of order $0$ with $K_1(A, C)=0$ and $K_2(A, C)=\mathbb{E}[C(\Mb{0})], A \in \mathcal{A}_c$.
\end{Corr}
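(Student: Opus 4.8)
The plan is to recognize this statement as an immediate specialization of Theorem \ref{Th_R1_General_Cost_Field} to the case where the cost field is a fixed measurable function of a simple max-stable field. The only substantive work is to verify that the three hypotheses of that theorem—measurability, a spatially constant and finite absolute first moment, and a constant expectation—are inherited \emph{for free} from the simplicity of $Z$. No new asymptotic analysis is needed, since Theorem \ref{Th_R1_General_Cost_Field} already delivers all three conclusions (the constant value $\mathbb{E}[C(\Mb{0})]$, invariance and sub-additivity, and order-$0$ asymptotic spatial homogeneity).

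First I would exploit the defining property of a \emph{simple} max-stable field: by definition, for every $\Mb{x} \in \mathbb{R}^2$ the margin satisfies $\mathbb{P}(Z(\Mb{x}) < z) = \exp(-1/z)$, $z>0$, so all one-dimensional margins of $Z$ coincide (they are standard Fr\'echet). Since $D$ is a single fixed measurable map, pushing each $Z(\Mb{x})$ forward through $D$ produces random variables $C(\Mb{x}) = D(Z(\Mb{x}))$ that are identically distributed across $\Mb{x}$. This is the crux of the argument, and the point I would flag as requiring the only genuine care: one does not need full stationarity of $Z$, merely equality of its univariate margins, which simplicity guarantees automatically. Consequently the assumption $\mathbb{E}[|C(\Mb{0})|] < \infty$ propagates to $\mathbb{E}[|C(\Mb{x})|] = \mathbb{E}[|C(\Mb{0})|] < \infty$ for every $\Mb{x}$, and likewise $\mathbb{E}[C(\Mb{x})] = \mathbb{E}[C(\Mb{0})]$, so $C$ has a constant expectation. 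Measurability (and a.s.\ locally integrable sample paths) is supplied directly by the hypothesis $C \in \mathcal{C}$.

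With these observations in hand, every premise of Theorem \ref{Th_R1_General_Cost_Field} is satisfied, and I would simply invoke it verbatim. It yields $\mathcal{R}_1(A, C) = \mathbb{E}[C(\Mb{0})]$ for all $A \in \mathcal{A}$, whence the axioms of spatial invariance under translation and spatial sub-additivity follow, and—under the extra assumption $\mathbb{E}[C(\Mb{0})] \neq 0$—the axiom of asymptotic spatial homogeneity of order $0$ with $K_1(A,C)=0$ and $K_2(A,C)=\mathbb{E}[C(\Mb{0})]$ for $A \in \mathcal{A}_c$. There is no real obstacle: the entire content of the corollary reduces to the elementary but indispensable remark that simplicity of the max-stable field delivers identically distributed pointwise costs, and hence the constant-expectation hypothesis that Theorem \ref{Th_R1_General_Cost_Field} demands.
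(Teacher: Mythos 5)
Your proposal is correct and follows exactly the paper's own route: observe that simplicity of $Z$ forces identical (standard Fr\'echet) univariate margins, hence $C(\Mb{x})=D(Z(\Mb{x}))$ has a spatially constant first absolute moment and constant expectation, and then invoke Theorem \ref{Th_R1_General_Cost_Field} verbatim. No discrepancy to report.
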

\begin{proof}
Since $Z$ has identical margins, for all $\Mb{x} \in \mathbb{R}^2$, $\mathbb{E} \left[ |C(\Mb{x})| \right]=\mathbb{E} \left[ |C(\Mb{0})| \right]$. Therefore, the result directly follows from Theorem \ref{Th_R1_General_Cost_Field}.
\end{proof}

The result below gives sufficient conditions on $D$ and $Z$ such that the spatial risk measure $\mathcal{R}_2(\cdot, D(Z))$ induced by the cost field $D(Z)$ satisfies the axiom of asymptotic spatial homogeneity of order $-2$.
\begin{Th}
Let $\{ Z(\Mb{x}) \}_{\Mb{x} \in \mathbb{R}^2}$ be a simple and sample-continuous max-stable random field and $D$ a measurable function such that 
$\{ C(\Mb{x}) \}_{\Mb{x} \in \mathbb{R}^2}=\{ D(Z(\Mb{x})) \}_{\Mb{x} \in \mathbb{R}^2} \in \mathcal{C}$ and such that there exist $p, q >0$ satisfying $2/p+1/q=1$ such that
\Beq
\label{Eq_Condition_Expectation}
\mathbb{E} \left[ |C(\Mb{0})|^p \right] < \infty
\Eeq
and 
\Beq
\label{Eq_Condition_Theta}
\int_{\mathbb{R}^2} [2-\theta(\Mb{0}, \Mb{x})]^{\frac{1}{q}} \  \nu(\mathrm{d}\Mb{x}) < \infty,
\Eeq 
where $\theta$ is the extremal coefficient function of $Z$. Then, we have
$$ \int_{\mathbb{R}^2} | \mathrm{Cov}(C(\Mb{0}), C(\Mb{x}))| \ \nu(\mathrm{d}\Mb{x}) < \infty.$$
Additionally, assume that $C$ satisfies \eqref{Eq_Dependence_Covariance_x-y} and $\sigma_{C}>0$. Then $\mathcal{R}_2(\cdot, C)$ satisfies the axiom of asymptotic spatial homogeneity of order $-2$ with $K_1(A,C)=0$ and $K_2(A,C)=\sigma_{C}^2/\nu(A)$, $A \in \mathcal{A}_c$.
\end{Th}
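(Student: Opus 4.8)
The plan is to obtain the integrability of the covariance by controlling $|\mathrm{Cov}(C(\Mb{0}), C(\Mb{x}))|$ pointwise by a power of $2-\theta(\Mb{0},\Mb{x})$ and then integrating against Condition \eqref{Eq_Condition_Theta}. Concretely, I would aim to produce a constant $c>0$, depending only on $p$, $q$ and the law of $C(\Mb{0})$, such that
\[
|\mathrm{Cov}(C(\Mb{0}), C(\Mb{x}))| \leq c\,\big(2-\theta(\Mb{0},\Mb{x})\big)^{1/q}, \qquad \Mb{x}\in\mathbb{R}^2,
\]
after which $\int_{\mathbb{R}^2}|\mathrm{Cov}(C(\Mb{0}),C(\Mb{x}))|\,\nu(\mathrm{d}\Mb{x})<\infty$ is immediate from \eqref{Eq_Condition_Theta}. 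The second assertion would then follow at once: the constraint $2/p+1/q=1$ with $q>0$ forces $p>2$, so \eqref{Eq_Condition_Expectation} yields $\mathbb{E}[[C(\Mb{0})]^2]<\infty$, that is \eqref{Eq_Finite_Variance_CP}; combined with the assumed \eqref{Eq_Dependence_Covariance_x-y}, the just-proved \eqref{Eq_Convergence_Abs_Int_Cov_R2} and $\sigma_C>0$, Theorem \ref{Th_Multiple_Results_General_CP}, Point 2, delivers asymptotic spatial homogeneity of order $-2$ with $K_1(A,C)=0$ and $K_2(A,C)=\sigma_C^2/\nu(A)$.

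For the pointwise bound I would use a covariance inequality of Davydov type. Since $Z$ is simple, all the $C(\Mb{x})=D(Z(\Mb{x}))$ share one distribution, so $\|C(\Mb{x})\|_p=\|C(\Mb{0})\|_p=(\mathbb{E}[|C(\Mb{0})|^p])^{1/p}<\infty$ by \eqref{Eq_Condition_Expectation}. Applying the inequality to the pair $(C(\Mb{0}),C(\Mb{x}))$ with the three exponents $p$, $p$, $q$ (legitimate precisely because $2/p+1/q=1$) gives $|\mathrm{Cov}(C(\Mb{0}),C(\Mb{x}))|\leq c_0\,\|C(\Mb{0})\|_p^2\,[\alpha^C(\{\Mb{0}\},\{\Mb{x}\})]^{1/q}$ for a numerical constant $c_0$, where $\alpha^C$ is the $\alpha$-mixing coefficient of $C$. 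Because $D$ is measurable, $\mathcal{F}^C_{\{\Mb{x}\}}=\sigma(D(Z(\Mb{x})))\subseteq\sigma(Z(\Mb{x}))=\mathcal{F}^Z_{\{\Mb{x}\}}$, and as $\alpha$ is monotone in the underlying $\sigma$-fields, $\alpha^C(\{\Mb{0}\},\{\Mb{x}\})\leq\alpha^Z(\{\Mb{0}\},\{\Mb{x}\})$. It thus suffices to bound the $\alpha$-mixing coefficient of the max-stable field $Z$ itself between the two locations.

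The crux — and the step I expect to be the main obstacle — is the estimate $\alpha^Z(\{\Mb{0}\},\{\Mb{x}\})\leq c'\,(2-\theta(\Mb{0},\Mb{x}))$ for some universal $c'$. Here I would appeal to known mixing estimates for max-infinitely divisible (in particular max-stable) random fields, in the spirit of Dombry and Eyi-Minko, for which the sample-continuity of $Z$ provides exactly the regularity required. Via \eqref{Eq_Majoration_Alphamixing_With_Betamixing} it is enough to control the $\beta$-mixing coefficient, which for a pair of points is the total-variation distance between the bivariate law of $(Z(\Mb{0}),Z(\Mb{x}))$ and the product of its margins; expressing the exponent function through a spectral measure $H$, the dependence is captured by $2-\theta(\Mb{0},\Mb{x})=\int\min(w_1,w_2)\,H(\mathrm{d}w)$, which vanishes exactly at independence and linearly dominates this total-variation distance. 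Combining this with the Davydov bound gives $|\mathrm{Cov}(C(\Mb{0}),C(\Mb{x}))|\leq c_0\,(c')^{1/q}\,\|C(\Mb{0})\|_p^2\,(2-\theta(\Mb{0},\Mb{x}))^{1/q}$, the desired pointwise estimate; integrating over $\mathbb{R}^2$ and invoking \eqref{Eq_Condition_Theta} settles the first assertion, and the second follows as explained above.
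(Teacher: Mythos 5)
Your proposal is correct and follows essentially the same route as the paper's proof: Davydov's inequality with exponents $p,p,q$, the inclusion $\mathcal{F}^C_{\{\Mb{x}\}} \subseteq \mathcal{F}^Z_{\{\Mb{x}\}}$ giving $\alpha^C \leq \alpha^Z$, the bound $\alpha^Z(\{\Mb{0}\},\{\Mb{x}\}) \leq 2[2-\theta(\Mb{0},\Mb{x})]$ via \eqref{Eq_Majoration_Alphamixing_With_Betamixing} and the Dombry--Eyi-Minko $\beta$-mixing estimate (Corollary 2.2 there), and finally $p>2$ together with Theorem \ref{Th_Multiple_Results_General_CP}, Point 2. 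The only cosmetic difference is that the paper simply cites the published corollary for the key mixing bound rather than sketching its spectral-measure derivation.
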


\begin{proof}
Since $Z$ has identical margins, \eqref{Eq_Condition_Expectation} yields that, for all $\Mb{x} \in \mathbb{R}^2$, $\mathbb{E} \left[ |C(\Mb{x})|^p \right] < \infty$. Thus, using the fact that $2/p+1/q=1$, Davydov's inequality \citep[][Equation (2.2)]{davydov1968convergence} gives that 
\Beq
\label{Eq_Davydov_Inequality}
| \mathrm{Cov}(C(\Mb{0}), C(\Mb{x})) | \leq 12 \left[ \alpha^{C}(\{ \Mb{0} \}, \{ \Mb{x} \}) \right]^{\frac{1}{q}} \left( \mathbb{E} \left[ |C(\Mb{0})|^{p} \right] \right)^{\frac{1}{p}} \left( \mathbb{E} \left[ |C(\Mb{x})|^{p} \right] \right)^{\frac{1}{p}}.
\Eeq
For all $\Mb{x} \in \mathbb{R}^2$, since $D$ is measurable, $C(\Mb{x})=D(Z(\Mb{x}))$ is $\mathcal{F}^{Z}_{ \{ \Mb{x} \} }$-measurable. Hence, $\mathcal{F}^{C}_{ \{ \Mb{x} \} } \subset \mathcal{F}^{Z}_{ \{ \Mb{x} \} }$, which gives by \eqref{Eq_Def_Alpha_Mixing_Coefficient} that, for all $\Mb{x} \in \mathbb{R}^2$,
\Beq
\label{Eq_Majoration_Alpha_Mixing_X_With_Alpha_Mixing_Z}
\alpha^{C} \left( \{ \Mb{0} \}, \{ \Mb{x} \} \right) \leq \alpha^Z \left( \{ \Mb{0} \}, \{ \Mb{x} \} \right).
\Eeq
Now, using \eqref{Eq_Majoration_Alphamixing_With_Betamixing} and Corollary 2.2 in \cite{dombry2012strong}, we obtain that, for all $\Mb{x} \in \mathbb{R}^2$,
\Beq
\label{Eq_Majoration_Alpha_Mixing_Z}
\alpha^Z \left( \{ \Mb{0} \}, \{ \Mb{x} \} \right) \leq 2 [2-\theta(\Mb{0}, \Mb{x})].
\Eeq
Thus, the combination of \eqref{Eq_Majoration_Alpha_Mixing_X_With_Alpha_Mixing_Z} and \eqref{Eq_Majoration_Alpha_Mixing_Z} gives that
$$ \alpha^{C} \left( \{ \Mb{0} \}, \{ \Mb{x} \} \right) \leq 2 [2-\theta(\Mb{0}, \Mb{x})].$$
Consequently, \eqref{Eq_Davydov_Inequality} gives that
$$ | \mathrm{Cov}(C(\Mb{0}), C(\Mb{x})) | \leq 12 \ 2^{\frac{1}{q}} \left( \mathbb{E} \left[ |C(\Mb{0})|^{p} \right] \mathbb{E} \left[ |C(\Mb{x})|^{p} \right] \right)^{\frac{1}{p}} [2-\theta(\Mb{0}, \Mb{x})]^{\frac{1}{q}}.$$
Therefore, using \eqref{Eq_Condition_Expectation} and \eqref{Eq_Condition_Theta}, we obtain
$$\displaystyle \int_{\mathbb{R}^2} | \mathrm{Cov}(C(\Mb{0}), C(\Mb{x}))| \ \nu(\mathrm{d}\Mb{x}) < \infty.$$
Since $p,q >0$ and $2/p+1/q=1$, we have $p>2$. Consequently, for all $\Mb{x} \in \mathbb{R}^2$, $\mathbb{E} \left[ [C(\Mb{x})]^2 \right]<\infty$. Thus, Theorem \ref{Th_Multiple_Results_General_CP}, Point 2, gives the result.
\end{proof}

Until the end, the following results provide sufficient conditions on $D$ and $Z$ such that the induced spatial risk measures $\mathcal{R}_{2}(\cdot, D(Z))$, $\mathcal{R}_{3, \alpha}(\cdot, D(Z))$ and $\mathcal{R}_{4, \alpha}(\cdot, D(Z))$ satisfy the axioms of asymptotic spatial homogeneity of order $-2$, $-1$ and $-1$, respectively. In order to establish them, we take advantage of the results in \cite{koch2017TCL} about the existence of a CLT for functions of stationary max-stable random fields. Let $\mathcal{B}(\mathbb{R})$ and $\mathcal{B}((0, \infty))$ be the Borel $\sigma$-fields on $\mathbb{R}$ and $(0, \infty)$, respectively. For $\Mb{h}=(h_1, h_2)^{\prime} \in \mathbb{Z}^2$, we adopt the notation $[\Mb{h}, \Mb{h}+1]=[h_1,h_1+1] \times [h_2, h_2+1]$. Next theorem considers a general simple, stationary and sample-continuous max-stable random field.
\begin{Th}
\label{Th_R3_Max_Stable}
Let $\{ Z(\Mb{x}) \}_{\Mb{x} \in \mathbb{R}^2}$ be a simple, stationary and sample-continuous max-stable random field and $D$ be a measurable function from $((0, \infty), \mathcal{B}((0,\infty)))$ to $(\mathbb{R},\mathcal{B}(\mathbb{R}))$ satisfying
\Beq
\label{Eq_Assumption_F}
\mathbb{E}\left[ |D(Z(\Mb{0}))|^{2+\delta} \right]<\infty,
\Eeq
for some $\delta>0$. Furthermore, assume that, for all $\Mb{h} \in \mathbb{Z}^2$, 
$$
\mathbb{E} \left[ \min \left \{ \sup_{\Mb{x} \in [0,1]^2} \{ Y(\Mb{x}) \},   \sup_{\Mb{x} \in [\Mb{h},\Mb{h}+1]}  \{ Y(\Mb{x}) \} \right \} \right] \leq K \| \Mb{h} \|^{-b},
$$
for some $K>0$, $b > 2 \max  \left \{ 2, (2+\delta)/\delta \right \}$ and where $\{ Y(\Mb{x}) \}_{\Mb{x} \in \mathbb{R}^2}$ is a spectral random field of $Z$ (see \eqref{Eq_Spectral_Representation_Stochastic_Processes}).
Let $\{ C(\Mb{x}) \}_{\Mb{x} \in \mathbb{R}^2}= \{  D(Z(\Mb{x})) \}_{\Mb{x} \in \mathbb{R}^2}$.
Then, if $\sigma_{C}>0$:
\Ben
\item $\mathcal{R}_2(\cdot, C)$ satisfies the axiom of asymptotic spatial homogeneity of order $-2$ with $K_1(A, C)=0$ and $K_2(A, C)=\sigma_{C}^2/\nu(A), A \in \mathcal{A}_c$.
\item For all $\alpha \in (0,1) \backslash \{ 1/2 \}$, $\mathcal{R}_{3,\alpha}(\cdot, C)$ satisfies the axiom of asymptotic spatial homogeneity of order $-1$ with $K_1(A,C)=\mathbb{E}[C(\Mb{0})]$ and $K_2(A,C)=\sigma_{C} q_{\alpha}/[\nu(A)]^{\frac{1}{2}}$, $A \in \mathcal{A}_c$.
\item For all $\alpha \in (0,1)$, $\mathcal{R}_{4,\alpha}(\cdot, C)$ satisfies the axiom of asymptotic spatial homogeneity of order $-1$ with $K_1(A,C)=\mathbb{E}[C(\Mb{0})]$ and $K_2(A,C)=\sigma_{C} \phi(q_{\alpha})/ \{ [\nu(A)]^{\frac{1}{2}}(1-\alpha) \}$, $A \in \mathcal{A}_c$.
\Een
\end{Th}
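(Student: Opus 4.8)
The plan is to reduce the three assertions to the general-cost-field results of this section, so that the only genuine work is to verify that $C=D(Z)$ meets their hypotheses. First I would note that, since $Z$ is stationary and $D$ is a fixed measurable map applied pointwise, the field $C=\{D(Z(\Mb{x}))\}_{\Mb{x}\in\mathbb{R}^2}$ is stationary; consequently it has a constant expectation and its covariance depends only on the lag, i.e.\ \eqref{Eq_Dependence_Covariance_x-y} holds. Because $Z$ is sample-continuous and $D$ is measurable, $C$ is jointly measurable, and \eqref{Eq_Assumption_F} together with stationarity makes $\Mb{x}\mapsto\mathbb{E}[|C(\Mb{x})|]$ constant, hence locally integrable; Proposition \ref{Eq_Sufficient_Condition_Locally_Integrable_Sample_Paths} then guarantees $C\in\mathcal{C}$, so that $L_N(\cdot,C)$ is well defined. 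Moreover \eqref{Eq_Assumption_F} with $2+\delta>2$ gives $\mathbb{E}[[C(\Mb{0})]^2]<\infty$, and hence $\mathbb{E}[[C(\Mb{x})]^2]<\infty$ for all $\Mb{x}$ by stationarity.

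The crux is to show that $C$ satisfies the CLT in the sense introduced just before Theorem \ref{Th_Link_CLT_VanHove_CLT_Homothety}, and this is exactly where the second hypothesis enters. The bound on $\mathbb{E}[\min\{\sup_{\Mb{x}\in[0,1]^2}Y(\Mb{x}),\sup_{\Mb{x}\in[\Mb{h},\Mb{h}+1]}Y(\Mb{x})\}]$ controls, for the decomposition of $Z$ into unit cells, the dependence between blocks at lag $\Mb{h}$, and therefore the $\beta$- (and, via \eqref{Eq_Majoration_Alphamixing_With_Betamixing}, the $\alpha$-) mixing coefficients of $Z$; by the inclusion $\mathcal{F}^C_S\subset\mathcal{F}^Z_S$ already exploited in \eqref{Eq_Majoration_Alpha_Mixing_X_With_Alpha_Mixing_Z}, the same polynomial rate bounds the mixing coefficients of $C$. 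I would then invoke the central limit theorem for functions of stationary max-stable random fields of \cite{koch2017TCL}, whose hypotheses are precisely \eqref{Eq_Assumption_F} and the present spectral-field bound, the exponent $b>2\max\{2,(2+\delta)/\delta\}$ being calibrated so that, via a Davydov-type covariance inequality, the covariance is absolutely integrable over $\mathbb{R}^2$ and the mixing series demanded by the two-dimensional random-field CLT converges. This delivers both \eqref{Eq_Convergence_Abs_Int_Cov_R2} and the Van Hove convergence of the normalized integral to $\mathcal{N}(0,\sigma_{C}^2)$; positivity of $\sigma_{C}$ is assumed.

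Once the CLT is in hand, the three points are immediate. For Point 1, $C$ satisfies \eqref{Eq_Dependence_Covariance_x-y} and the CLT, so Corollary \ref{Corr_Implication_CLT_ASH_R2} yields \eqref{Eq_Implication_CLT_ASH_R2} and hence asymptotic spatial homogeneity of order $-2$. For Point 2, $C$ has a constant expectation and satisfies the CLT, so Theorem \ref{Th_Multiple_Results_General_CP}, Point 3, gives \eqref{Eq_Asymptotics_R3} for every $\alpha\in(0,1)\backslash\{1/2\}$. For Point 3, I would first apply Proposition \ref{Prop_Condition_Uniform_Integrability}---whose hypotheses (constant expectation, CLT, and \eqref{Eq_Dependence_Covariance_x-y}) are now all verified---to obtain the uniform integrability of the family $\lambda(L_N(\lambda A,C)-\mathbb{E}[C(\Mb{0})])$, $\lambda>0$, and then Theorem \ref{Th_Multiple_Results_General_CP}, Point 4, to get \eqref{Eq_Asymptotics_R4}.

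The decisive step is clearly the CLT, and its difficulty is essentially imported: one must translate the spectral-field tail bound into a mixing rate for the block decomposition of $C$ and check that this rate, balanced against the available $(2+\delta)$-moment, meets the requirements of a random-field CLT---the content borrowed from \cite{koch2017TCL}. Everything downstream is bookkeeping matching the already-proven general results. It is worth flagging that no monotonicity or continuity of $D$ is used here, only measurability, the moment bound \eqref{Eq_Assumption_F}, and the spectral condition.
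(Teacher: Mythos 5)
Your proposal is correct and follows essentially the same route as the paper's proof: verify that $C\in\mathcal{C}$ via measurability, stationarity and Proposition \ref{Eq_Sufficient_Condition_Locally_Integrable_Sample_Paths}, invoke Theorem 2 of \cite{koch2017TCL} to get the CLT for $C$, and then conclude via Corollary \ref{Corr_Implication_CLT_ASH_R2}, Theorem \ref{Th_Multiple_Results_General_CP} (Points 3 and 4) and Proposition \ref{Prop_Condition_Uniform_Integrability}. The extra discussion of how the spectral bound translates into mixing rates is accurate but belongs to the cited CLT's proof rather than to this one.
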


\begin{proof}
Since $Z$ is sample-continuous, it is measurable. Thus, the function $D$ being measurable from $((0, \infty), \mathcal{B}((0,\infty)))$ to $(\mathbb{R},\mathcal{B}(\mathbb{R}))$, we obtain that $C$ is measurable. Moreover, it follows from the stationarity of $C$ (due to the stationarity of $Z$) and Condition \eqref{Eq_Assumption_F} that, for all $\Mb{x} \in \mathbb{R}^2$, $\mathbb{E}\left[ |C(\Mb{x})| \right]=\mathbb{E}\left[ |C(\Mb{0})| \right]<\infty$. Therefore, the function $\Mb{x} \mapsto \mathbb{E}[|C(\Mb{x})|]$ is constant and hence obviously locally integrable. Consequently, Proposition \ref{Eq_Sufficient_Condition_Locally_Integrable_Sample_Paths} gives that $C$ has a.s. locally integrable sample paths. Therefore, $C \in \mathcal{C}$.

Furthermore, the assumptions enable us to apply Theorem 2 in \cite{koch2017TCL}. The latter yields that the random field $C$ satisfies the CLT. Finally, since $C$ is stationary, it satisfies \eqref{Eq_Dependence_Covariance_x-y} and has a constant expectation. Hence, Corollary \ref{Corr_Implication_CLT_ASH_R2} gives the first result. The second result follows from Theorem \ref{Th_Multiple_Results_General_CP}, Point 3. The combination of Proposition \ref{Prop_Condition_Uniform_Integrability} and Point 4 in Theorem \ref{Th_Multiple_Results_General_CP} yields the third result.
\end{proof}

Theorem \ref{Th_R3_Max_Stable} directly entails the following result.
\begin{Corr}
\label{Corr_Sigma2_Positive}
Let $Z$, $D$ and $C$ be as in Theorem \ref{Th_R3_Max_Stable} (but without assuming that $\sigma_C>0$). Moreover, assume that $D$ is non-decreasing and non-constant. 
Then:
\Ben
\item $\mathcal{R}_2(\cdot, C)$ satisfies the axiom of asymptotic spatial homogeneity of order $-2$ with $K_1(A, C)=0$ and $K_2(A, C)=\sigma_{C}^2/\nu(A), A \in \mathcal{A}_c$.
\item For all $\alpha \in (0,1) \backslash \{ 1/2 \}$, $\mathcal{R}_{3,\alpha}(\cdot, C)$ satisfies the axiom of asymptotic spatial homogeneity of order $-1$ with $K_1(A,C)=\mathbb{E}[C(\Mb{0})]$ and $K_2(A,C)=\sigma_{C} q_{\alpha}/[\nu(A)]^{\frac{1}{2}}$, $A \in \mathcal{A}_c$.
\item For all $\alpha \in (0,1)$, $\mathcal{R}_{4,\alpha}(\cdot, C)$ satisfies the axiom of asymptotic spatial homogeneity of order $-1$ with $K_1(A,C)=\mathbb{E}[C(\Mb{0})]$ and $K_2(A,C)=\sigma_{C} \phi(q_{\alpha})/ \{ [\nu(A)]^{\frac{1}{2}}(1-\alpha) \}$, $A \in \mathcal{A}_c$.
\Een
\end{Corr}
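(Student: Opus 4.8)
The plan is to notice that this corollary differs from Theorem \ref{Th_R3_Max_Stable} only in that the hypothesis $\sigma_C>0$ is dropped and replaced by the requirement that $D$ be non-decreasing and non-constant; every other assumption of Theorem \ref{Th_R3_Max_Stable} is kept verbatim. Hence it suffices to establish that, under these new hypotheses, one still has $\sigma_C>0$, after which the three conclusions follow by a direct application of Theorem \ref{Th_R3_Max_Stable}. Recall that $\sigma_C^2=\int_{\mathbb{R}^2}\mathrm{Cov}(C(\Mb{0}),C(\Mb{x}))\,\nu(\mathrm{d}\Mb{x})$, and that the absolute convergence of this integral is already ensured by the assumptions retained from Theorem \ref{Th_R3_Max_Stable} (these are precisely the ones permitting the application of Theorem 2 in \cite{koch2017TCL}). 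So the entire task reduces to proving that this integral is strictly positive.

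First I would show that the integrand is non-negative everywhere. Since $Z$ is max-stable, every finite-dimensional marginal, in particular the pair $(Z(\Mb{0}),Z(\Mb{x}))$, has a bivariate max-stable law, and such distributions are positively associated (a classical property of multivariate extreme-value distributions; see, e.g., \cite{resnickextreme}). As $D$ is non-decreasing, the maps $(z_1,z_2)\mapsto D(z_1)$ and $(z_1,z_2)\mapsto D(z_2)$ are both coordinatewise non-decreasing, so association yields $\mathrm{Cov}(C(\Mb{0}),C(\Mb{x}))=\mathrm{Cov}(D(Z(\Mb{0})),D(Z(\Mb{x})))\geq 0$ for every $\Mb{x}\in\mathbb{R}^2$.

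Second I would show that the integrand is strictly positive on a whole neighbourhood of $\Mb{0}$, which combined with the non-negativity above forces $\sigma_C^2>0$. At $\Mb{x}=\Mb{0}$ the integrand equals $\mathrm{Var}(C(\Mb{0}))=\mathrm{Var}(D(Z(\Mb{0})))$; since $Z(\Mb{0})$ is standard Fr\'echet (continuous law with full support on $(0,\infty)$) and $D$ is non-decreasing and non-constant, one can choose $0<a<b$ with $D(a)<D(b)$ and observe that $\{Z(\Mb{0})\leq a\}$ and $\{Z(\Mb{0})\geq b\}$ both have positive probability, so $D(Z(\Mb{0}))$ is non-degenerate and $\mathrm{Var}(C(\Mb{0}))>0$. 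To propagate this to a neighbourhood I would prove $\lim_{\Mb{x}\to\Mb{0}}\mathrm{Cov}(C(\Mb{0}),C(\Mb{x}))=\mathrm{Var}(C(\Mb{0}))$: sample-continuity of $Z$ gives $Z(\Mb{x})\to Z(\Mb{0})$ a.s.; a non-decreasing $D$ has at most countably many discontinuities and $Z(\Mb{0})$ has a continuous law, so $D$ is continuous at $Z(\Mb{0})$ a.s., whence $D(Z(\Mb{x}))\to D(Z(\Mb{0}))$ a.s.; finally, the moment bound \eqref{Eq_Assumption_F} together with stationarity makes the products $D(Z(\Mb{0}))D(Z(\Mb{x}))$ uniformly integrable (their $(1+\delta/2)$-th moments are uniformly bounded by Cauchy--Schwarz), while the product of means $\mathbb{E}[C(\Mb{0})]\,\mathbb{E}[C(\Mb{x})]$ is constant in $\Mb{x}$ by stationarity, so the covariances converge to $\mathrm{Var}(C(\Mb{0}))$. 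The limit being strictly positive, there is a ball $B(\Mb{0},r)$ of positive Lebesgue measure on which $\mathrm{Cov}(C(\Mb{0}),C(\Mb{x}))>\tfrac12\,\mathrm{Var}(C(\Mb{0}))>0$.

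Combining the two steps, the integrand of $\sigma_C^2$ is $\geq 0$ everywhere and $>0$ on a set of positive measure, so $\sigma_C^2>0$, i.e.\ $\sigma_C>0$; Theorem \ref{Th_R3_Max_Stable} then applies and yields the three asymptotic spatial homogeneity statements. I expect the main obstacle to be the continuity argument for the covariance at $\Mb{0}$: carefully handling the possible jumps of $D$ (exploiting that they form a null set for the continuous law of $Z(\Mb{0})$) and justifying the interchange of limit and expectation through uniform integrability rather than naive dominated convergence. Invoking association correctly, for the bivariate max-stable marginal and with the right monotonicity of the test functions, is the other point requiring care.
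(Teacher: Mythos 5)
Your proposal is correct, and its overall skeleton matches the paper's: reduce everything to showing $\sigma_C>0$ and then invoke Theorem \ref{Th_R3_Max_Stable} verbatim. The difference lies entirely in how that single fact is obtained. The paper's proof is a one-line citation of Proposition 1 in \cite{koch2017TCL}, which asserts exactly that $\sigma_C>0$ for a non-decreasing, non-constant $D$ applied to a simple, stationary, sample-continuous max-stable field. You instead reprove this from scratch: non-negativity of $\mathrm{Cov}(C(\Mb{0}),C(\Mb{x}))$ everywhere via association of bivariate max-stable laws together with the monotonicity of $D$, strict positivity of $\mathrm{Var}(C(\Mb{0}))$ from non-constancy of $D$ and the full support of the Fr\'echet margin, and propagation to a neighbourhood of the origin by combining sample-continuity of $Z$, the a.s. continuity of $D$ at $Z(\Mb{0})$ (the jump set of a monotone function being countable, hence null for the continuous Fr\'echet law), and uniform integrability of the products supplied by the $(2+\delta)$-moment bound \eqref{Eq_Assumption_F} and stationarity. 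All of these steps are sound, and the two points you flag as delicate (handling the jumps of $D$ and justifying the limit interchange via uniform integrability rather than domination) are indeed the right places to be careful. What the paper's route buys is brevity and consistency with its systematic delegation of CLT-related facts to \cite{koch2017TCL}; what yours buys is a self-contained argument that makes visible \emph{why} the monotonicity and non-constancy of $D$ are the operative hypotheses, essentially reconstructing the content of the cited proposition.
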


\begin{proof}
Proposition 1 in \cite{koch2017TCL} gives that $\sigma_{C}>0$. Therefore, Theorem \ref{Th_R3_Max_Stable} yields the result.
\end{proof}

The next results concern the Brown--Resnick and Smith max-stable random fields. The Brown--Resnick model is of high practical interest since, owing to its flexibility, it appears as one of the best (if not the best) models among currently available max-stable models, at least for environmental data; see, e.g., \citet[][Section 7.4]{davison2012statistical}, in the case of rainfall.
\begin{Th}
\label{Th_R3_Brown_Resnick_Power_Variogram}
Let $\{ Z(\Mb{x}) \}_{\Mb{x} \in \mathbb{R}^2}$ be the Brown--Resnick random field associated with the variogram $\gamma_W(\Mb{x})=m \| \Mb{x} \|^{\psi}$, where $m >0$ and $\psi \in (0,2]$, or the Smith random field with covariance matrix $\Sigma$, and $D$ be as in Theorem \ref{Th_R3_Max_Stable}. Let $\{ C(\Mb{x}) \}_{\Mb{x} \in \mathbb{R}^2}= \{ D(Z(\Mb{x})) \}_{\Mb{x} \in \mathbb{R}^2}$. Then, if $\sigma_{C}>0$:
\Ben
\item $\mathcal{R}_2(\cdot, C)$ satisfies the axiom of asymptotic spatial homogeneity of order $-2$ with $K_1(A, C)=0$ and $K_2(A, C)=\sigma_{C}^2/\nu(A), A \in \mathcal{A}_c$.
\item For all $\alpha \in (0,1) \backslash \{ 1/2 \}$, $\mathcal{R}_{3,\alpha}(\cdot, C)$ satisfies the axiom of asymptotic spatial homogeneity of order $-1$ with $K_1(A,C)=\mathbb{E}[C(\Mb{0})]$ and $K_2(A,C)=\sigma_{C} q_{\alpha}/[\nu(A)]^{\frac{1}{2}}$, $A \in \mathcal{A}_c$.
\item For all $\alpha \in (0,1)$, $\mathcal{R}_{4,\alpha}(\cdot, C)$ satisfies the axiom of asymptotic spatial homogeneity of order $-1$ with $K_1(A,C)=\mathbb{E}[C(\Mb{0})]$ and $K_2(A,C)=\sigma_{C} \phi(q_{\alpha})/ \{ [\nu(A)]^{\frac{1}{2}}(1-\alpha) \}$, $A \in \mathcal{A}_c$.
\Een
\end{Th}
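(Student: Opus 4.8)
The plan is to show that the two fields listed --- the Brown--Resnick field with power variogram $\gamma_W(\mathbf{x}) = m\|\mathbf{x}\|^{\psi}$ and the Smith field with covariance matrix $\Sigma$ --- both fall within the scope of Theorem \ref{Th_R3_Max_Stable}, after which Points 1--3 of the present statement follow \emph{verbatim} from Points 1--3 of that theorem. Since $D$ is already assumed to be as in Theorem \ref{Th_R3_Max_Stable}, the integrability requirement $\mathbb{E}[|D(Z(\mathbf{0}))|^{2+\delta}] < \infty$ for some $\delta > 0$ is in force and fixes the exponent $\delta$; moreover $\sigma_C > 0$ is assumed here exactly as in Theorem \ref{Th_R3_Max_Stable}. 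Thus the only thing left is to verify that each field is a simple, stationary, sample-continuous max-stable field satisfying the spectral decay condition of that theorem.

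First I would record the structural properties, which are routine. Simplicity and stationarity of the Brown--Resnick field follow from \cite{kabluchko2009stationary}, and sample-continuity is ensured by the footnote convention (replications of a sample-continuous $W$); the analogous properties of the Smith field follow from its M3 representation with smooth Gaussian spectral density $f$. These place us squarely within the framework of Theorem \ref{Th_R3_Max_Stable}.

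The genuine content is the spectral decay condition: for a spectral field $Y$ of $Z$ and all $\mathbf{h} \in \mathbb{Z}^2$,
$$ \mathbb{E}\left[\min\left\{\sup_{\mathbf{x} \in [0,1]^2} \{Y(\mathbf{x})\}, \ \sup_{\mathbf{x} \in [\mathbf{h}, \mathbf{h}+1]} \{Y(\mathbf{x})\}\right\}\right] \leq K\|\mathbf{h}\|^{-b} $$
for some $K > 0$ and some $b > 2\max\{2, (2+\delta)/\delta\}$. I would obtain this from the companion paper \cite{koch2017TCL}, which controls precisely this quantity for exactly these two models. The underlying reason it holds is that for both fields the pairwise tail dependence decays faster than any polynomial: writing $\theta$ for the extremal coefficient function and using $\theta(\mathbf{0},\mathbf{x}) = 2\Phi(\sqrt{\gamma_W(\mathbf{x})}/2)$ together with the Gaussian tail bound, $2 - \theta(\mathbf{0}, \mathbf{x})$ behaves like $\exp(-m\|\mathbf{x}\|^{\psi}/8)$ for Brown--Resnick and like $\exp(-\mathbf{x}^{\prime}\Sigma^{-1}\mathbf{x}/8)$ for Smith. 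Hence the left-hand side above decays super-polynomially in $\|\mathbf{h}\|$, so the required power-law bound holds with the prescribed finite exponent $b$ for $\|\mathbf{h}\|$ large, the finitely many remaining lattice points being absorbed into the constant $K$.

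The hard part will be this decay estimate on the expected minimum of the two suprema. A naive Cauchy--Schwarz bound is useless: for Brown--Resnick the spectral field $Y(\mathbf{x}) = \exp(W(\mathbf{x}) - \mathrm{Var}(W(\mathbf{x}))/2)$ is \emph{not} stationary, and the expected supremum of $Y$ over a distant unit box does not vanish, so the decay cannot come from the individual factors; it must be extracted from the minimum itself, i.e. from the fact that the two suprema cannot both be large on the event surviving the $\min$. This delicate Gaussian computation is the step I would delegate to \cite{koch2017TCL}. Once the spectral decay condition is established for the chosen $\delta$, all hypotheses of Theorem \ref{Th_R3_Max_Stable} hold, and its three conclusions yield the three asserted properties.
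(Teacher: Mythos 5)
Your proposal is correct in substance but follows a genuinely different route from the paper. You treat Theorem \ref{Th_R3_Max_Stable} as a black box and spend your effort verifying its hypotheses for the two models --- in particular the spectral decay condition $\mathbb{E}[\min\{\sup_{[0,1]^2} Y, \sup_{[\Mb{h},\Mb{h}+1]} Y\}] \leq K\|\Mb{h}\|^{-b}$ --- delegating that Gaussian estimate to \cite{koch2017TCL}. The paper never touches this condition for these models: it bypasses Theorem \ref{Th_R3_Max_Stable} entirely and instead cites the ready-made, model-specific central limit theorems of \cite{koch2017TCL} (Theorem 3 for the Brown--Resnick field with power variogram, Theorem 4 for the Smith field) to conclude directly that $C$ satisfies the CLT, after checking sample-continuity (hence $C\in\mathcal{C}$ via Proposition \ref{Eq_Sufficient_Condition_Locally_Integrable_Sample_Paths}), stationarity, \eqref{Eq_Dependence_Covariance_x-y} and constancy of the expectation; it then reruns the same downstream chain used inside the proof of Theorem \ref{Th_R3_Max_Stable}, namely Corollary \ref{Corr_Implication_CLT_ASH_R2} for Point 1, Theorem \ref{Th_Multiple_Results_General_CP}, Point 3 for Point 2, and Proposition \ref{Prop_Condition_Uniform_Integrability} combined with Theorem \ref{Th_Multiple_Results_General_CP}, Point 4 for Point 3. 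The two routes are logically equivalent up to one level of indirection (the model-specific CLTs in \cite{koch2017TCL} are themselves obtained from decay estimates of the kind you describe), but the paper's route is shorter and avoids two soft spots in yours: first, your extremal-coefficient heuristic only controls the pointwise quantity $\mathbb{E}[\min\{Y(\Mb{0}),Y(\Mb{h})\}]=2-\theta(\Mb{0},\Mb{h})$, not the minimum of suprema over unit boxes, so the actual bound still has to be imported wholesale; second, for the Smith field the spectral field $Y$ in representation \eqref{Eq_Spectral_Representation_Stochastic_Processes} is not simply the Gaussian density (the M3 representation \eqref{Eq_Mixed_Moving_Maxima_Representation} lives on a different point process), so ``smooth Gaussian spectral density'' glosses over a genuine reformulation that the paper's citation of Theorem 4 in \cite{koch2017TCL} sidesteps. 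What your approach buys is uniformity: everything is funnelled through a single general theorem, which would extend more transparently to any other max-stable model for which the decay condition can be checked.
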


\begin{proof}
We start with the proof in the case of the Brown--Resnick field.
As previously mentioned, the Brown--Resnick random field is stationary. Thus, $C$ is stationary and hence satisfies \eqref{Eq_Dependence_Covariance_x-y} and has a constant expectation. Moreover, we can see from the proof of Theorem 3 in \cite{koch2017TCL} that $Z$ is sample-continuous. Consequently, the same arguments as in the proof of Theorem \ref{Th_R3_Max_Stable} yield that $C \in \mathcal{C}$. Furthermore, Theorem 3 in \cite{koch2017TCL} gives that $C$ satisfies the CLT. Therefore, Corollary \ref{Corr_Implication_CLT_ASH_R2} yields the first result. The second result follows from Theorem \ref{Th_Multiple_Results_General_CP}, Point 3. The combination of Proposition \ref{Prop_Condition_Uniform_Integrability} and Point 4 in Theorem \ref{Th_Multiple_Results_General_CP} gives the third result.

The Smith random field is stationary as an instance of M3 random field. Thus, $C$ is stationary and consequently satisfies \eqref{Eq_Dependence_Covariance_x-y} and has a constant expectation. Moreover, as the Smith field is sample-continuous, the same arguments as in the proof of Theorem \ref{Th_R3_Max_Stable} yield that $C \in \mathcal{C}$. Additionally, Theorem 4 in \cite{koch2017TCL} gives that $C$ satisfies the CLT. Therefore, Corollary \ref{Corr_Implication_CLT_ASH_R2} yields the first result. The second result follows from Theorem \ref{Th_Multiple_Results_General_CP}, Point 3. The combination of Proposition \ref{Prop_Condition_Uniform_Integrability} and Point 4 in Theorem \ref{Th_Multiple_Results_General_CP} gives the third result.
\end{proof}

Next corollary easily follows from Theorem \ref{Th_R3_Brown_Resnick_Power_Variogram}.
\begin{Corr}
\label{Corr_R3_Orderminus1_BR_Powervariogram}
Let $Z$, $D$ and $C$ be as in Theorem \ref{Th_R3_Brown_Resnick_Power_Variogram} (but without assuming that $\sigma_C>0$). 
Moreover, assume that $D$ is non-decreasing and non-constant. 
Then:
\Ben
\item $\mathcal{R}_2(\cdot, C)$ satisfies the axiom of asymptotic spatial homogeneity of order $-2$ with $K_1(A, C)=0$ and $K_2(A, C)=\sigma_{C}^2/\nu(A), A \in \mathcal{A}_c$.
\item For all $\alpha \in (0,1) \backslash \{ 1/2 \}$, $\mathcal{R}_{3,\alpha}(\cdot, C)$ satisfies the axiom of asymptotic spatial homogeneity of order $-1$ with $K_1(A,C)=\mathbb{E}[C(\Mb{0})]$ and $K_2(A,C)=\sigma_{C} q_{\alpha}/[\nu(A)]^{\frac{1}{2}}$, $A \in \mathcal{A}_c$.
\item For all $\alpha \in (0,1)$, $\mathcal{R}_{4,\alpha}(\cdot, C)$ satisfies the axiom of asymptotic spatial homogeneity of order $-1$ with $K_1(A,C)=\mathbb{E}[C(\Mb{0})]$ and $K_2(A,C)=\sigma_{C} \phi(q_{\alpha})/ \{ [\nu(A)]^{\frac{1}{2}}(1-\alpha) \}$, $A \in \mathcal{A}_c$.
\Een
\end{Corr}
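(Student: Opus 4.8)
The plan is to exploit the fact that the three conclusions of this corollary are verbatim those of Theorem \ref{Th_R3_Brown_Resnick_Power_Variogram}, whose hypotheses coincide with the present ones except that Theorem \ref{Th_R3_Brown_Resnick_Power_Variogram} additionally postulates $\sigma_C>0$. Consequently, the whole task reduces to showing that the strengthened assumption on $D$ (non-decreasing and non-constant) forces $\sigma_C>0$; once this is established, a direct appeal to Theorem \ref{Th_R3_Brown_Resnick_Power_Variogram} delivers all three points simultaneously. This is precisely the mechanism already used to pass from Theorem \ref{Th_R3_Max_Stable} to Corollary \ref{Corr_Sigma2_Positive}, so the argument should be very short.

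First I would record that $Z$ meets the structural requirements needed downstream. In the proof of Theorem \ref{Th_R3_Brown_Resnick_Power_Variogram} it is established that both the Brown--Resnick field associated with the variogram $\gamma_W(\Mb{x})=m\|\Mb{x}\|^{\psi}$ and the Smith field are simple, stationary and sample-continuous max-stable fields, and that $C=D(Z)\in\mathcal{C}$. Together with the moment condition \eqref{Eq_Assumption_F} inherited from ``$D$ as in Theorem \ref{Th_R3_Max_Stable}'', this places us squarely in the setting to which Proposition 1 in \cite{koch2017TCL} applies.

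Next I would invoke Proposition 1 in \cite{koch2017TCL}, which asserts exactly that a non-decreasing and non-constant $D$ applied to such a max-stable field yields $\sigma_C>0$. Conceptually the reason is that a simple max-stable field is positively associated, a property preserved under the monotone transformation $D$, so that $\mathrm{Cov}(C(\Mb{0}),C(\Mb{x}))\geq 0$ for every $\Mb{x}\in\mathbb{R}^2$; since $D$ is non-constant and $Z$ is non-degenerate, $\mathrm{Var}(C(\Mb{0}))>0$, and continuity of the covariance near the origin makes the integral defining $\sigma_C^2$ strictly positive. Having secured $\sigma_C>0$, I would conclude by applying Theorem \ref{Th_R3_Brown_Resnick_Power_Variogram}, which returns Points 1, 2 and 3 verbatim.

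The only point requiring care---and the mild obstacle---is confirming that the hypotheses of Proposition 1 in \cite{koch2017TCL} are genuinely met by the specific power-variogram Brown--Resnick and Smith fields, rather than merely by the abstract fields of Theorem \ref{Th_R3_Max_Stable}. Since these two fields are themselves simple, stationary and sample-continuous max-stable fields, this verification is immediate and identical to the one underlying Corollary \ref{Corr_Sigma2_Positive}; no genuinely new difficulty arises.
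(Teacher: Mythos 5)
Your proposal is correct and follows exactly the paper's own argument: verify that the power-variogram Brown--Resnick and Smith fields are simple, stationary and sample-continuous max-stable fields, invoke Proposition 1 in \cite{koch2017TCL} to get $\sigma_C>0$ from the non-decreasing, non-constant $D$, and then apply Theorem \ref{Th_R3_Brown_Resnick_Power_Variogram}. No discrepancy with the paper's proof.
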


\begin{proof}
As explained in the proof of Theorem \ref{Th_R3_Brown_Resnick_Power_Variogram}, both such Brown--Resnick fields and the Smith field are stationary and sample-continuous. Furthermore, they are simple max-stable. Thus, Proposition 1 in \cite{koch2017TCL} gives that $\sigma_{C}>0$. Hence, Theorem \ref{Th_R3_Brown_Resnick_Power_Variogram} yields the result.
\end{proof}

Let $\| . \|$ denote the Euclidean distance in $\mathbb{R}^2$. We introduce $\mathcal{B}_1= \left \{ \Mb{x} \in \mathbb{R}^2: \| \Mb{x} \|=1 \right \}$, the unit ball of $\mathbb{R}^2$. For two functions $g_1$ and $g_2$ from $\mathbb{R}^2$ to $\mathbb{R}$, the notation $g_1(\Mb{h}) \underset{\| \Mb{h} \| \to \infty}{=} o(g_2(\Mb{h}))$ means that $\lim_{h \to \infty} \sup_{\Mb{u} \in \mathcal{B}_1} \left \{ | g_1(h \Mb{u})/g_2(h \Mb{u})| \right \}=0$. Moreover, $\lim_{\| \Mb{h} \| \to \infty} g_1(\Mb{h})=\infty$ must be understood as $\lim_{h \to \infty} \inf_{\Mb{u} \in \mathcal{B}_1} \left \{ g_1(h \Mb{u}) \right \}=\infty$.
\begin{Th}
\label{Th_R3_Brown_Resnick_General_Variogram}
Let $\{ Z(\Mb{x}) \}_{\Mb{x} \in \mathbb{R}^2}$ be the Brown--Resnick random field built with a random field $\{ W(\Mb{x}) \}_{\Mb{x} \in \mathbb{R}^2}$ which is sample-continuous and whose variogram satisfies
$$\sup_{\Mb{x} \in [0,1]^2} \{ \gamma_W(\Mb{h})-\gamma_W(\Mb{x}+\Mb{h}) \}\underset{\| \Mb{h} \| \to \infty}{=}o(\gamma_W(\Mb{h})),$$
and 
$$ \lim_{\| \Mb{h} \| \to \infty} \frac{\gamma_W(\Mb{h})}{\ln( \| \Mb{h} \|)} =\infty.$$
Moreover, let $D$ be as in Theorem \ref{Th_R3_Max_Stable}. Let $\{ C(\Mb{x}) \}_{\Mb{x} \in \mathbb{R}^2}= \{ D(Z(\Mb{x})) \}_{\Mb{x} \in \mathbb{R}^2}$. Then, if $\sigma_{C}>0$:
\Ben
\item $\mathcal{R}_2(\cdot, C)$ satisfies the axiom of asymptotic spatial homogeneity of order $-2$ with $K_1(A, C)=0$ and $K_2(A, C)=\sigma_{C}^2/\nu(A), A \in \mathcal{A}_c$.
\item For all $\alpha \in (0,1) \backslash \{ 1/2 \}$, $\mathcal{R}_{3,\alpha}(\cdot, C)$ satisfies the axiom of asymptotic spatial homogeneity of order $-1$ with $K_1(A,C)=\mathbb{E}[C(\Mb{0})]$ and $K_2(A,C)=\sigma_{C} q_{\alpha}/[\nu(A)]^{\frac{1}{2}}$, $A \in \mathcal{A}_c$.
\item For all $\alpha \in (0,1)$, $\mathcal{R}_{4,\alpha}(\cdot, C)$ satisfies the axiom of asymptotic spatial homogeneity of order $-1$ with $K_1(A,C)=\mathbb{E}[C(\Mb{0})]$ and $K_2(A,C)=\sigma_{C} \phi(q_{\alpha})/ \{ [\nu(A)]^{\frac{1}{2}}(1-\alpha) \}$, $A \in \mathcal{A}_c$.
\Een
\end{Th}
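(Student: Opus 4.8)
My plan is to follow exactly the template established in the proofs of Theorems \ref{Th_R3_Max_Stable} and \ref{Th_R3_Brown_Resnick_Power_Variogram}: the only genuinely new ingredient is the source of the central limit theorem, which must now come from a CLT result in \cite{koch2017TCL} tailored to Brown--Resnick fields with a \emph{general} variogram satisfying the two stated growth conditions. Everything downstream of the CLT is identical to the previous proofs.

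First I would verify that $C \in \mathcal{C}$ and that $C$ is stationary. By the footnote to Definition \ref{Def_Brown_Resnick_Process}, the Brown--Resnick field $Z$ built with a sample-continuous $W$ is itself sample-continuous, and hence measurable; since $D$ is measurable (as in Theorem \ref{Th_R3_Max_Stable}), $C=D(Z)$ is measurable. The Brown--Resnick field is stationary \citep[][Theorem 2]{kabluchko2009stationary}, so $C$ is stationary; in particular it has a constant expectation and satisfies \eqref{Eq_Dependence_Covariance_x-y}. Condition \eqref{Eq_Assumption_F} imposed on $D$ gives $\mathbb{E}[|C(\Mb{0})|]<\infty$, whence by stationarity $\Mb{x}\mapsto\mathbb{E}[|C(\Mb{x})|]$ is constant, hence locally integrable, so Proposition \ref{Eq_Sufficient_Condition_Locally_Integrable_Sample_Paths} yields a.s.\ locally integrable sample paths and therefore $C\in\mathcal{C}$.

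The crux is establishing that $C$ satisfies the CLT, and this is where I expect the real work to lie. Rather than verifying the spectral-field mixing bound of Theorem \ref{Th_R3_Max_Stable} directly, I would invoke the counterpart result in \cite{koch2017TCL} that handles Brown--Resnick fields under precisely the two hypotheses on $\gamma_W$ assumed here: the uniform regularity condition $\sup_{\Mb{x}\in[0,1]^2}\{\gamma_W(\Mb{h})-\gamma_W(\Mb{x}+\Mb{h})\}=o(\gamma_W(\Mb{h}))$ as $\|\Mb{h}\|\to\infty$, and the growth condition $\gamma_W(\Mb{h})/\ln(\|\Mb{h}\|)\to\infty$. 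The main obstacle is that these conditions are stated in terms of the variogram of $W$, whereas the abstract CLT machinery (and Theorem 2 of \cite{koch2017TCL}) is phrased via a decay bound on $\mathbb{E}[\min\{\sup_{[0,1]^2}Y,\ \sup_{[\Mb{h},\Mb{h}+1]}Y\}]$ for a spectral field $Y$. The work consists in translating the variogram conditions into this mixing-type bound: the logarithmic growth condition forces the bivariate extremal coefficient $\theta(\Mb{0},\Mb{x})$ to approach its independence value $2$ fast enough for summable dependence, while the $o(\gamma_W)$ condition supplies the uniformity needed to control the supremum over unit cubes. Since this translation is exactly the content of the corresponding theorem in \cite{koch2017TCL}, I would cite that result to conclude that $C$ satisfies the CLT.

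Finally, with the CLT in hand and $C$ stationary, the three assertions follow verbatim as in the earlier proofs. Corollary \ref{Corr_Implication_CLT_ASH_R2} (which only needs \eqref{Eq_Dependence_Covariance_x-y} and the CLT) gives Point 1 under the assumption $\sigma_C>0$. Theorem \ref{Th_Multiple_Results_General_CP}, Point 3, applied to the constant-expectation field $C$ satisfying the CLT, gives Point 2 for $\alpha\in(0,1)\setminus\{1/2\}$. For Point 3, Proposition \ref{Prop_Condition_Uniform_Integrability} supplies the required uniform integrability of the $\lambda(L_N(\lambda A,C)-\mathbb{E}[C(\Mb{0})])$, and then Theorem \ref{Th_Multiple_Results_General_CP}, Point 4, yields the order $-1$ expansion for expected shortfall. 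This completes the proof.
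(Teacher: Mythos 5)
Your proposal is correct and follows essentially the same route as the paper: stationarity and sample-continuity of $Z$, membership of $C$ in $\mathcal{C}$, the CLT from the variogram-based result in \cite{koch2017TCL} (the paper cites its Remark 3), and then Corollary \ref{Corr_Implication_CLT_ASH_R2}, Theorem \ref{Th_Multiple_Results_General_CP} Point 3, and Proposition \ref{Prop_Condition_Uniform_Integrability} with Point 4. The only cosmetic difference is that the paper derives sample-continuity of $Z$ from Proposition 13 in \cite{kabluchko2009stationary} rather than from the footnote to Definition \ref{Def_Brown_Resnick_Process}, which by itself only concerns the replications of $W$.
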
 

\begin{proof}
The same arguments as in the proof of Theorem \ref{Th_R3_Brown_Resnick_Power_Variogram} show that $C$ satisfies \eqref{Eq_Dependence_Covariance_x-y} and has a constant expectation. As $W$ is sample-continuous, Proposition 13 in \cite{kabluchko2009stationary} gives that $Z$ is sample-continuous. Thus, the same arguments as in the proof of Theorem \ref{Th_R3_Max_Stable} show that $C \in \mathcal{C}$. Moreover, Remark 3 in \cite{koch2017TCL} gives that $C$ satisfies the CLT. Hence, Corollary \ref{Corr_Implication_CLT_ASH_R2} gives the first result. The second result follows from Theorem \ref{Th_Multiple_Results_General_CP}, Point 3. The combination of Proposition \ref{Prop_Condition_Uniform_Integrability} and Point 4 in Theorem \ref{Th_Multiple_Results_General_CP} yields the third result.
\end{proof}

The following result is a direct consequence of Theorem \ref{Th_R3_Brown_Resnick_General_Variogram}.
\begin{Corr}
\label{Corr_R3_Orderminus1_BR_Generalvariogram}
Let $Z$, $D$ and $C$ be as in Theorem \ref{Th_R3_Brown_Resnick_General_Variogram} (but without assuming that $\sigma_C>0$). Moreover, assume that $D$ is non-decreasing and non-constant. 
Then:
\Ben
\item $\mathcal{R}_2(\cdot, C)$ satisfies the axiom of asymptotic spatial homogeneity of order $-2$ with $K_1(A, C)=0$ and $K_2(A, C)=\sigma_{C}^2/\nu(A), A \in \mathcal{A}_c$.
\item For all $\alpha \in (0,1) \backslash \{ 1/2 \}$, $\mathcal{R}_{3,\alpha}(\cdot, C)$ satisfies the axiom of asymptotic spatial homogeneity of order $-1$ with $K_1(A,C)=\mathbb{E}[C(\Mb{0})]$ and $K_2(A,C)=\sigma_{C} q_{\alpha}/[\nu(A)]^{\frac{1}{2}}$, $A \in \mathcal{A}_c$.
\item For all $\alpha \in (0,1)$, $\mathcal{R}_{4,\alpha}(\cdot, C)$ satisfies the axiom of asymptotic spatial homogeneity of order $-1$ with $K_1(A,C)=\mathbb{E}[C(\Mb{0})]$ and $K_2(A,C)=\sigma_{C} \phi(q_{\alpha})/ \{ [\nu(A)]^{\frac{1}{2}}(1-\alpha) \}$, $A \in \mathcal{A}_c$.
\Een
\end{Corr}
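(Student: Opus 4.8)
The plan is to observe that Corollary \ref{Corr_R3_Orderminus1_BR_Generalvariogram} differs from Theorem \ref{Th_R3_Brown_Resnick_General_Variogram} in exactly one respect: the hypothesis $\sigma_C > 0$ is dropped and replaced by the requirement that $D$ be non-decreasing and non-constant. Consequently the whole task reduces to establishing that, under these modified assumptions, the strict positivity $\sigma_C > 0$ holds. Once that single fact is secured, the three asymptotic spatial homogeneity conclusions for $\mathcal{R}_2(\cdot, C)$, $\mathcal{R}_{3,\alpha}(\cdot, C)$ and $\mathcal{R}_{4,\alpha}(\cdot, C)$, together with their explicit constants $K_1$ and $K_2$, follow verbatim from Theorem \ref{Th_R3_Brown_Resnick_General_Variogram}.

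To obtain $\sigma_C > 0$, I would first recall, exactly as in the proof of Theorem \ref{Th_R3_Brown_Resnick_General_Variogram}, that the Brown--Resnick field $Z$ built with a sample-continuous field $W$ possesses all the structural properties needed to invoke Proposition 1 in \cite{koch2017TCL}. Specifically, $Z$ is simple max-stable by its very construction via \eqref{Eq_Spectral_Representation_Stochastic_Processes}, it is stationary by Theorem 2 in \cite{kabluchko2009stationary}, and it is sample-continuous by Proposition 13 in \cite{kabluchko2009stationary}, the latter applying because $W$ is assumed sample-continuous. These three properties are precisely the hypotheses on the underlying max-stable field required by Proposition 1 in \cite{koch2017TCL}.

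With $Z$ simple, stationary and sample-continuous, and with $D$ non-decreasing and non-constant, Proposition 1 in \cite{koch2017TCL} directly yields $\sigma_C > 0$. At this point all hypotheses of Theorem \ref{Th_R3_Brown_Resnick_General_Variogram} are met, and its Points 1, 2 and 3 deliver the three desired statements. This mirrors precisely the argument used for Corollary \ref{Corr_R3_Orderminus1_BR_Powervariogram}, the only change being that $Z$ is now a Brown--Resnick field associated with a general admissible variogram rather than a power variogram.

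The only point deserving a moment's attention, which is not really an obstacle given the machinery already assembled, is confirming that Proposition 1 in \cite{koch2017TCL} is applicable in this general-variogram setting and not merely in the power-variogram case treated earlier. Since that proposition requires only simplicity, stationarity and sample-continuity of the max-stable field, together with the monotonicity and non-degeneracy of $D$, and imposes no restriction on the particular form of the variogram, it applies equally here; hence no additional argument beyond the reduction described above is needed.
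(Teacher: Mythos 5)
Your proposal is correct and follows essentially the same route as the paper: reduce the corollary to verifying $\sigma_C>0$, obtain this from Proposition 1 in \cite{koch2017TCL} using that $Z$ is simple, stationary, sample-continuous (via Proposition 13 in \cite{kabluchko2009stationary}, since $W$ is sample-continuous) and max-stable together with $D$ non-decreasing and non-constant, and then invoke Theorem \ref{Th_R3_Brown_Resnick_General_Variogram}. No gaps.
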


\begin{proof}
The random field $Z$ is simple, stationary, sample-continuous (see the proof of Theorem \ref{Th_R3_Brown_Resnick_General_Variogram}) and max-stable. Thus, Proposition 1 in \cite{koch2017TCL} gives that $\sigma_{C}>0$. Consequently, Theorem \ref{Th_R3_Brown_Resnick_General_Variogram} yields the result.
\end{proof}

We conclude this section by commenting on the damage function $D(z) = \mathbb{I}_{ \{ z>u \} }, z>0$, for $u>0$, which is considered in \cite{koch2017spatial}. This function is measurable from $((0, \infty), \mathcal{B}((0,\infty)))$ to $(\mathbb{R},\mathcal{B}(\mathbb{R}))$. Moreover, it is bounded and hence obviously satisfies \eqref{Eq_Assumption_F} for every random field $Z$. Additionally, this function is non-decreasing and non-constant. Consequently, the results of Theorem 3, Point 3 and Theorem 5, Point 2 in \cite{koch2017spatial} concerning the Brown--Resnick random field associated with the variogram $\gamma_W(\Mb{x})=m \| \Mb{x} \|^{\psi}$, where $m >0$ and $\psi \in (0,2]$, and the Smith random field, are particular cases of Corollary \ref{Corr_R3_Orderminus1_BR_Powervariogram}.

\section{Conclusion}
\label{Sec_Conclusion}

In this paper, we first explore the notions of spatial risk measure and corresponding axioms introduced in \cite{koch2017spatial} further as well as describe their utility for both actuarial science and practice. Second, in the case of a general cost field, we provide sufficient conditions such that spatial risk measures associated with expectation, variance, VaR as well as ES and induced by this cost field satisfy the axiom of asymptotic spatial homogeneity of order $0$, $-2$, $-1$ and $-1$, respectively. Finally, in the case where the cost field is a function of a max-stable random field, we give sufficient conditions on both the function and the max-stable field such that spatial risk measures associated with expectation, variance, VaR as well as ES and induced by the resulting cost field satisfy the axiom of asymptotic spatial homogeneity of order $0$, $-2$, $-1$ and $-1$, respectively. Hence, these conditions allow one to know the rate of spatial diversification when the region under study becomes large, which is valuable for the banking/insurance industry. Overall, this paper improves our comprehension of the concept of spatial risk measure as well as of their properties with respect to the space variable and, among others, generalizes several results to be found in \cite{koch2017spatial}. 

Ongoing work consists in the study of concrete examples of spatial risk measures involving max-stable fields and relevant damage functions. Inter alia, we apply our theory to winter storm risk over a specific European region. As previously mentioned, max-stable fields have GEV univariate marginal distributions with three parameters. The first step involves jointly fitting the latter and the dependence parameters of different max-stable models (Smith, Brown--Resnick, \ldots) to wind speed maxima using, e.g., composite likelihood methods \citep[see, e.g.,][]{padoan2010likelihood}. Model selection has then to be performed employing, for instance, the composite likelihood information criterion. The second step consists in choosing an appropriate damage function and exposure field and leads, in combination with the first one, to the cost field model. If the appropriate sufficient conditions mentioned in Section \ref{Subsec_Costfield_Function_Maxstable} are met, then we can draw conclusions about the asymptotic rate of spatial diversification, and less importantly spatial invariance under translation. Simulating from the cost field, we obtain realizations of the normalized spatially aggregated loss on chosen sub-regions, which allow the estimation of the spatial risk measures of interest. This enables one to check, e.g., whether the axiom of spatial sub-additivity is satisfied.

Future work will include the study of spatial risk measures associated with other classical risk measures (e.g., more general distorsion risk measures than VaR or ES and expectile risk measures) and/or induced by cost fields involving other kinds of random fields than max-stable fields. For instance, it would be worthwhile to investigate whether spatial risk measures associated with VaR and ES can still satisfy the axiom of asymptotic spatial homogeneity of order $-1$ in the case where the cost field does not satisfy the CLT.

\section*{Acknowledgements}

The author would like to thank Anthony C. Davison and Christian Y. Robert for some interesting comments. He also acknowledges Paul Embrechts and Ruodu Wang for providing some references about robustness of risk measures as well as Gennady Samorodnitsky for a fruitful exchange about integrals of random fields. Finally, he is also grateful to the Associate Editor and two anonymous referees for insightful suggestions. This research was partly funded by the Swiss National Science Foundation grant number 200021\_178824.

\newpage
\bibliographystyle{apalike}
\bibliography{References_Erwan}

\end{document}